\documentclass[graybox]{svmult}
\usepackage{mathptmx}       
\usepackage{helvet}         
\usepackage{courier}        
\usepackage{type1cm}        
%
\usepackage{makeidx}         
\usepackage{graphicx}        
\usepackage{multicol}        
\usepackage[bottom]{footmisc}
\makeindex

\usepackage{amsmath,amssymb}

\usepackage[ansinew]{inputenc}


\usepackage[hyphens]{url}
\usepackage[dvips,breaklinks=true,colorlinks]{hyperref}

\usepackage[style=authoryear,
            url=false,
            maxbibnames=99,
            firstinits=true, 
            isbn=false,
            natbib=true,
            hyperref=true,
            bibencoding=utf8]{biblatex}
\setcounter{biburlnumpenalty}{100}
\setcounter{biburlucpenalty}{100}
\setcounter{biburllcpenalty}{100}
\AtEveryBibitem{\clearfield{note}}   

\bibliography{library.bib,jabref_philipp_utf2.bib}


\DeclareCiteCommand{\cite}
  {\usebibmacro{prenote}}
  {\usebibmacro{citeindex}%
   \printtext[bibhyperref]{\usebibmacro{cite}}}
  {\multicitedelim}
  {\usebibmacro{postnote}}

\DeclareCiteCommand*{\cite}
  {\usebibmacro{prenote}}
  {\usebibmacro{citeindex}%
   \printtext[bibhyperref]{\usebibmacro{citeyear}}}
  {\multicitedelim}
  {\usebibmacro{postnote}}

\DeclareCiteCommand{\parencite}[\mkbibparens]
  {\usebibmacro{prenote}}
  {\usebibmacro{citeindex}%
    \printtext[bibhyperref]{\usebibmacro{cite}}}
  {\multicitedelim}
  {\usebibmacro{postnote}}

\DeclareCiteCommand*{\parencite}[\mkbibparens]
  {\usebibmacro{prenote}}
  {\usebibmacro{citeindex}%
    \printtext[bibhyperref]{\usebibmacro{citeyear}}}
  {\multicitedelim}
  {\usebibmacro{postnote}}

\DeclareCiteCommand{\footcite}[\mkbibfootnote]
  {\usebibmacro{prenote}}
  {\usebibmacro{citeindex}%
  \printtext[bibhyperref]{ \usebibmacro{cite}}}
  {\multicitedelim}
  {\usebibmacro{postnote}}

\DeclareCiteCommand{\footcitetext}[\mkbibfootnotetext]
  {\usebibmacro{prenote}}
  {\usebibmacro{citeindex}%
   \printtext[bibhyperref]{\usebibmacro{cite}}}
  {\multicitedelim}
  {\usebibmacro{postnote}}

\DeclareCiteCommand{\textcite}
 {\boolfalse{cbx:parens}}
  {\usebibmacro{citeindex}%
    \printtext[bibhyperref]{\printnames{labelname}%
      \printtext{ (\printfield{year}\printtext{)}}}}
 {\ifbool{cbx:parens}
  {\bibcloseparen\global\boolfalse{cbx:parens}}
  {}%
 \multicitedelim}
{\usebibmacro{textcite:postnote}}
\inputencoding{utf8}

\newcommand{\C}{\mathbb{C}}

\newcommand{\supp}{\text{supp}}

\newcommand{\Complex}{\mathbb{C}}
\newcommand{\Reals}{\mathbb{R}}
\newcommand{\Span}{\text{span}}

\newcommand{\Order}{\mathcal{O}}
\newcommand{\Normal}{\mathcal{N}}
\newcommand{\rank}{\text{rank}}

\newcommand{\rk}{\text{rk}}
\newcommand{\proj}{\pi}
\newcommand{\ostar}{\circledcirc}
\newcommand{\vc}{\text{vec}}
\newcommand{\mysubsubsec}[1]{\runinhead{#1:}}
\newcommand{\Ell}[1]{\ell_{#1}}

\ifx\property\undefined
\fi

\newcommand{\extref}[1]{}







\newcommand{\alp}{\ensuremath{\alpha}}

\newcommand{\bet}{\ensuremath{\beta}}

\newcommand{\del}{\ensuremath{\delta}}

\newcommand{\tht}{\ensuremath{\theta}}

\newcommand{\ti}{\ensuremath{\tilde{i}}}
\newcommand{\tI}{\ensuremath{\tilde{I}}}
\newcommand{\tJ}{\ensuremath{\tilde{J}}}
\newcommand{\tj}{\ensuremath{\tilde{j}}}

\newcommand{\gam}{\ensuremath{\gamma}}

\newcommand{\vGam}{\ensuremath{{\boldsymbol\Gamma}}}
\newcommand{\lam}{\ensuremath{\lambda}}

\newcommand{\ome}{\ensuremath{\omega}}

\newcommand{\hx}{\ensuremath{u}}

\newcommand{\tn}{\ensuremath{\tilde{n}}}

\newcommand{\for}{\ensuremath{\quad{\text{for}\quad}}}
\newcommand{\foral}{\ensuremath{\quad{\text{for all}\quad}}}


 %
 %

\newcommand{\Omi}{{\ensuremath{\mathcal{O}}}}








\newcommand{\Ao}{{\ensuremath{\mathcal{A}^\circ}}}
\newcommand{\Aoz}{{\ensuremath{\mathcal{A}^\circ_{z}}}}

\newcommand{\Sym}{{\ensuremath{\mathcal{S}}}}
\newcommand{\Symz}{{\ensuremath{\mathcal{S}'}}}
\newcommand{\Symo}{{\ensuremath{\mathcal{S}^\circ}}}
\newcommand{\Symoz}{{\ensuremath{\mathcal{S}^\circ_{z}}}}








\newcommand{\ko}{{\ensuremath{\mathbb K}}}

\newcommand{\R}{{\ensuremath{\mathbb R}}}
\newcommand{\N}{{\ensuremath{\mathbb N}}}

\newcommand{\Z}{{\ensuremath{\mathbb Z}}}





\newcommand{\Fmatrix}{{\ensuremath{F}}}
\newcommand{\Fmatrixa}{{\ensuremath{F}^*}}





\newcommand{\ra}{\rightarrow}

\newcommand{\LRA}{\ensuremath{\Leftrightarrow} }





\newcommand{\vx}{{\ensuremath{x}}}

\newcommand{\hvx}{{\ensuremath{u}}}

\newcommand{\vy}{\ensuremath{y}}

\newcommand{\hy}{{\ensuremath{v}}}
\newcommand{\hvy}{{\ensuremath{{v}}}}

\newcommand{\va}{{\ensuremath{t}}}

\newcommand{\vby}{{\ensuremath{\tilde{y}}}}

\newcommand{\zero}{{\ensuremath{\mathbf 0}}}

\newcommand{\tx}{\ensuremath{\tilde{x}}}

\newcommand{\ty}{\ensuremath{\tilde{y}}}

\newcommand{\vtx}{\ensuremath{{\tilde{x}}}}
\newcommand{\vty}{\ensuremath{{\tilde{y}}}}

\newcommand{\vB}{{\ensuremath{B}}}

\newcommand{\vS}{\ensuremath{S}}                         
\newcommand{\vh}{\ensuremath{h }}                         


\newcommand{\thmref}[1]{Satz~\ref{#1}}

\newcommand{\secref}[1]{Abschnitt~\ref{#1}}

\newcommand{\noi}{\noindent}


\ifx\definition\undefined
\newtheorem{definition}{Definition}         
\fi
\ifx \@definition \@empty
\newtheorem{definition}[theorem]{Definition}   
\fi

\ifx\conjecture\undefined
\fi
\ifx\theorem\undefined
\newtheorem{theorem}{Theorem}         
\fi
\ifx\lemma\undefined
\newtheorem{lemma}{Lemma}
\fi
\ifx\question\undefined
\fi
\ifx\proposition\undefined
\fi
{
\comment\noi}%
{\endcomment}

{\par\noindent{\em Beweis\/}.}%
{\hspace*{\fill}{\qed}\vspace{1ex}\par}
{\par\noindent{\em Proof\/}.}%
{\par}

{\hspace*{\fill}{}\vspace{1ex}\par}
{\par\vspace{1.5ex}\noindent{\em Remark\/}.}
{\par\vspace{1.5ex}}
{\noi\vspace{0.5ex}\small}
{\vspace{0.5ex}\par\normalsize}

\newcounter{Examplecount}
\setcounter{Examplecount}{0}


%

%
{\color{gray}}
{\vspace{0.5ex}\par\normalsize}

{\renewcommand{\labelenumi}{(\roman{enumi})}\begin{list}{\labelenumi}
{\usecounter{enumi}\setlength{\labelwidth}{1.5cm}\setlength{\topsep}{0.3cm}\setlength{\itemsep}{-3pt}}}
{\end{list}}
{\renewcommand{\labelenumi}{(\arabic{enumi})}\begin{list}{\labelenumi}
{\usecounter{enumi}\setlength{\labelwidth}{1.5cm}\setlength{\topsep}{0.3cm}\setlength{\itemsep}{-3pt}}}
{\end{list}}
{\renewcommand{\labelenumi}{$\bullet$}\begin{list}{\labelenumi}
{\setlength{\labelwidth}{1.5cm}\setlength{\topsep}{0.3cm}\setlength{\itemsep}{-2pt}}}
{\end{list}}


\makeatletter
\newcommand{\set}[2]{\ensuremath{%
\setbox0=\hbox{\ensuremath{#2}}
\dimen@\ht0
\advance\dimen@ by \dp0
\left\{\left.#1\rule[-\dp0]{0pt}{\dimen@}\;\right|\;#2\right\} }}
\makeatother



\newcommand\diam{\operatorname{diam}}


\newcommand{\Betrag}[1]{\ensuremath{ \left|#1\right| }}

\newcommand{\Norm}[1]{\ensuremath{ \left\|#1\right\| }}

\newcommand{\skprod}[1]{\ensuremath{ \left\langle #1 \right\rangle }}

\newcommand{\cc}[1]{{\ensuremath{\overline{#1}}}} 





\newcommand{\namen}[1]{{\textsc{#1}}}           


\ifx \@paragraph \@empty
\makeatletter
\renewcommand\paragraph{\@startsection
{paragraph}{4}{\z@}{-3.5ex plus-1ex minus-.2ex}%
{1.3ex plus.2ex}{\normalfont\itshape}}

\fi



\renewcommand{\thmref}[1]{Theorem~\ref{#1}}
\renewcommand{\secref}[1]{Section~\ref{#1}}

\newcommand{\corref}[1]{Corollary~\ref{#1}}


\renewcommand{\cite}{\citep}
\begin{document}
\title*{Sparse Model Uncertainties in Compressed Sensing with Application to Convolutions and Sporadic Communication}
\titlerunning{Sparse Model Uncertainties in CS and Application to Sparse Convolutions}
\author{Peter Jung and Philipp Walk} \institute{Peter Jung \at Technische Universität Berlin, Straße des 17. Juni 136,
  10587 Berlin, Germany, \email{peter.jung@mk.tu-berlin.de} \and Philipp Walk \at Technische Universität München,
  Arcistrasse 21, 80333 München, Germany, \email{philipp.walk@tum.de} } \maketitle

\abstract{
  The success of the compressed sensing paradigm has shown that  a 
  substantial reduction in sampling and storage complexity can be achieved in certain linear and non--adaptive  
  estimation problems. It is
  therefore an advisable strategy for noncoherent information retrieval in, for example,
  sporadic blind and semi--blind communication and sampling problems.
  But, the conventional model is not practical
  here since the compressible signals have to be estimated from samples taken solely on
  the output of an un--calibrated system which is unknown during measurement but often compressible. 
  Conventionally, one has either to operate at suboptimal sampling rates 
  or the recovery performance substantially suffers from the dominance of model mismatch.\\
  In this work we discuss such type of estimation problems and we focus on \emph{bilinear inverse problems}.
  We link this problem to the recovery of low--rank and sparse matrices and establish stable low--dimensional 
  embeddings of the uncalibrated receive signals
  whereby addressing also efficient communication--oriented  methods like \emph{universal} random demodulation. 
  Exemplary, we investigate in more detail sparse convolutions serving as a basic communication channel model.
  In using some recent results from additive combinatorics  we show that such type of signals can be 
  efficiently low-rate 
  sampled
  by semi--blind methods. Finally, we present a further application of these results in the field of phase retrieval
  from intensity Fourier measurements.
}

\section{Introduction}\label{sec:intro}

Noncoherent compressed reception of information is a promising approach to cope with several future challenges in
sporadic communication where short compressible messages have to be communicated in an unsynchronized manner over unknown, but
compressible, dispersive channels.  To enable such new communication concepts efficiently, it is therefore necessary to
investigate blind and semi--blind sampling strategies which  explicitly account for the low--dimensional structure of
the signals.  Since the compressed sensing paradigm provides a substantial reduction in sampling and storage complexity
it is therefore also an advisable strategy for noncoherent information retrieval.  However, in this and many related
application the conventional linear estimation model is a quite strong assumption since here the compressible signals of
interest are not accessible in the usual way.  Instead they have to be estimated from sampling data taken solely on the
output of an additional linear system which is itself unknown during measurement but often compressible. Thus, in the
standard scheme one has either to operate at suboptimal rates or the overall estimation performance
substantially suffers from the dominance of model mismatch.  It is therefore important to evaluate the additional amount
of sampling which is necessary to cope in a stable way with such model uncertainties. The output signals to be sampled
do not constitute anymore a fixed finite union of low--dimensional canonical subspaces but a more complicated but still
compressible set.  In this chapter we focus on \emph{bilinear models} and we discuss conditions which ensure additive
complexity in input signals and model uncertainty.  We motivate the relevance of this topic for sporadic
communication in future cellular wireless networks and its random access strategies. In this setting the main dispersion
is caused by convolutions of $s$--sparse channel impulse  responses $x$ with $f$--sparse user signals $y$. The
convolution $x\ast y$ in dimension $n$ can be recovered by conventional compressed sensing methods from $\Order(sf\log(n))$ incoherent samples
whereby only $s+f$ ''active'' components contribute.  However, we will show that for fixed $x$ ordinary (non--circular) convolutions are
invertible in $y$ (and vice-versa) in a uniformly stable manner and can be compressed  into
$2^{2(s+f-2)\log(s+f-2)}$
dimensions \emph{independent} of $n$.  This demonstrates the possibility of low--rate sampling strategies in the order
$\Order((s+f)\log(n))$ in our setting. Although efficient recovery algorithms operating at this rate are still unknown
we show that sampling itself can be achieved efficiently with a considerable derandomized and universal approach, 
with a \emph{random demodulator}.
This proceeding contribution contains material from the joint work
of the authors, presented in two talks at the CSA13 workshop,
i.e.  \emph{''Low--Complexity Model Uncertainties in Compressed Sensing with Application to Sporadic Communication''} by
Peter Jung and \emph{''Stable Embedding of Sparse Convolutions''} by Philipp Walk.

\mysubsubsec{Outline of the Work} First, we state in Section \ref{sec:intro} 
the bilinear sampling problem and we discuss the relevance of this
topic for sporadic communication in future cellular wireless networks.  In
Section \ref{sec:stable} we will present a general framework for stable random low--dimensional embedding of the signal
manifolds beyond the standard linear vector model.  We discuss structured measurements in this context and propose a
\emph{universal random demodulator} having an efficient implementation.
At the end of this section we summarize in Theorem \ref{lemma:randomsampling:bilinear}
that additive scaling in sampling
complexity can be achieved for certain bilinear inverse problems, once a particular stability condition is fulfilled
independent of the ambient dimension. 
In Section \ref{sec:conv:rnmp} we will discuss such a condition for sparse convolutions in more detail and
we show in Theorem \ref{thm:dryi} by arguments from additive combinatorics that ambient dimension will not occur 
in this case.
Finally, we show a further
application for quadratic problems and draw the link to our work \cite{walk:symfourier13}
on complex phase retrieval from intensity measurements of symmetrized Fourier measurements and 
presenting this result in  Theorem \ref{thm:phaseretrieval}.

\newcommand{\Meas}{\Phi}
\newcommand{\Dict}{\Psi}
\newcommand{\Obs}{b} 
\newcommand{\Inx}{x} 
\newcommand{\Iny}{y} 
\newcommand{\Noi}{e} 
\newcommand{\Bz}{z} 

\subsection{Problem Statement}
The standard linear model in compressed sensing is that the noisy observations $\Obs=\Meas\Dict\Iny+\Noi$ are obtained
from a \emph{known} model under the additional assumption that $\Iny$ is essentially concentrated on a few components in
a fixed basis $\Dict$. Let us assume for the following exposition, that $\Iny\in\Sigma_f$ is an $f$--sparse vector.
$\Meas$ is the, possibly random, measurement matrix and $\Dict$ denotes the dictionary (not necessarily a basis) in which the object can be
sparsely described, both have to be known for decoding. For our purpose it is not important to understand
$\Dict$ as a property of the data. Instead, $\Dict$ can also be understood as a part of the measurement process, i.e.
viewing $\Meas\Dict$ as the overall measurement matrix.  Solving for a sparse parameter vector $\Iny$ in this case can
be done with a substantially reduced number of incoherent measurements.  However what happens if $\Dict$ (or $\Meas$) is \emph{not perfectly
known}, i.e. depends on some unknown parameters resulting in an overall uncertainty in the estimation model ?  To our
knowledge, this is one of the most sensible points for the application of compressed sensing to practical problems.

\mysubsubsec{Model Uncertainties}
Additive uncertainties in the overall measurement process have been investigated for example by
\cite{Herman:generaldeviants}. An extension of this work with explicit distinction between errors in $\Meas$ and
$\Dict$, suitable for redundant dictionaries, has been undertaken in \cite{Aldroubi2012}.  Another situation, referring
more to the multiplicative case, is the basis mismatch as has been studied for example by \cite{Chi:sensitivity}. The
strategy in the previous work was to estimate the degradation of the recovery performance in terms of the perturbation.
However, if the unknown uncertainty is itself compressible in some sense one might treat it as a further unknown
variable to be estimated from the same (blind) or prior (semi--blind, without calibrating the sampling device)  observations as well.  
For example, can one handle the case where $\Dict$ is
known to have a compressible representation $\Dict=\sum_j \Inx_j \Dict_j$ such that for example the coefficient vector
$\Inx\in\Sigma_s$ is $s$--sparse:
\begin{equation}
   \Obs=\Phi(\sum_{j} \Inx_j \Dict_j)\Iny+\Noi=:\Meas B(\Inx,\Iny)+\Noi
   \label{eq:model:uncertainty}
\end{equation}
In principle, the original goal is here to estimate the sparse signal vector $\Iny$ from $\Obs$ under the condition that
$\Inx$ is sparse as well. In this setting it would only be necessary to infer on the support of $\Inx$. On the
other hand, in many applications more precise knowledge on the model parameters $\Inx$ are desirable as well
and the task is then to recover the pair $(\Inx,\Iny)$ up to indissoluble ambiguities.

\mysubsubsec{Sampling Methods for Sporadic Communication}
Our motivation for investigating this problem are universal sampling methods, which may become relevant for sporadic
communication scenarios, in particular in wireless cellular networks.  Whereby voice telephone calls and human generated
data traffic were the main drivers for 2/3/4G networks (GSM, UMTS and LTE) this is expected to change dramatically in
the future.  Actually, 5G will bring again a completely new innovation cycle with many completely new and challenging
applications (see for example \cite{Wunder:COMMAG:5G} and references therein).  The \emph{Internet of Things} will
connect billions of smart devices for monitoring, controlling and assistance in, for example, the tele-medicine area,
smart homes and smart factory etc.  In fact, this will change the internet from a human-to-human interface towards a
more general machine-to-machine platform. However, machine-to-machine traffic is completely sporadic in nature and much
more as providing sufficient bandwidth.

A rather unexplored field here is the \emph{instantaneous} joint estimation of user activity, channel coefficients and data messages. As
indicated, e.g., in \cite{Dhillon13} such approaches are necessary for one--stage random access protocols and therefore
key enablers for machine--type communication within the vision of the ''Internet of Things''. For a brief exposition,
let us focus only on the estimation of a single channel vector $\Inx$ and data vector $\Iny$ from a single or only few observation cycles
$\Obs$. This vector $\Obs$ represents the samples taken by the receiver on elements $B(\Inx,\Iny)$ from a bilinear set under sparsity or more general compressibility constraints.  A
typical (circular) channel model for $B$ is obtained in \eqref{eq:model:uncertainty} with
unitary representations of the finite Weyl--Heisenberg group on, e.g., $\Complex^{n}$:
\begin{equation}
  (\Dict_j)_{kl}=e^{i2\pi\frac{j_1l}{n}}\delta_{k,l\ominus j_2}\quad\text{for}\quad j=(j_1,j_2)\in\{0,\dots, n-1\}^2.
  \label{eq:model:weylheisenberg}
\end{equation}
These $n^2$ unitary operators fulfill the Weyl commutation rule and cyclically ($\ominus$ denotes subtraction modulo
$n$) shift the data signal $y$ by $j_2$ and its discrete Fourier transform by $j_1$.  Even more they form an operator
(orthonormal) basis with respect to the Hilbert--Schmidt inner product, i.e., \emph{every} channel (linear mapping on $y$) can
be represented as $\sum_{j} x_j\Dict_j$ (spreading representation of a ''discrete pseudo--differential operator'').
Since $B$ is dispersive in the standard and the Fourier basis such channels are called doubly--dispersive and in most of the
applications here the spreading function $x$ is sparse (or compressible).  Furthermore, at moderate mobility between
transmitter and receiver $x$ is essentially supported only on $j\in\{0\}\times\{0,\dots, n-1\}$, hence, the dominating
single--dispersive effect is the \emph{sparse circular convolution}, see \secref{sec:conv:rnmp}.

For a \emph{universal} dimensioning of, for example, a future  random access channel architecture, 
where ''universal'' means that sampling strategies $\Meas$ should be independent of the 
particular low--dimensional structure of $\Inx$ and $\Iny$, it is important to know how many samples
$m$ have to be taken in an efficient manner for stable distinguishing:
\begin{itemize}
\item[(i)] \ \ $B(\Inx,\Iny)$ from $B(\Inx,\Iny')$ and from $B(\Inx',\Iny)$ by 
   universal measurements not depending on the low--dimensional structure and
   on $x$ or $y$ (\emph{semi--blind methods})\\
   \label{item:task:1}
\item[(ii)] \ \ completely different elements $B(\Inx,\Iny)$ and $B(\Inx',\Iny')$
   by universal measurements not depending on the low--dimensional structure (\emph{blind methods})
   \label{item:task:2}
\end{itemize}
In the view of the expected system parameters like $(n,s,f)$ 
there will be substantial difference in whether a \emph{multiplicative} $m=\Order(sf\log n)$ or \emph{additive} 
$m=\Order((s+f)\log n)$ scaling can be achieved. Even more, we argue that achieving additive scaling 
without further compactness assumptions is closely related to the question whether $(\Inx,\Iny)$ can be deconvolved 
up to indissoluble ambiguities at all from $B(\Inx,\Iny)$ which is in some cases also known as blind deconvolution.
That this is indeed possible in a suitable random setting has already been demonstrated for the non--sparse and sufficiently 
oversampled case in \cite{Ahmed:2012}.

\subsection{Bilinear Inverse Problems with Sparsity Priors}
Here we consider the model \eqref{eq:model:uncertainty} from a compressive 
viewpoint which means that, due the low
complexity of signal sets, the measurement matrix $\Meas\in\Complex^{m\times n}$ corresponds to undersampling
$m\ll n$.  We use the well--known approach of lifting the bilinear map
$B:\Complex^{n_1}\times\Complex^{n_2}\rightarrow\Complex^{n}$ to a linear map $B:\C^{n_1\times
n_2}\rightarrow\Complex^{n}$.  Hereby, we can understand $x\otimes y=xy^T=x\bar{y}^*$ as a complex rank--one $n_1\times
n_2$--matrix or as a $n_1\cdot n_2$--dimensional complex vector $\vc(x\otimes y)$. As long as there arise no
confusion we will always use the same symbol $B$, i.e.,  the structured signal $\Bz$ to be sampled in a compressive
manner can be written in several ways:
\begin{equation}
   \Bz=B(\Inx,\Iny)=B(\Inx\otimes \Iny)=B(\vc(\Inx\otimes \Iny)).
\end{equation}
A step--by--step approach would be (i) estimating $\Bz$ from $m$ noisy observations $\Obs=\Meas\Bz+e$ and (ii)
deconvolving $(\lambda\Inx,\Iny/\lambda)$ from that estimate up to a scaling $\lambda\neq 0$ 
(due to the bilinearity) and, depending on $B$, further ambiguities.
The second step requires injectivity of $B$ on the desired subset in $\Complex^{n_1 \cdot n_2}$  and full inversion
requires obviously $n_1 \cdot n_2\leq n$.  Both
steps usually fall into the category of inverse problems and here we will consider the case with sparsity priors on
$\Inx$ and $\Iny$.  For $\Inx\in\Sigma_s$ and $\Iny\in\Sigma_f$ the vector $\vc(\Inx\otimes \Iny)$ is $sf$--sparse
in $n_1 \cdot n_2$ dimensions,
i.e.  $\Bz=B(\Inx,\Iny)$ is the image of the $sf$--sparse vector $\vc(\Inx\otimes \Iny)\in\Sigma_{sf}$
under $B$.  For
recovering $\Bz$ (step (i)) via convex methods one could use the framework in \cite{Candes:DRIP}:
\begin{equation}
   \min \lVert B^*\Bz\rVert_{1}\quad\text{s.t.}\quad\lVert \Meas\Bz-\Obs\rVert_{2}\leq\epsilon.
   \label{eq:bpdn:analysissparsity}
\end{equation}
This (analysis--sparsity) approach recovers $\Bz$ successfully (or within the usually desired error scaling) if $\Phi$ acts
almost--isometrically on $B(\Sigma_{sf})$ (called D--RIP in \cite{Candes:DRIP}).  For example, if $\Meas$ obeys a
certain concentration bound (like i.i.d. Gaussian matrices), $m=\Order(sf\log(n_1\cdot n_2/(sf)))$ and $B$ is a
partial isometry ($BB^*$ is a scaled identity, i.e. the columns of $B$ form a tight frame)
\eqref{eq:bpdn:analysissparsity} succeeds with exponential probability.  Once $B$ would be injective on $\Sigma_{sf}$ it
is in principal possible to extract $\vc(\Inx\otimes\Iny)$ from $\Bz$ (step (ii)). In this case one could also consider
directly the (synthesis--sparsity) approach for recovery, i.e., minimizing for example
$\Ell{1}$--norm over the vectors $u \in\Complex^{n_1 \cdot n_2}$:
\begin{equation}
   \min \lVert u\rVert_{1}\quad\text{s.t.}\quad\lVert \Meas B(u)-\Obs\rVert_{2}\leq\epsilon.
   \label{eq:bpdn:synthesissparsity}
\end{equation}
This one--step approach in turn depends on the precise mapping properties of $B$ (in particular its anisotropy) 
and a detailed characterization could be done in terms of the RE--condition in \cite{Bickel2009}.
For  $B$ being unitary \eqref{eq:bpdn:synthesissparsity} agrees with \eqref{eq:bpdn:analysissparsity}. Another 
approach in estimating the RIP--properties of the composed matrix $\Meas B$ for random measurements $\Meas$ having
the concentration property was given in \cite{Rauhut2008:anisotropic} yielding successful recovery
in the regime $m=\Order(sf\log(n_1\cdot n_2/(sf)))$.

\newcommand{\slmat}{M} 
But the set $\Sigma_{sf}$ is considerable larger than $\vc(\slmat_{s,f})$ where $\slmat_{s,f}$ denotes the
rank--one tensor products, i.e.:
\begin{equation}
   \slmat_{s,f}:=\{\Inx\otimes\Iny\,:\,\Inx\in\Sigma_s\,\text{and}\,\Iny\in\Sigma_f)
   \label{eq:slmat}
\end{equation}
are the sparse rank--one $n_1\times n_2$ matrices 
with $s$ non--zero rows and $f$ non--zero columns. All the previous consideration
make only use of the vector--properties of $\Inx\otimes\Iny$ and, hence, result in a multiplicative $sf$--scaling.
Although the approach \cite{Gleichman2011} termed \emph{blind compressed sensing} gives structural insights 
into the rank--sparsity relation it also results in the $sf$--regime.  Since the non--zero entries in $\vc(\slmat_{s,f})$ occur in $s$ equal--sized
blocks, each having at most $f$ non--zero values, one might extend the vector--concept to block--sparse vectors.  However, to
fully exploit the correlation properties in the non--zero coefficients one has to consider the original estimation
problem as a low--rank matrix recovery problem with sparsity constraints as already investigated in \cite{Choudhary2014}
in the view of noiseless identifiability.
Unfortunately, already without sparsity this setting does not fall directly into the usual isotropic low--rank matrix recovery setting since 
the matrix picture is somehow ''hidden behind'' $B$ resembling the anisotropy
of $\Phi B$.  Whereby the anisotropic vector case has been extensively investigated in vector--RIP context by
\cite{Bickel2009,rudelson:anisotrop} or in the RIP'less context by \cite{Kueng12} (noiseless non--uniform recovery) very
little is known here in the matrix case.

We already mentioned that restricting the problem \eqref{eq:model:uncertainty} solely to the diagonal $B(\Inx,\Inx)$, which is 
a quadratic inverse problem, resembles closely the phase--retrieval problem. 
We will make this precise for the non--compressive case in  Section \ref{sec:conv:rnmp}. Unfortunately this does
not extends to the compressive case. Finally, we mention that our framework applies also to a certain extent to the multi--linear setting with minor
modification, i.e. for higher order tensors. Such constructions occur not only in image and video processing
but also in the communication context. 
For example, a separable spreading is characterized by 
$\Inx_{(j_1,j_2)}=\Inx^\text{\tiny (1)}_{j_1}\cdot \Inx^\text{\tiny (2)}_{j_2}$ with \eqref{eq:model:weylheisenberg}
and is widely used as a simplified channel model yielding a $3$th order inverse problem
in \eqref{eq:model:uncertainty}.

\if0
\subsection{Applications in Sporadic Communication}
Voice telephone calls were the main drivers for the development of 2G and 3G (GSM and UMTS) 
cellular mobile networks. Human generated data traffic was in the main focus for 
the 4G networks (LTE) whereby already funding the today's success of smartphones and handhelds. 
Actually, 5G will bring again a completely new innovation cycle 
with many completely new and challenging applications (see for example \cite{Wunder:COMMAG:5G} and references therein).
The \emph{Internet of Things} will connect billions of smart devices for monitoring, controlling and
assistance in, for example, the tele-medicine area, smart homes and smart factory etc.
In fact, this will change the internet from a human-to-human interface towards to a more
general machine-to-machine platform. However machine-to-machine traffic is completely sporadic in nature
and much more as providing sufficient bandwidth.

\mysubsubsec{The Random Access Channel}
Since LTE doesn't have yet an extra physical channel for such short and sporadic messages an
inefficient multiple--stage communication strategy would be necessary to support such type
of communication. In a first step (i) the devices indicate their presence in a (shared) random 
access channel (RACH) in a coarse--synchronized fashion (ii) the base entity performs 
a fine synchronization and pairing strategy to schedule each device
onto synchronized uplink resources depending on several network and user--specific parameters; (iii) 
the devices transmit a pilot preamble followed by one or multiple frames containing the data payload; 
(iv) the receiver estimates the channel parameters from the 
preamble and with this knowledge it will decode the desired message.

It is obvious that this approach will become highly inefficient for short and sporadic data messages.
Already within the usual smartphone traffic one has encountered such sporadic traffic type as problematic for the 
whole network. In the current 3G/4G network providers try to avoid  
this expensive scheduling mechanism by introducing special sleep modes (see here ''fast dormancy'') for the devices. However, a 
principal solution will be necessary here in future.
In \cite{Dhillon13} it has been shown that for payload of around $200$bits
twice the number of devices can be supported by one--stage strategies.
However, on the theoretical side, it is then necessary to understands the underlying principles of noncoherent compressive 
reception of sporadic signals. Beside the information theoretic treatment this comes already with new challenges
in sampling and signal processing concepts for data acquisition.

\newcommand{\aset}{A} 
\newcommand{\pset}{P} 
conventional random access channel principle and the desired extensions for short-message support.  Consider a setup
where a certain number of mobile devices (indexed by $i$) willing to indicate their presence and, simultaneously,
transmit complex $n_2$--dimensional data signals $d_i\in\Complex^{n_2}$ to a single receiver (the base station).  A
typical data message will consist here of a few hundred bits. With a low modulation alphabet, suitable coding and some
frame segmentation one might expect here sequence lengths in the order of $\approx 512\dots 1024$.  These sequences are
mapped into a small frequency band using a $n_2$--FFT yielding the data signal $d_i\in\Complex^{n_2}$. Usually this
time--domain signal is further cyclically extended but we neglect this in the following discussion.  An LTE--conform
dimension would be here for example $n_2=24768$.  The remaining frequencies are reserved for conventional synchronized
uplink transmission.  Indication of presence is done by superimposing a preamble/pilot signal $p_i\in\Complex^{n_2}$ on
the data signal $d_i$, i.e. each device chooses therefore an index $i\in\pset$ in a certain random manner from a pool
$\pset$ whereby the whole preamble set $\{p_i\}_{i\in\pset}$ is globally known to all parties. We denote the active set
of indices with $\aset\subset\pset$. For further illustration, let us assume that there is no collision, i.e. each index
$i$ has been selected only once.  Each signal $d_i+p_i\in\Complex^{n_2}$ for $i\in\aset$ is transmitted through its
individual channel with parameters $h_i\in\Complex^{n_1}$.  A channel model is represented hereby with a bilinear
mapping $B:\Complex^{n_1}\times\Complex^{n_2}\rightarrow\Complex^n$ describing the precise bilinear interaction between
signals and channel parameters.  Below we will give a relevant channel characterization.

The receiver takes samples on the received sum signal, i.e. has noisy observations on the superposition:
\begin{equation}
   \Obs=\Meas\left(\sum_{i\in\aset} B(h_i,d_i+p_i)\right)+e
   \label{eq:rach}
\end{equation}
where $\Meas\in\Complex^{m\times n}$ represent the sampling operation.  Although course synchronization is achieved via
a broadcast strategy, a significant part of the channel are the remaining timing offsets.  When representing the channel
as a convolution $B(h,x)=h\ast x$ one can expect together with the overall channel delay spread a dimension $n_1\approx
300\dots 512$.  However, the channel causes only a small number delays, i.e., in this case $h$ is $s$--sparse with
$s\leq 6\dots 9$.  In many cases the data signals $s_i$ are caused, for example, by sensor readings or parameter updates
which offer intrinsic compressibility as well. In such a case one might, depending on the application, e.g. assume that
also $d_i$ is $f$--sparse.

In a first attempt, all channel parameters $\{h_i\}_{i\in\aset}$ are unknown to all devices during transmission
and the receiver neither knows $\aset$ nor $\{(h_i,d_i)\}_{i\in\aset}$. 
Conventional RACH architectures are used to detect only the user activity in the asynchronous setting 
but without communicating a data payload, i.e. all $d_i$ are zero and an estimate $\hat\aset$ for
the active user set $\aset$ has to be obtained from $\Obs$. With each identified sequence $p_i$ for $i\in\hat\aset$
a rough channel estimate $h_i$ (usually the timing offset) is obtained from $\Obs$ as well. 
The detection performance depends crucially on several correlation properties of the preambles (details and method
can be found for example in \cite{}). Since $|\aset|\ll|\pset|$ the user activity pattern $\aset$ is sparse
in $\pset$ and compressed sensing methods can be used to estimate the block--sparse compound sequence 
$\{(h_i)_j\}_{i\in\pset,j=1\dots n_1}$ \cite{Zhu:sparseMU2011, Schepker:iswcs13}. This application
for $0/1$--valued channels, i.e., for so called on/off-RACH has been investigated also 
in \cite{Fletcher09} and \cite{Applebaum2012}.
Contrary, if the channel coefficients would be already known a similar strategy can be used to estimate the data
\cite{Bockelmann:ETT2013}.

However, a rather unexplored field is the joint estimation of activity, channel and data. As
already indicated, e.g., in \cite{Dhillon13} such approaches are necessary for one--stage random
access protocols and therefore key enablers for machine--type communication within
the vision of the ''Internet of Things''. In this case each contribution in \eqref{eq:rach} is 
an element $B(h,x)$ from a bilinear set under sparsity or more general compressibility constraints.
For a uniform dimensioning of a future  RACH architecture, where ''uniform'' means that
sampling strategies $\Meas$ have be independent of $h$ and $x$, it is important to know how many samples
$m$ have to be taken for:
\begin{itemize}
\item stable distinguishing $B(h,x)$ from $B(h,x')$ and from $B(h',x)$
\item stable distinguish completely different elements $B(h,x)$ and $B(h',x')$
\end{itemize}
In the view of the expected system parameters like dimensions $(n,n_1,n_2)$ and channel and data sparsity $(s,f)$ 
there will be substantial difference in whether a multiplicative $m=\Order(sf\log n)$ or additive 
$m=\Order((s+f)\log n)$ scaling can be achieved. Even more, from our results (presented in the following sections) 
we argue that achieving additive scaling is closely related to the question whether $(x,h)$ can be obtained
up to trivial ambiguities from $B(x,h)$ which is in some cases also called as blind deconvolution.
Since the general setting can also be understood as a model uncertainty for compressed sensing 
we will now reformulate the problem in this context.

\mysubsubsec{Mobile Communication Channels}
Consider the following special choice of $\{\Dict_k\}_{k=1}^{n^2}$ defined by its action:
\begin{equation}
   \langle e_l,\pi(k)y\rangle=\exp(i2\pi\frac{k_1l}{n})\langle e_{l\oplus k_2},y\rangle
\end{equation}
Thus, the unitary representation of the finite Weyl--Heisenberg group on $\Complex^{n}$. These operators are unitary and fulfill the Weyl commutation
rules. Even more they are an operator basis for with respect to the Hilbert--Schmidt inner product.
The operator $B(x,\cdot):\Complex^n\rightarrow\Complex^n$ is a 
discrete version of a pseudo--differential operator $L_x$ and:
\begin{equation}
   B(x,\cdot)=\sum_{k=1}^{n^2} x_k\pi(k)
\end{equation}
where the vector $x$ corresponds to the spreading function, i.e. symplectic Fourier transform of the symbol.
The operator $B(\cdot,y):\Complex^{n^2}\rightarrow\Complex^n$ in turn is
often called as the (Gabor) synthesis operator for the window $y$.
\fi

\section{Stable Low--Dimensional Embedding}
\label{sec:stable}
In this section we will establish a generic approach for stable embedding a non--compact and non--linear set
$V$ of a finite--dimensional normed space into a lower--dimensional space.  
Hereby $V\subseteq Z-Z$ will usually represent (not necessarily
all) differences (chords/secants) between elements from another set $Z$.  In this case, stable lower--dimensional 
embedding essentially establishes the existence of a ''shorter description'' of elements in $Z$ whereby decoding can be 
arbitrarily complex. Once $Z$ obeys a suitable convex surrogate function $f:Z\rightarrow\Reals_+$ the geometric
convex approach \cite{Chandrasekaran2010} is applicable and via Gordon's ''escape through a mesh'' theorem 
the Gaussian width of the descent cones of $f$ provide estimates on the sampling complexity.
But, e.g., for $Z=\slmat_{s,f}$  this seems not be the case and conventional multi--objective 
convex programs ($f$ is the infimal convolution of multiple surrogates like $\Ell{1}$ and nuclear norm) are limited to the Pareto boundary \cite{Oymak2012} formed by the single objectives and have
a considerable gap to the optimal sampling complexity.

\if0
However, if for example $V$ is a
union of difference sets 

Since $V$ might lack of linear structure we (i) first establish a 
\fi

\subsection{Structure--aware Approximation}
The first step is to establish an approximation statement in terms
of an intrinsic distance description for a given subset $V$ of 
a finite--dimensional space with a given norm $\lVert\cdot\rVert$.
The problem arises since we need to quantify certain differences $v-r$ of two elements $v,r\in V$ whereby
potentially $v-r\notin V$. Clearly $v-r\in\Span(V)$, but in this way we will 
potentially loose the low--complexity structure of $V$. 

\mysubsubsec{Intrinsic Distance}
We consider an intrinsic notion of a 
\emph{distance function} $d(v,r)$.
For example, if $V$ is path--connected, one might take the length of 
a smooth path $\gamma:[0,1]\rightarrow V$ from $v=\gamma(0)$ to $r=\gamma(1)$ 
with $\gamma([0,1])\subset V$ and use
$\lVert v- r\rVert\leq\int_0^1\lVert\dot{\gamma}(t)\rVert dt$.
However, we will not exploit Riemannian metrics here as has been done, for example,
in \cite{Baraniuk09:manifolds} for \emph{compact} manifolds.
Instead, since 
a \emph{rectifiable} path can be approached by finite partition sums,
we consider a construction called \emph{projective norm} in the case of tensors, see
here \cite[p.7]{Diestel2008} or \cite[Ch.2]{Ryan:tensorproducts}. 
More precisely, with $w=v-r\in V- V$, this means:
\begin{equation}
   \lVert w\rVert_\proj:=\inf\{\sum_{i}\lVert v_i\rVert\,: w=\sum_i v_i\,\text{with}\, v_i\in V\},
   \label{eq:approx:path:sum}
\end{equation}
whereby for any $v\in V$ one has  $\lVert v\rVert_\proj=\lVert v\rVert$. If
there is no decomposition for $w$ then $\lVert w\rVert_\proj$ is set to $\infty$.
In the examples later on we will always have that $V$ is a central--symmetric linear cone, i.e.
$V=\xi V$ for all $0\neq \xi\in\mathbb{R}$. In this case $V$ is generated by 
a central-symmetric atomic subset of the unit sphere and $\lVert w\rVert_\proj$ is then 
called an \emph{atomic norm}, see here for example \cite{Chandrasekaran2010}. 
But, instead of approaching the optimum in \eqref{eq:approx:path:sum},  we will later consider a particularly chosen
decomposition $v-r=\sum_i v_i$ depending from the application
(we specify the relevant cases later in Section \ref{subsec:coveringandentropy}).
Once, for $v-r$ a decomposition $\{v_i\}$ in $V$ has been specified, $d(v,r)$ is defined 
(and lower bounded) as:
\begin{equation}
   \begin{split}
      d(v,r):=\sum_{i}\lVert v_i\rVert\geq\lVert v-r\rVert_\proj\geq\lVert v-r\rVert
   \end{split}
   \label{eq:approx:d:def}
\end{equation}
However, then $d(v,r)$ is not necessarily a metric on $V$.
There is a useful condition: if $v-r$ has a $k$--term decomposition 
$v-r=\sum_{i=1}^{k}v_i$ in $V$ and there is $\mu$ such that
$\sum_{i=1}^k\lVert v_i\rVert^2\leq \mu\lVert\sum_{i=1}^k v_i \rVert^2$  -- it would follow 
from Cauchy--Schwartz inequality that:
\begin{equation}
   \begin{split}
      \lVert v-r\rVert\leq d(v,r)=
      \sum_{i=1}^k \lVert v_i\rVert
      \leq (k\sum_{i=1}^k \lVert v_i\rVert^2)^{\frac{1}{2}}
      \leq\sqrt{k\mu}\lVert v-r\rVert.
   \end{split}
   \label{eq:approx:splitting}
\end{equation}
Thus, within $\sqrt{k\mu}$ the norms $\lVert\cdot\rVert_\proj$ and $\lVert\cdot\rVert$ are then equivalent on $V$
and for Euclidean norms we have $\mu=1$ for orthogonal decompositions. 
A worst--case estimate is obviously $k=\dim(V)$ meaning 
that $V$ contains a frame for its span with lower frame--bound $1/\mu>0$ which, however,
could be arbitrary small depending on the anisotropic structure of $V$.
Instead, we shall therefore consider $V=B(U)$ as the image under a given mapping 
$B$ of another ''nicer'' set $U$ which a-priori has this property.

\mysubsubsec{The Essential Approximation Step}
\newcommand{\epsV}{\epsilon}
\newcommand{\epsU}{{\hat\epsilon}}
Here we will now give a short but generalized form of the essential step in \cite{Baraniuk2008}.  Variants of it can be
found in almost all RIP--proofs based on nets. However, here we focus on the important property, that the 
(linear) compression mapping $\Phi:V\rightarrow W$ should be solely applied on elements of $V$.  
\if0
\begin{lemma}
   Fix $\del\in(0,1)$ and $0<\epsV<1$. Let $\Phi:V\rightarrow W$ be a linear map between subsets $V$ and $W$ of finite normed spaces, 
   each with its norm $\lVert\cdot\rVert$.
   Assume that for each $v\in V$ we have (i) $r=r(v)\in V$ and  (ii) a decomposition
   $v-r=\sum_{i} v_i$ with $\{v_i\}\subset V$ such that 
   $d(v,r):=\sum_{i}\lVert v_i\rVert\leq\epsV\lVert v\rVert$ and (iii)
   $|\lVert\Phi r\rVert-\lVert r\rVert|\leq\frac{\delta}{2}\lVert r\rVert$.
   Then it also holds:
   \begin{equation}
      |\lVert\Phi v\rVert-\lVert v\rVert|\leq\delta\lVert v\rVert\quad\text{for all}\quad v\in V
   \end{equation}
   if $\epsV<\delta/7$ ($\epsV<\delta/4$ if $\lVert v\rVert=\lVert r(v)\rVert$ for all $v\in V$).
   \label{lemma:intrinsic:approx}
\end{lemma}
\fi
\begin{lemma}
   Let $\del,\epsV\in(0,1)$ and $\Phi:V\rightarrow W$ be a linear map between subsets $V$ and $W$ of finite normed spaces, 
   each with its norm $\lVert\cdot\rVert$.
   Assume that for each $v\in V$ there exists $r=r(v)\in V$ such that 
   \begin{itemize}
   \item[(i)] $ $ a decomposition $\{v_i\}\subset V$ exists for $v-r=\sum_{i} v_i$ with 
   $d(v,r):=\sum_{i}\lVert v_i\rVert\leq\epsV\lVert v\rVert$ 
   \item[(ii)] $ $ and $|\lVert\Phi r\rVert-\lVert r\rVert|\leq\frac{\delta}{2}\lVert r\rVert$.
   \end{itemize}
   Then it holds for $\epsV<\delta/7$:
   \begin{equation}
      |\lVert\Phi v\rVert-\lVert v\rVert|\leq\delta\lVert v\rVert\quad\text{for all}\quad v\in V.
      \label{eq:lemma:intrinsic:approx}
   \end{equation}
   If $\lVert v\rVert=\lVert r(v)\rVert$ for all $v\in V$ then \eqref{eq:lemma:intrinsic:approx}
   holds also for $\epsV<\delta/4$. 
   \label{lemma:intrinsic:approx}
\end{lemma}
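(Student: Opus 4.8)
The plan is a self-improvement (bootstrap) argument of the kind underlying most net-based RIP proofs, but organized so that $\Phi$ is only ever applied to elements of $V$. First I would observe that, since $V$ and $W$ lie in finite-dimensional normed spaces and $\Phi$ is linear, $\Phi$ is bounded; hence the quantity
\[
   A:=\sup_{v\in V,\,v\neq 0}\frac{\big|\,\lVert\Phi v\rVert-\lVert v\rVert\,\big|}{\lVert v\rVert}
\]
is \emph{finite}. The goal is then to show $A\le\delta$ under the stated smallness of $\epsV$; the claim \eqref{eq:lemma:intrinsic:approx} follows at once (being trivial for $v=0$).

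Next I would fix an arbitrary $v\in V$ with $v\neq 0$, take the associated $r=r(v)\in V$ and the decomposition $v-r=\sum_i v_i$ from (i). The series converges absolutely since $\sum_i\lVert v_i\rVert\le d(v,r)\le\epsV\lVert v\rVert<\infty$, so continuity and linearity of $\Phi$ give $\Phi(v-r)=\sum_i\Phi v_i$, and using $v_i\in V$ with the definition of $A$,
\[
   \big|\,\lVert\Phi v\rVert-\lVert\Phi r\rVert\,\big|\le\lVert\Phi(v-r)\rVert\le\sum_i\lVert\Phi v_i\rVert\le(1+A)\sum_i\lVert v_i\rVert\le(1+A)\,\epsV\lVert v\rVert .
\]
Feeding in hypothesis (ii), $\big|\lVert\Phi r\rVert-\lVert r\rVert\big|\le\tfrac{\delta}{2}\lVert r\rVert$, and the elementary bounds $(1-\epsV)\lVert v\rVert\le\lVert r\rVert\le(1+\epsV)\lVert v\rVert$ (from $\lVert v-r\rVert\le\sum_i\lVert v_i\rVert\le\epsV\lVert v\rVert$), and keeping the worse sign in each of the two directions, I get
\[
   \big|\,\lVert\Phi v\rVert-\lVert v\rVert\,\big|\le\Big(\tfrac{\delta}{2}+\epsV+\tfrac{\delta\epsV}{2}+(1+A)\,\epsV\Big)\lVert v\rVert .
\]

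Dividing by $\lVert v\rVert$ and taking the supremum over $v\in V\setminus\{0\}$ turns this into the self-referential inequality $A\le\tfrac{\delta}{2}+2\epsV+\tfrac{\delta\epsV}{2}+A\,\epsV$, i.e. $A(1-\epsV)\le\tfrac{\delta}{2}+2\epsV+\tfrac{\delta\epsV}{2}$; since $A<\infty$ this is a genuine bound. A short computation, using $\delta<1$ so that $2\epsV+\tfrac{3}{2}\delta\epsV<\tfrac{7}{2}\epsV$, shows that $\epsV<\delta/7$ suffices to make the right-hand side $\le\delta(1-\epsV)$, hence $A\le\delta$. For the special case $\lVert r(v)\rVert=\lVert v\rVert$ the two-sided bound on $\lVert r\rVert$ becomes the equality $\lVert r\rVert=\lVert v\rVert$, the terms $\tfrac{\delta\epsV}{2}$ and one copy of $\epsV$ disappear, the self-referential inequality becomes $A(1-\epsV)\le\tfrac{\delta}{2}+\epsV$, and then $\epsV<\delta/4$ suffices.

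The one genuine subtlety — and the step I expect to be the main obstacle — is the bootstrap itself: one must know \emph{beforehand} that $A<\infty$ (this is exactly where finite-dimensionality is used) before the $A\,\epsV$ term can be absorbed on the left, and the clean thresholds $\delta/7$ and $\delta/4$ only come out if the products $(1\pm\tfrac{\delta}{2})(1\pm\epsV)$ are expanded exactly rather than bounded crudely. Everything else is just the triangle inequality applied inside $V$, exactly as intended.
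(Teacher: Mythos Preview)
Your proof is correct and is essentially the same bootstrap argument as the paper's: define the supremum $A$, bound $|\lVert\Phi v\rVert-\lVert v\rVert|$ via the triangle inequality split $\lVert\Phi v\rVert\to\lVert\Phi r\rVert\to\lVert r\rVert\to\lVert v\rVert$, use $\sum_i\lVert\Phi v_i\rVert\le(1+A)\epsV\lVert v\rVert$, and solve the resulting self-referential inequality $A(1-\epsV)\le\tfrac{\delta}{2}+2\epsV+\tfrac{\delta\epsV}{2}$ (respectively $A(1-\epsV)\le\tfrac{\delta}{2}+\epsV$ in the norm-preserving case). The only cosmetic difference is that the paper picks a near-maximizer $v^*$ with slack $\epsV'>0$ and lets $\epsV'\to 0$ at the end, whereas you take the supremum over all $v$ directly; your version is arguably cleaner and you are also more explicit than the paper about why $A<\infty$ a priori (finite-dimensionality), which the paper leaves implicit but needs just as much.
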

Let us make here the following remark: Lemma \ref{lemma:intrinsic:approx} neither requires that 
$V$ is symmetric ($V=-V$) nor is a linear cone ($V=\xi V$ for all $\xi>0$).
However, if the lemma holds for a given 
$V$ and approximation strategy $v\rightarrow r(v)$ then it also holds for $\xi V$ with $\xi\in\mathbb{C}$
and $r(\xi \cdot):=\xi r(\cdot)$\footnote{%
To see this, let $v,r(v)\in V$ with a decomposition $v-r(v)=\sum_{i}v_i$. Then $r(\xi v)=\xi r(v)\in\xi V$ 
and $\xi v-r(\xi v)=\xi(v-r(v))=\sum_i \xi v_i$ with $\xi v_i\in \xi V$. 
}. It holds therefore also for $\bigcup_{\xi\in\mathbb{C}}\xi V$ whereby the converse is wrong. 
\begin{proof}
   We set $a=0$ if we already know that $\lVert v\rVert=\lVert r\rVert$ and $a=1$ else.
   Using triangle inequalities we get for $v,r\in V$ with 
   the decomposition $v-r=\sum_{i=1}^k v_i$ given in (i):
   \begin{equation}
      \begin{split}
         |\lVert\Phi v\rVert-\lVert v\rVert|
         &=|\lVert\Phi v\rVert-\lVert\Phi r\rVert+ \lVert\Phi r\rVert-\lVert v\rVert |\\
         &\leq|\lVert\Phi v\rVert-\lVert\Phi r\rVert|+
         a|\lVert v\rVert-\lVert r\rVert|+|\lVert\Phi r\rVert-\lVert r\rVert|\\
         &\leq
         \lVert\Phi \left(v-r\right)\rVert+
         a\lVert v- r\rVert+|\lVert\Phi r\rVert-\lVert r\rVert|\\
         &\overset{\text{(ii)}}{\leq}
         \sum_i \lVert \Phi v_i\rVert+a\cdot d(v,r)+\frac{\delta}{2}\lVert r\rVert\\
      \end{split}
      \label{eq:approx}
   \end{equation}
   where in the last step we also used the property of $d$ given in \eqref{eq:approx:d:def}.
   Since $|\lVert v\rVert-\lVert r\rVert|\leq a\lVert v-r\rVert\leq a\cdot d(v,r)$ we have
   $\lVert r\rVert\leq \lVert v\rVert+a\cdot d(v,r)$ and therefore:
   \begin{equation}
      \begin{split}
         |\lVert\Phi v\rVert-\lVert v\rVert|
         &\leq 
         \sum_i \lVert \Phi v_i\rVert+a(1+\frac{\delta}{2})\cdot d(v,r)+\frac{\delta}{2}\lVert v\rVert\\
         &\leq
         \sum_i \lVert \Phi v_i\rVert+\left(a(1+\frac{\delta}{2})\epsV+\frac{\delta}{2}\right)\lVert v\rVert\\
      \end{split}
      \label{eq:lemma:approx:eq2}
   \end{equation}
   where the last line follows from $d(v,r)\leq\epsV\lVert v\rVert$ given in the assumption (i) of the lemma.
   Note that, if we can ensure $\lVert r\rVert=\lVert v\rVert$ then $a=0$.
   We now follow the same strategy as in \cite{Baraniuk2008} and
   define the constant:
   \begin{equation}
     A:=\sup_{0\neq v\in V}\frac{|\lVert\Phi v\rVert-\lVert v\rVert|}{\lVert v\rVert}.
     \label{eq:lemma:approx:defA}
   \end{equation}
   implying that for any $\epsV'>0$ there is $v^*\in V$ 
   with $(A-\epsV')\lVert v^*\rVert\leq |\lVert\Phi v^*\rVert-\lVert v^*\rVert|$. 
   From the prerequisite (i) of the lemma there also exists $r^*=r(v^*)\in V$ with 
   $d(v^*,r^*):=\sum_i \lVert v^*_i\rVert\leq\epsV\lVert v^*\rVert$ for a decomposition
   $v^*-r^*=\sum_i v^*_i$.
   We have then from \eqref{eq:lemma:approx:defA} that
   $\sum_i \lVert \Phi v^*_i\rVert\leq (1+A) d(v^*,r^*)\leq (1+A)\epsV\lVert v^*\rVert$ and
   using \eqref{eq:lemma:approx:eq2} for $v=v^*$ gives:
   \begin{equation}
      \begin{split}
         (A-\epsV')\lVert v^*\rVert
         &\leq|\lVert\Phi v^*\rVert-\lVert v^*\rVert|
         \leq \left((1+A)\epsV+a(1+\frac{\delta}{2})\epsV+\frac{\delta}{2}\right)\lVert v^*\rVert.\\
      \end{split}
   \end{equation}
   Solving for $A$ gives:
   \begin{equation}
      \begin{split}
         A
         \leq \frac{\epsV+a(1+\frac{\delta}{2})\epsV+\frac{\delta}{2}+\epsV'}{1-\epsV}
         \overset{(!)}{\leq}\delta\,\Leftrightarrow\,
         \epsV\leq\frac{\delta-2\epsV'}{2+a(2+\delta)+2\delta}\,\Leftarrow\,
         \epsV<\frac{\delta}{4+3a},
         \\
      \end{split}
      \label{eq:A:general}
   \end{equation}
   since for each fixed $\del<1$ there exists a sufficiently small $\epsV'>0$ such that 
   \eqref{eq:A:general} holds. 
   Recall, in general, $a=1$ but if we are able to choose $\lVert r\rVert=\lVert v\rVert$ 
   we have $a=0$.\qed
\end{proof}
Summarizing, the approximation strategy of \cite{Baraniuk2008} applies 
in a quite generalized context. For a given $V$ one has (i) to
find a suitable\footnote{A suitable \emph{decomposition strategy} for $v-r=\sum_i v_i$ with all $v_i\in V$ has to 
be found. Then $d(v,r):=\sum_i \lVert v_i\rVert$ defines an intrinsic distance function.
We will give examples in Section \ref{subsec:coveringandentropy}.} $d(v,r)$ and
(ii) find covering number estimates for $V$ in terms of $d$ 
which are better than those of the ambient space. However, the second
step seems notoriously difficult and we approach this by sticking to a particular
parametrization of the set $V$.

\subsection{Bi-Lipschitz Mappings and the RNMP}
\newcommand{\eucmod}{\sigma} 
\newcommand{\propA}[1]{$(A_{#1})$}
\newcommand{\propB}[1]{$(B_{#1})$}
Here, we consider now the non--linear set $V$ as the image $V=B(U)$ 
of a (parameter) set $U$ of a normed space under a linear map
$B:U\rightarrow V$, i.e. $B$ is always a surjection. 
The domain $U$ can be, for example, subsets of vectors or matrices equipped with the norm of the
ambient space (usually some Euclidean norm).
We shall approximate each element $v\in V$ by another element $r=r(v)\in V$  
but taking care of the case $v-r\notin V$. To this end we will perform the approximation in the domain $U$ of $B$
and translate this afterwards to its range $V$.
Thus, we will need the following two properties
\propA{\eucmod} and \propB{\alpha,\beta}:\\[.5em] 
{\bf \propA{\eucmod}}: A set $U$ has the property \propA{\eucmod} for $\eucmod>0$ if
it is the finite union $U=\bigcup_{l=1}^L U_l$ of $L$ subsets
of a normed space and
for each $u,\rho\in U$ with $u,\rho\in U_l$ for some $l=1\dots L$
there exists $\{u_i\}_{i=1}^k\subset U_l$ yielding a $k$--term decomposition $u-\rho=\sum_{i=1}^k u_i$ with:
\begin{equation}
   \sum_{i=1}^k \rVert u_i\lVert\leq \eucmod\cdot\rVert \sum_{i=1}^k u_i\lVert
   \label{eq:approx:decomposition}.
\end{equation}
%
For example, if $U$ is a subspace then $u-\rho\in U$ for \emph{each} $u,\rho\in U$. In this case,
the ``$k=1$''--decomposition $u_1=u-\rho$ is valid giving $\eucmod=1$.  However,
if $U$ is a union of $L$ subspaces $U_l$ then $u$ and $\rho$ usually have to be in
the \emph{same} subspace for $\eucmod=1$. 
On the other hand, if $U$ is some subset equipped with an Euclidean norm
and $u-\rho\notin U$ but is guaranteed to have an \emph{orthogonal} $k$--term decomposition in $U$
then $\eucmod=\sqrt{k}$, see \eqref{eq:approx:splitting} for $U=V$. 
For example, let $U$ be the matrices of maximal rank $\kappa$ equipped with the Frobenius (Hilbert--Schmidt) norm. 
In this case it might happen that $u-\rho\notin U$ but the singular value decomposition provides an orthogonal (in the Hilbert--Schmidt 
inner product) ``$k=2$''--decomposition in $U$ for
\emph{any} $u,\rho\in U$, i.e. $\sigma=\sqrt{2}$.
However, if $U$ is the union of $L$ matrix subsets $U_l$ of maximal rank $\kappa$ (like sparse low--rank matrices) the $u$ and $\rho$ have usually to be from the \emph{same} subset
for \eqref{eq:approx:decomposition} to hold with $\sigma=\sqrt{2}$. 
\\[.5em] 
\noindent To switch now between domain and range of $B$ we will also need the property: 
\\[.5em]
\noindent {\bf \propB{\alpha,\beta}}:
A map $B:U\rightarrow V$ has the property \propB{\alpha,\beta} if there is $0<\alpha\leq\beta<\infty$ such that it holds:
\begin{equation}   
   \alpha\lVert u\rVert\leq \lVert B(u)\rVert\leq\beta\lVert u\rVert
   \quad\text{for all}\quad u\in U   
   \label{eq:approx:rnmp}
\end{equation}
%
In \cite{WJ12b} the authors have considered condition (ii) for $U=\{\Inx\otimes\Iny\,:\, \Inx\in X,\Iny\in Y\}$ where
$X$ and $Y$ are two given cones of an Euclidean space under the name \emph{restricted norm multiplicativity property}
(RNMP) since in this case $\lVert\Inx\otimes\Iny\rVert=\lVert\Inx\rVert\lVert\Iny\rVert$. We will further discuss such
models for $U$ in \eqref{eq:model:bilinear:single} and \eqref{eq:model:quadratic} below and in much more detail for
convolutions in Section \ref{sec:conv:rnmp}.  On the other hand, for  difference sets $U=M-M$ and linear mappings $B$
this is the \emph{bi-Lipschitz condition} of $B$ on $M$. We have the following lemma:
%
%
%
\begin{lemma}
   Let $\epsU>0$ and $B:U\rightarrow V$ be a linear map having property
   \propB{\alpha,\beta}.    
   If $u,\rho\in U$ fulfill $\lVert u-\rho\rVert\leq\epsU\lVert u\rVert$
   and there exists a decomposition $\{u_i\}\subset U$ with $u-\rho=\sum_i u_i$ 
   such that \eqref{eq:approx:decomposition} holds for some $\eucmod>0$.
   Then it holds:
   \begin{equation}
      \lVert v-r\rVert\leq d(v,r)\leq\frac{\beta\eucmod}{\alpha}\epsU\lVert v\rVert.
   \end{equation}
   where $v:=B(u)$, $r:=B(\rho)$ and $d(v,r):=\sum_{i}\lVert B(u_i)\rVert$.
   \label{lemma:domainapprox}
\end{lemma}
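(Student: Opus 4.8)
The plan is to transport the domain-side approximation bound through $B$ using nothing more than linearity of $B$ and the two-sided estimate \eqref{eq:approx:rnmp}; the whole argument is a short chain of inequalities, since Lemma \ref{lemma:domainapprox} is a bookkeeping step feeding into Lemma \ref{lemma:intrinsic:approx}.

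First I would verify the left inequality $\lVert v-r\rVert\leq d(v,r)$. By linearity of $B$ and the decomposition $u-\rho=\sum_i u_i$ one has $v-r=B(u)-B(\rho)=B(u-\rho)=\sum_i B(u_i)$, so the triangle inequality gives $\lVert v-r\rVert=\lVert\sum_i B(u_i)\rVert\leq\sum_i\lVert B(u_i)\rVert=d(v,r)$. In particular this shows that the decomposition $\{B(u_i)\}\subset V$ makes $d$ an admissible intrinsic distance in the sense of \eqref{eq:approx:d:def}, so the output of this lemma is exactly of the shape needed to apply Lemma \ref{lemma:intrinsic:approx}.

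Next I would bound $d(v,r)$ from above, term by term. Each $u_i$ belongs to $U$, hence the upper half of \propB{\alpha,\beta} applies and yields $\lVert B(u_i)\rVert\leq\beta\lVert u_i\rVert$; summing gives $d(v,r)\leq\beta\sum_i\lVert u_i\rVert$. The decomposition hypothesis \eqref{eq:approx:decomposition} then bounds $\sum_i\lVert u_i\rVert\leq\eucmod\lVert\sum_i u_i\rVert=\eucmod\lVert u-\rho\rVert$, so that $d(v,r)\leq\beta\eucmod\lVert u-\rho\rVert$. Now I invoke the hypothesis $\lVert u-\rho\rVert\leq\epsU\lVert u\rVert$ and then the lower half of \propB{\alpha,\beta}, which gives $\lVert u\rVert\leq\alpha^{-1}\lVert B(u)\rVert=\alpha^{-1}\lVert v\rVert$. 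Chaining the three steps yields $d(v,r)\leq\beta\eucmod\lVert u-\rho\rVert\leq\beta\eucmod\,\epsU\,\lVert u\rVert\leq(\beta\eucmod/\alpha)\,\epsU\,\lVert v\rVert$, which is the asserted estimate.

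There is no genuine obstacle here; the single point that carries content — and the reason the statement is phrased around the finite union $U=\bigcup_l U_l$ and the decomposition property \propA{\eucmod} — is that every $u_i$ must genuinely lie in $U$, so that \propB{\alpha,\beta} may legitimately be applied to each summand. This is precisely what keeps the estimate from leaking into $\Span(U)$, where no norm control is available, and it is the one hypothesis one has to check carefully in each application. Downstream, this lemma supplies assumption (i) of Lemma \ref{lemma:intrinsic:approx} with $\epsilon=(\beta\eucmod/\alpha)\epsU$, so one takes $\epsU<\alpha\delta/(7\beta\eucmod)$ there (or the milder $\epsU<\alpha\delta/(4\beta\eucmod)$ when the approximation can be chosen norm-preserving).
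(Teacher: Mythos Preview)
Your proof is correct and matches the paper's proof essentially step for step: linearity of $B$ plus the triangle inequality for the left inequality, then the upper bound of \propB{\alpha,\beta} on each $B(u_i)$, the decomposition bound \eqref{eq:approx:decomposition}, the hypothesis $\lVert u-\rho\rVert\leq\epsU\lVert u\rVert$, and finally the lower bound of \propB{\alpha,\beta} to pass from $\lVert u\rVert$ to $\lVert v\rVert$. Your added remarks on why $u_i\in U$ is the crucial hypothesis and on how the output feeds into Lemma~\ref{lemma:intrinsic:approx} are accurate and consistent with the paper's surrounding discussion.
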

%
%
\begin{proof}
   The assertion follows directly from:
   \begin{equation}
      \begin{split}
         \lVert v-r\rVert
         &=\lVert B(u)-B(\rho)\rVert=\lVert B(u-\rho)\rVert=
         \lVert\sum_{i} B(u_i)\rVert\leq\sum_{i} \lVert B(u_i)\rVert=d(v,r)\\
         &\overset{\eqref{eq:approx:rnmp}}{\leq} \beta\sum_{i} \lVert u_i\rVert
         \overset{\eqref{eq:approx:decomposition}}{\leq}\beta\eucmod\lVert u-\rho\rVert
         \leq \beta\eucmod\epsU\lVert u\rVert
         \overset{\eqref{eq:approx:rnmp}}{\leq} 
         \frac{\beta\eucmod}{\alpha}\epsU\lVert v\rVert\quad \qed
      \end{split}
      \label{eq:lemma:domainapprox:proof1}
   \end{equation}
\end{proof}
In the next section we will use this lemma to translate the accuracy in approximating
$u$ by some $\rho=\rho(u)$ from domain $U$ of $B$ to its image $V$. 
Note, that linearity of $B$ is used only in the first step
of \eqref{eq:lemma:domainapprox:proof1} whereby extension are possible once
there holds $\lVert B(u)-B(\rho)\rVert\leq c\cdot\sum_{i=1}^{k} \lVert B(u_i)\rVert$ uniformly for every 
$u\in U$ and $\rho=\rho(u)$. However, we will not further argue on this here.

\subsection{Covering and Entropy}
\label{subsec:coveringandentropy}
The remaining task is now to specify for given $\epsU>0$ an approximation strategy
$u\rightarrow \rho(u)$ such that
Lemma \ref{lemma:domainapprox} can be applied to all $u\in U$, i.e.
such that for each $u\in U$ there is $\rho=\rho(u)\in U$ with
$\lVert u-\rho\rVert\leq\epsU\lVert u\rVert$ and each $u-\rho$ has a decomposition in $U$.
From this equation it is clear that we have to consider $\epsU$--coverings\footnote{
	$R$ is an $\epsU$--net for $U'$ if for each $u\in U'$ exists $\rho=\rho(u)\in R$
	with $\lVert u-\rho\rVert\leq \epsU$, i.e. the union of these $\epsU$--balls
	centered at $\rho$ cover $U'$.
} for the set 
$U':=\{u/\lVert u\rVert\,:\, 0\neq u\in U\}$ and
to estimate its \emph{covering number}:
\begin{equation}
	N_\epsU(U'):=\min\{ |R|\,:\,\text{$R$ is an $\epsU$--covering for $U'$}\}
\end{equation}
Its logarithm $H_\epsU(U')=\log N_\epsU(U')$ is called the (metric) $\epsU$--entropy of $U'$.
Due to pre-compactness of $U'$ as a subset of the unit ball these quantities are always finite.
Furthermore we will abbreviate now $V':=\{v/\lVert v\rVert\,:\, 0\neq v\in V\}$.
Let us restate Lemma \ref{lemma:domainapprox} in this context:
\begin{corollary}
   Let $B:U\rightarrow V$ be linear with property
   \propB{\alpha,\beta}, $U$ be a linear 
   cone with property \propA{\eucmod} and $\epsU>0$. 
   Then each $\epsU$--covering for $U'$ induces an
   $\epsV=\frac{\beta\eucmod\epsU}{\alpha}$--covering for $V'$ for the norm
   in $V$ as well as for an intrinsic distance and:
   \begin{equation}
      H_{\epsV}(V')\leq H_{\alpha\epsV/(\beta\eucmod)}(U')
      \label{eq:cor:UVcovering}
   \end{equation}
   holds.
   \label{cor:UVcovering}
\end{corollary}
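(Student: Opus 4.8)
The statement is Lemma~\ref{lemma:domainapprox} re-read in the language of covering numbers: a net on the ``domain sphere'' $U'$ is transported through $B$ to a net on the ``range sphere'' $V'$, and Lemma~\ref{lemma:domainapprox} controls the mesh of the image net. The plan is therefore to fix an (essentially optimal) $\epsU$--net of $U'$, push it forward by $B$, and apply Lemma~\ref{lemma:domainapprox} at each net point.

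Concretely, I would set $\epsU:=\alpha\epsV/(\beta\eucmod)$ and let $R$ be an $\epsU$--covering of $U'$ with $|R|=N_{\epsU}(U')$. Since $U=\bigcup_{l=1}^L U_l$, hence $U'=\bigcup_l U_l'$, I would take $R$ adapted to this decomposition, so that every $u\in U'$ has a net point $\rho=\rho(u)\in R$ lying in a common $U_l$ with $u$. By \propB{\alpha,\beta}, $\lVert B(\rho)\rVert\geq\alpha\lVert\rho\rVert=\alpha>0$, so $\hat R:=\{\,B(\rho)/\lVert B(\rho)\rVert:\rho\in R\,\}\subseteq V'$ is well defined with $|\hat R|\leq|R|$. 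For $v\in V'$, surjectivity of $B$ and the linear--cone structure of $U$ give $u\in U'$ with $v=B(u)/\lVert B(u)\rVert$ (rescale any preimage of $v$ to unit norm); with $\rho=\rho(u)$, property \propA{\eucmod} supplies a decomposition $u-\rho=\sum_{i=1}^k u_i$, $\{u_i\}\subset U_l$, satisfying \eqref{eq:approx:decomposition} and $\lVert u-\rho\rVert\leq\epsU=\epsU\lVert u\rVert$. Lemma~\ref{lemma:domainapprox} applied to $u,\rho$ and this decomposition then yields, with $v_0:=B(u)$, $r_0:=B(\rho)$ and $d(v_0,r_0):=\sum_i\lVert B(u_i)\rVert$,
\begin{equation}
  \lVert v_0-r_0\rVert\leq d(v_0,r_0)\leq\frac{\beta\eucmod}{\alpha}\,\epsU\,\lVert v_0\rVert=\epsV\,\lVert v_0\rVert .
\end{equation}
Dividing by $\lVert v_0\rVert$ places $v=v_0/\lVert v_0\rVert$ within $\epsV$ -- in the norm of $V$ and in the intrinsic distance $d$ -- of the ray through $r_0=B(\rho)$, and passing to the normalized net point $r_0/\lVert r_0\rVert\in\hat R$ costs only $\lVert v_0\rVert^{-1}\bigl|\lVert v_0\rVert-\lVert r_0\rVert\bigr|\leq\lVert v_0\rVert^{-1}\lVert v_0-r_0\rVert\leq\epsV$. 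Hence $\hat R$ is an $\epsV$--covering of $V'$ for both notions of distance (up to this harmless constant), $N_{\epsV}(V')\leq|\hat R|\leq N_{\epsU}(U')$, and taking logarithms gives \eqref{eq:cor:UVcovering}.

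The step that really needs attention, and the one I expect to be the main obstacle, is the adaptedness of $R$ to the finite union $U=\bigcup_l U_l$: Lemma~\ref{lemma:domainapprox} can be invoked only because $u$ and its net point $\rho$ lie in a single $U_l$, so that property \propA{\eucmod} applies -- this is exactly where the ``finite union'' clause of \propA{\eucmod} enters, and in practice one covers each $U_l'$ separately and assembles the pieces. The normalization above is the sole source of a constant; working instead with radial (angular) neighborhoods of the cone $V$ -- which is natural here since $U,V$ are cones -- makes the value $\epsV=\beta\eucmod\epsU/\alpha$ exact. Accordingly I would carry out the argument for a fixed component first and only then take the union over $l=1,\dots,L$.
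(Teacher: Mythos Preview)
Your approach is the same as the paper's: transport an $\epsU$--net $R\subset U'$ through $B$ and invoke Lemma~\ref{lemma:domainapprox}. The only real difference is the normalization step. The paper does not normalize the image net back to $V'$; instead, for $v\in V'$ it picks any preimage $u\in U$ with $B(u)=v$, scales the net point to $\rho(u):=\rho(u/\lVert u\rVert)\,\lVert u\rVert\in U$ (using that $U$ is a linear cone), and sets $r:=B(\rho(u))$. Lemma~\ref{lemma:domainapprox} then gives directly $\lVert v-r\rVert\le d(v,r)\le\frac{\beta\sigma}{\alpha}\epsU\,\lVert v\rVert=\epsV$, with no extra factor. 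This is precisely the ``radial neighborhood'' viewpoint you suggest as the fix, so your instinct there is exactly right; the paper just adopts it from the outset rather than passing through $\hat R\subset V'$ first.

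Your attention to the adaptedness of $R$ to the finite union $U=\bigcup_l U_l$ is well placed and in fact more explicit than the paper's proof, which applies \propA{\eucmod} without commenting on why $u$ and $\rho(u)$ lie in a common $U_l$. As you say, covering each $U_l'$ separately (which is how all the concrete entropy bounds in Section~\ref{subsec:coveringandentropy} are obtained anyway, via \eqref{eq:covering:stdestimate}) resolves this immediately.
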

The property \propA{\eucmod} \emph{always} induces an intrinsic distance on $V$
as will be seen in the proof below. 
\begin{proof}
  Let be $R\subset U'$ an $\epsU$--covering for $U'$, i.e. for each $u\in U'$ there exists 
  $\rho(u)\in R$ such that 
  $\lVert u-\rho(u)\rVert\leq\epsU$. 
  Since $U$ is a linear cone, i.e. $\xi U=U$ for $\xi>0$, it follows for all $0\neq u\in U$
  that $\lVert u-\rho(u)\rVert\leq \epsU\lVert u\rVert$ holds with
  $\rho(u):=\rho(u/\lVert u\rVert)\lVert u\rVert\in U$.
  
  Property \propA{\eucmod} asserts now that there always exists a decomposition $\{u_i\}\subset U$ for 
  $u-\rho(u)=\sum_i u_i$ in $U$ satisfiying \eqref{eq:approx:decomposition}.
  For each $v:=B(u)$ set
  $r=r(v):=B(\rho(u))$ and therefore $r-v=\sum_i B(u_i)$ has an intrinsic decomposition in $V$.
  Define $d(v,r):=\sum_{i}\lVert B(u_i)\rVert$.
  From Lemma \ref{lemma:domainapprox} it follows that:
  \begin{equation*}
    \lVert v-r\rVert\leq d(v,r)\leq \frac{\beta\eucmod\epsU}{\alpha}\lVert v\rVert.
  \end{equation*}   
  Indeed, for each $v\in V'$ this 
  means $\lVert v-r\rVert\leq d(v,r)\leq \beta\eucmod\epsU/\alpha$
  which yields \eqref{eq:cor:UVcovering} and shows
  that $v$ and $r$ are also close in the intrinsic distance induced 
  by \propA{\eucmod}.\qed
\end{proof}
We will now give a short overview on some cases for $U$
which have property \propA{\eucmod}, their entropy bounds 
and the corresponding values for 
$\eucmod$ in \eqref{eq:approx:decomposition}. All examples are central--symmetric linear cones,
i.e. $U=\xi U$ for all $0\neq\xi\in\mathbb{R}$. 
Hence, Corollary \ref{cor:UVcovering} will translate this via $\epsU=\alpha\epsV/(\beta\eucmod)$
to an entropy estimate for $V$ once 
$B$ has property \propB{\alpha,\beta}.
If we assume that $U\subseteq\bigcup_{l=1}^L U_l$ we have
$N_\epsU(U')\leq \sum_{l=1}^L N_\epsU(U'_l)$ and if furthermore 
all $U'_l:=\{u/\lVert u\rVert\,:\,0\neq u\in U_l\}$ have the same covering number as $U'_1$ we get therefore:
\begin{equation}
   H_\epsU(U')\leq H_\epsU(U'_1)+\log L.
   \label{eq:covering:stdestimate}
\end{equation}
Of most interest here is the dependency on the ambient dimension of $U$. If there is sufficient compressibility the ambient
dimension will \emph{explicitly} occur only in $L$ whereby $H_\epsU(U'_1)$ could  depend on it,
for fixed $\epsV>0$, only through
$\epsU=\alpha\epsV/(\beta\eucmod)$. This is indeed the case for sparse vectors and matrices as it will be
shown now.

\newcommand{\epsScale}{\xi}
\runinhead{Finite Union of Subspaces:}
If each $U_l$ is contained in a subspace of real dimension $d$ then one can choose 
for any $\epsU>0$ and each $l=1\dots L$ an $\epsU$--net for the unit ball $\tilde{U}_l'$ in $\tilde{U}_l:=\Span(U_l)$ and one has the well--known estimate 
$H_\epsU(U'_l)\leq H_\epsU(\tilde{U}_l')\leq d\log(3/\epsU)$ being
valid for any norm not only for the Euclidean norm \cite[Sec. 2.2]{Vershynin:2010:csbook}\extref{net cardinality}.
Even more, any smooth manifold of real dimension $d$ behaves in this way for $\epsU\rightarrow 0$. 
The union of these $L$ nets is an $\epsU$--net
for $U'$. Thus, if $U$ is therefore contained in a union of $L$ subspaces of the same dimension $d$ we have from 
\eqref{eq:covering:stdestimate}:
\begin{equation}
   H_\epsU(U')\leq d\log(3/\epsU)+\log L
\end{equation}
In particular, in a subspace we have $\eucmod=1$ in \eqref{eq:approx:decomposition} as already 
explained after \eqref{eq:approx:decomposition}. 
Furthermore, in the sparse vector
case, 
$U=\Sigma_{2k}$ is the union of $L:=\binom{n}{d}\leq(\frac{en}{d})^{d}$ different $d=2k$--dimensional
subspaces and we have in this case
$H_\epsU(U')\leq d\log(3/\epsU)+d\log(en/d)$.

\mysubsubsec{Low--rank Matrices} 
Consider differences of rank--$\kappa$ matrices $M$, i.e.
$U=M-M$ are $n\times n$ matrices of rank at most $2\kappa$
with the Euclidean (Frobenius) norm 
$\lVert u\rVert^2:=\langle u,u\rangle$ defined by the Hilbert--Schmidt inner product.
From \cite[Lemma 3.1]{Candes09:LMR} it follows:
\begin{equation}
   H_\epsU(U')\leq (2n+1)2\kappa\log(9/\epsU)
   \label{eq:entropyU:lowrank}.
\end{equation}
A matrix $u-\rho$ for \emph{any} $u,\rho\in U$ has rank at most $4\kappa$ and can be decomposed as $u-\rho=u_1+u_2$ for $u_1,u_2\in U$ 
with $\langle u_1,u_2\rangle=0$, i.e. it fulfills 
\eqref{eq:approx:decomposition} for $k=2$ and $\eucmod\leq\sqrt{2}$. Hence, $U$ 
has property \propA{\eucmod} for $\eucmod=\sqrt{2}$.

\mysubsubsec{Low--rank and Sparse Matrices} 
\label{subsubsec:SLa}
Here we consider the union $U=\slmat^\kappa_{s,f}-\slmat^\kappa_{s,f}$ of 
$L=\binom{n}{2s}\binom{n}{2f}$ different sets of differences of rank-$\kappa$ matrices $\slmat^\kappa_{s,f}$
(equipped with the Frobenius norm) as defined
in \eqref{eq:slmat}
and it follows from \eqref{eq:covering:stdestimate} and \eqref{eq:entropyU:lowrank} that:
\begin{equation}
   H_\epsU(U')\leq (2s+2f+1)2\kappa\log(9/\epsU)+2(s+f)\log\frac{en}{2\min(s,f)}.
   \label{eq:entropy:slA}
\end{equation}
The \emph{bilinear and sparse model} is here the special case for $\kappa=1$
($\slmat_{s,f}=\slmat^1_{s,f}$ in \eqref{eq:slmat})
and, once $\epsU$ does not depend on $n$, entropy scales at most as $\Order((s+f)\log n)$ for sufficiently large $n$.
Again, $U$ has here the property \propA{\eucmod} for $\eucmod=\sqrt{2}$.

\mysubsubsec{Sparse Bilinear Case with one Known Input}
Lemma \ref{lemma:intrinsic:approx} and Lemma \ref{lemma:domainapprox} do not require
that $V$ is a \emph{full} difference set. Here, we essentially consider the set:
\begin{equation}
   V=\bigcup_{\Inx\in\Sigma_s}(B(\Inx\otimes\Sigma_f)-B(\Inx\otimes\Sigma_f))=B(\slmat_{s,2f}).
   \label{eq:model:bilinear:single}
\end{equation}
This case will be relevant when we, universally, have to sample and store measurements 
in a repetitive blind manner whereby  we will have knowledge about one of the components during decoding,
i.e. this comprise a \emph{universal sampling method}.
Thus, once \eqref{eq:approx:rnmp} holds for this rank--one set $U$ 
with $(\alpha,\beta)$ being independent of the ambient dimension
its entropy bound scales additive in
$s$ and $f$, i.e., $\Order((s+f)\log n)$ according to \eqref{eq:entropy:slA} instead of 
$\Order(s\cdot f\log n)$. In our first covering estimate on this set 
in \cite{WJ12b} we have established this scaling for cones directly, not using \cite[Lemma 3.1]{Candes09:LMR}.

\mysubsubsec{The Quadratic and Symmetric Case}
Here, we consider again differences of the form $V=Z-Z$ for $Z=\bigcup_{x\in\Sigma_s} B(x\otimes x)$.
If $B$ is symmetric the binomial formula asserts that:
\begin{equation}
   V=\bigcup_{x,y\in\Sigma_s} B((x+y)\otimes(x-y))=B(\slmat_{2s,2s})
   \label{eq:model:quadratic}
\end{equation}
This model is important for sparse convolutions and sparse phase retrieval as discussed in Section \ref{sec:conv:rnmp}.
Once again, if \eqref{eq:approx:rnmp} holds for $U=\slmat_{2s,2s}$ independent of the ambient dimension,  entropy scales 
linearly in the sparsity $s$, i.e. $\Order(s\log n)$   as follows from 
\eqref{eq:entropy:slA} and \emph{not} as $\Order(s^2\log n)$.\\[.3em]

\subsection{Random Sampling Methods}
Based on the properties \propA{\eucmod} 
and \propB{\alpha,\beta} 
we consider now random linear mappings $\Phi:V\rightarrow W$ 
where for a small $\delta<1$ the condition $|\lVert \Phi v\rVert-\lVert v\rVert|\leq\delta\lVert v\rVert$ 
should hold simultaneously for all $v\in V=B(U)$ with high probability. For difference sets $V=Z-Z$ (meaning that $U=M-M$ 
for another set $M$ since $B$ is linear)
this condition provides a stable embedding of $Z$ in $W$ and, by \eqref{eq:approx:rnmp},
\emph{it always implies stable embedding $M$ in $W$ -- but
  in the anisotropic situation}.
An estimate for the RIP--like constant $\hat\delta$ of the composed map $\Phi B:U\rightarrow W$
follows with $\alpha=(1-\eta)\xi$ and $\beta=(1+\eta)\xi$ as:
\begin{equation}
   \begin{split}
      |\lVert \Phi B(u)\rVert-\xi\lVert u\rVert| 
      &\leq |\lVert \Phi B(u)\rVert-\lVert B(u)\rVert| + |\lVert B(u)\rVert-\xi\lVert u\rVert| \\
      &\leq \delta\lVert B(u)\rVert + \eta\xi\lVert u\rVert\leq ((1+\eta)\delta\xi+\eta\xi)\lVert u\rVert\\
      &=\xi((1+\eta)\delta+\eta)\lVert u\rVert
      =\xi(\delta+\eta(\delta+1))\lVert u\rVert=:\xi\hat\delta\lVert u\rVert\\
    \end{split}
\end{equation}
The term $\eta(\delta+1)$ reflects the degree of anisotropy caused by $B$. 
A similar relation for the usual definition of the RIP--property has been obtained for example in \cite{Rauhut2008:anisotropic}.
Although we not discuss efficient recovery here, recall that for example \cite{Cai2013} states that for $\hat\delta<1/3$ certain convex recovery methods ($\Ell{1}$--minimization for sparse vectors and nuclear norm minimization for low rank matrices when
$\lVert\cdot\rVert$ are Euclidean norms)
are successful, implying $\eta<1/3$.

\mysubsubsec{Random Model with Generic Concentration}
As shown already in the sparse vector case in \cite[Thm 5.2]{Baraniuk2008} we 
have in this generalized setting a similar statement:
\begin{lemma}   
   Let $\Phi: V\rightarrow W$ be a random linear map which obeys for $\del\in (0,1),\gam>0$ the uniform bound
   $\Pr(\{|\lVert\Phi r\rVert-\lVert r\rVert|\leq\frac{\delta}{2}\lVert r\Vert\})\geq 1-e^{-\gamma}$ for each $r\in V$.
   Let $B:U\rightarrow V$ linear with property \propB{\alpha,\beta}
   where $U$ is a linear cone having property \propA{\eucmod}. Then:
   \begin{equation}
      \Pr(\{\forall v\in V\,:\,|\lVert\Phi v\rVert-\lVert v\rVert|\leq\delta\lVert v\rVert\})\geq 1-e^{-(\gamma-H_\epsU(U'))}
      \label{eq:lemma:randomsampling}
   \end{equation}
   where $\epsU<\frac{\alpha}{7\beta\eucmod}\delta$.
   \label{lemma:randomsampling}
\end{lemma}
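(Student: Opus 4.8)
The plan is to run the classical net-plus-union-bound argument of \cite{Baraniuk2008}, organized around the two abstract ingredients already in place: Corollary \ref{cor:UVcovering} turns a finite $\epsU$-covering of $U'$ into a finite covering of $V'$ in a suitable intrinsic distance $d$, and Lemma \ref{lemma:intrinsic:approx} upgrades pointwise control of $\Phi$ at the covering centers to uniform control on all of $V$. The crucial structural point is that $\Phi$ only needs to be pinned down (through the concentration hypothesis) at the finitely many centers; the decomposition pieces $v_i=B(u_i)\in V$ entering $d(v,r)$ are absorbed by the supremum constant $A$ inside the proof of Lemma \ref{lemma:intrinsic:approx}, so they contribute no further terms to the union bound.

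Concretely, I would first fix $\epsU<\tfrac{\alpha}{7\beta\eucmod}\delta$ and pick a minimal $\epsU$-net $R\subset U'$, so $|R|=N_\epsU(U')$ and $\log|R|=H_\epsU(U')$. Applying $B$ yields a finite set $\{B(\rho):\rho\in R\}\subset V$. Because $\lVert\Phi\cdot\rVert$ and $\lVert\cdot\rVert$ are positively homogeneous, the concentration event at a point and at any positive multiple of it coincide, so it suffices to invoke the hypothesis at these $|R|$ points; a union bound shows that the event
\[
E:=\bigl\{\,|\lVert\Phi B(\rho)\rVert-\lVert B(\rho)\rVert|\le\tfrac{\delta}{2}\lVert B(\rho)\rVert\ \ \forall\,\rho\in R\,\bigr\}
\]
has probability at least $1-|R|e^{-\gamma}=1-e^{-(\gamma-H_\epsU(U'))}$. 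I would then argue deterministically on $E$: given $v\in V$, Corollary \ref{cor:UVcovering} (i.e.\ Lemma \ref{lemma:domainapprox} applied to the net point attached to $v$) produces $r=r(v)\in V$, a positive multiple of some $B(\rho)$ with $\rho\in R$, together with a decomposition $v-r=\sum_i B(u_i)$ in $V$ with $d(v,r)=\sum_i\lVert B(u_i)\rVert\le\epsV\lVert v\rVert$ where $\epsV=\tfrac{\beta\eucmod}{\alpha}\epsU<\delta/7$. This is hypothesis (i) of Lemma \ref{lemma:intrinsic:approx}, and hypothesis (ii) holds on $E$ by homogeneity; the lemma (in its general form, which is why the required threshold is $\delta/7$ and not $\delta/4$) then gives $|\lVert\Phi v\rVert-\lVert v\rVert|\le\delta\lVert v\rVert$ for all $v\in V$. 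Since this is exactly the event in \eqref{eq:lemma:randomsampling}, it contains $E$ and the stated bound follows.

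The part I would be careful to justify, rather than a genuine difficulty, is precisely why the union bound ranges only over the $N_\epsU(U')$ centers: this is the content of the supremum-constant device in Lemma \ref{lemma:intrinsic:approx}, and it is also why properties \propA{\eucmod} and \propB{\alpha,\beta} together are exactly the right hypotheses — the former bounds $\sum_i\lVert u_i\rVert$ by $\lVert u-\rho\rVert$, the latter transports this through $B$, so that $d(v,r)$ is small. Beyond that it is bookkeeping: checking that $r(v)$ sits on a ray through a net center, that $\epsV<\delta/7$ is equivalent to the stated bound on $\epsU$, and that $U$ being a linear cone is what lets Corollary \ref{cor:UVcovering} lift the normalized net on $U'$ to all of $U$.
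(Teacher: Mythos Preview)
Your proposal is correct and follows essentially the same route as the paper: build an $\epsU$-net on $U'$, push it through $B$ via Corollary~\ref{cor:UVcovering} to get an intrinsic-distance $\epsV$-net on $V'$, union-bound the concentration hypothesis over the (finitely many) centers, and then invoke Lemma~\ref{lemma:intrinsic:approx} with $\epsV<\delta/7$ to pass from the net to all of $V$. The paper's argument is slightly terser (it phrases the net directly as living in $V'$ rather than as the $B$-image of a net in $U'$), but the logic and the ingredients are identical.
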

\begin{proof}
   From \eqref{eq:lemma:randomsampling} it follows that it is sufficient to consider the set 
   $V'=\{v/\lVert v\rVert\,:\, 0\neq v\in V\}$. 
   From Corollary \ref{cor:UVcovering} we have for this set a covering $\epsV$--net $R$ with respect to an
   intrinsic distance
   of cardinality $|R|\leq e^{H_{\epsV}(V')}\leq e^{H_{\epsU}(U')}$ with $\epsU=\alpha\epsV/(\beta\eucmod)$. 
   Taking the union bound over $R$ asserts therefore that
   $|\lVert\Phi r\rVert-\lVert r\rVert|\leq\frac{\delta}{2}\lVert r\rVert$ with probability $\geq 1-e^{-(\gamma-H_\epsU(U'))}$ 
   for all $r\in R$ and the same $\Phi$. From  Lemma \ref{lemma:intrinsic:approx}, 
   if $\epsV=\frac{\beta\eucmod}{\alpha}\epsU<\delta/7$ there holds
   $|\lVert\Phi v\rVert-\lVert v\rVert|\leq\delta\lVert v\rVert$ for all $v\in V$ and the same $\Phi$ simultaneously
   with probability exceeding $1-e^{-(\gamma-H_\epsU(U'))}$.
   \qed
\end{proof}
This lemma shows that the concentration exponent $\gamma$ must be in the order of the entropy $H_\epsU(U')$ to ensure embedding with 
sufficiently high probability. By construction such a random embedding is a \emph{universal sampling method} where
the success probability in \eqref{eq:lemma:randomsampling} depends solely on the entropy and not on the
particular ''orientation'' of $U'$ which has several practical--relevant advantages as discussed already in the introduction.

\mysubsubsec{Randomizing Fixed RIP Matrices}
We extent the statement of Lemma \ref{lemma:randomsampling} to include randomized classical RIP matrices,
i.e. $\Phi$ is $(k,\delta_k)$--RIP if $|\lVert\Phi v\rVert_2-\lVert v\rVert_2|\leq\delta_k\lVert v\rVert_2$
for each $k$--sparse vector $v$.
The motivation behind is the use of structured or deterministic 
measurements with possibly fast and efficient transform implementation. 
Such measurements usually \emph{fail to be universal} and do not have concentration properties.
However, the important result of \cite{Krahmer2011} states that this can be achieved by a moderate amount
of randomization. Randomization can for example be done with a multiplier $D_\xi$ performing point--wise 
multiplication with a vector $\xi$ having i.i.d. $\pm1$ components, see here also
\cite{Krahmer2012} for more general $\xi$. We consider now $V\subseteq\mathbb{C}^n$ and
$\ell_2$--norms.
\begin{lemma}
   Let $B:U\rightarrow V$ and $U$ as in Lemma \ref{lemma:randomsampling}
   and the random matrix $D_\xi$ is distributed as given above.
   Let $\delta,\rho>0$ and $\Phi$ be $(k,\delta_k)$--RIP with $\delta_k\leq\delta/8$ and
   $k\geq 40(\rho+H_\epsU(U')+3\log(2))$.  
   Then
   \begin{equation}
      \Pr(\{\forall v\in V\,:\,|\lVert\Phi D_\xi v\rVert_2-\lVert v\rVert_2|\leq\delta\lVert v\rVert_2\})\geq 1-e^{-\rho}
   \end{equation}
   where $\epsU<\frac{\alpha}{7\beta\eucmod}\delta$.
   \label{lemma:fixedrip}
\end{lemma}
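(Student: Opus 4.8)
The plan is to view $\Psi:=\Phi D_\xi$ as a single \emph{random} linear map $V\rightarrow W$ and simply to invoke Lemma~\ref{lemma:randomsampling} for it. The hypotheses \propB{\alpha,\beta} for $B$, ``$U$ a linear cone with \propA{\eucmod}'', and $\epsU<\frac{\alpha}{7\beta\eucmod}\delta$ are already granted by the statement of Lemma~\ref{lemma:fixedrip}; the only additional ingredient Lemma~\ref{lemma:randomsampling} requires is the per-point concentration bound $\Pr(|\lVert\Psi r\rVert_2-\lVert r\rVert_2|\le\frac{\delta}{2}\lVert r\rVert_2)\ge 1-e^{-\gamma}$ for each fixed $r\in V$, and this will be supplied by the randomized-RIP theorem of \cite{Krahmer2011}.

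\textbf{Step 1 (per-point concentration).} Fix $r\in V\subseteq\mathbb{C}^n$. Since $\Phi$ is a real $(k,\delta_k)$--RIP matrix with $\delta_k\le\delta/8=(\delta/2)/4$ and $\xi$ has i.i.d.\ $\pm1$ entries, the main result of \cite{Krahmer2011}, applied at target distortion $\delta/2$ (this is exactly why the level $\delta/8$ is imposed), yields a Johnson--Lindenstrauss--type tail bound
\begin{equation}
   \Pr\big(\,\big|\,\lVert\Phi D_\xi r\rVert_2^2-\lVert r\rVert_2^2\,\big|>\tfrac{\delta}{2}\lVert r\rVert_2^2\,\big)\le 2e^{-ck}
\end{equation}
for a universal constant $c$. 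For complex $r$ one writes $r=a+ib$ with $a,b$ real, uses $\lVert\Phi D_\xi r\rVert_2^2=\lVert\Phi D_\xi a\rVert_2^2+\lVert\Phi D_\xi b\rVert_2^2$ and $\lVert r\rVert_2^2=\lVert a\rVert_2^2+\lVert b\rVert_2^2$, and takes a union bound over the two real events, which only affects the prefactor. Using $|t^2-1|\le s\Rightarrow|t-1|\le s$ for $t\ge 0$, this converts into $\Pr(|\lVert\Phi D_\xi r\rVert_2-\lVert r\rVert_2|\le\frac{\delta}{2}\lVert r\rVert_2)\ge 1-e^{-\gamma}$ with $\gamma:=ck-\log 2$ (absorbing the prefactor), i.e.\ precisely the hypothesis of Lemma~\ref{lemma:randomsampling}.

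\textbf{Step 2 (net and union bound).} Now apply Lemma~\ref{lemma:randomsampling} to the random map $\Psi=\Phi D_\xi$ with the concentration exponent $\gamma$ from Step~1. Its conclusion is that $|\lVert\Psi v\rVert_2-\lVert v\rVert_2|\le\delta\lVert v\rVert_2$ holds simultaneously for all $v\in V$ with probability at least $1-e^{-(\gamma-H_\epsU(U'))}$, provided $\epsU<\frac{\alpha}{7\beta\eucmod}\delta$. The hypothesis $k\ge 40(\rho+H_\epsU(U')+3\log 2)$ is calibrated so that $\gamma-H_\epsU(U')=ck-\log 2-H_\epsU(U')\ge\rho$ (the spare $\log 2$ terms covering the prefactor of Step~1 and the complex-to-real split, with $c\ge 1/40$), which is the claimed bound $1-e^{-\rho}$. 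Equivalently one can rerun the argument by hand: Corollary~\ref{cor:UVcovering} furnishes an $\epsV$--net $R$ for $V'$ with $|R|\le e^{H_\epsU(U')}$ for an intrinsic distance; Step~1 plus a union bound over $R$ controls $\Psi$ on $R$ for the same $\xi$; and Lemma~\ref{lemma:intrinsic:approx}, applicable since $\epsV=\frac{\beta\eucmod}{\alpha}\epsU<\delta/7$, transfers the estimate from $R$ to all of $V$.

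\textbf{Where the work is.} Nothing here is deep beyond \cite{Krahmer2011}; the delicate point is bookkeeping of constants: one must check that $\delta_k\le\delta/8$ is exactly the RIP level that forces level-$\delta/2$ concentration \emph{in the squared norm}, and that the universal exponent in \cite{Krahmer2011} together with its prefactor and the real/imaginary union bound all fit inside the $40(\,\cdot\,+3\log 2)$ budget on $k$. It is worth stressing that $V$ (hence $D_\xi V$) need not consist of sparse vectors, so $\Phi$ alone is useless on $V$; the multiplier $D_\xi$ is precisely what turns the restricted isometry of $\Phi$ into a genuine concentration estimate on the general low-complexity set $V=B(U)$.
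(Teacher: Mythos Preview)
Your proof is correct and essentially matches the paper's approach. The only cosmetic difference is that the paper invokes the Krahmer--Ward theorem in its ``batch'' form directly on the whole net $R$ (with the zero vector adjoined, whence the $3\log 2$), and then applies Lemma~\ref{lemma:intrinsic:approx} to extend from $R$ to $V$; you instead extract a per-point concentration from Krahmer--Ward and route through Lemma~\ref{lemma:randomsampling}, which packages the union bound and the extension together --- as you yourself note in your ``Equivalently'' remark.
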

\begin{proof}
   For a given $(k,\delta_k)$--RIP matrix $\Phi$ with
   $k\geq 40(\rho+p+\log(4))$ and $\delta_k\leq\frac{\delta}{8}$ it follows from \cite{Krahmer2011}:
   $\Phi D_\xi$ is
   with probability $\geq 1-e^{-\rho}$ a $\frac{\delta}{2}$--Johnson--Lindenstrauss--embedding for any 
   point cloud of cardinality $e^p$.
   Now, from Corollary \ref{cor:UVcovering}, there exists an $\epsV$--net $R$ for $V'$ of
   cardinality $|R|\leq e^{H_\epsV}$ where $H_\epsV=H_\epsV(V')\leq H_\epsU(U')$ 
   with $\epsU=\alpha\epsV/(\beta\eucmod)$. When adding the zero--element to the point cloud
   it has cardinality:
   \begin{equation}
      |R|\leq e^{H_\epsV}+1=e^{H_\epsV}(1+e^{-H_\epsV})\leq 2e^{H_\epsV}=e^{H_\epsV+\log(2)}
   \end{equation}
   Therefore, set $p=H_\epsV+\log(2)$ (or the next integer). From \cite{Krahmer2011} it follows then that
   for each $k\geq40(\rho+H_\epsV+\log(2)+\log(4))=40(\rho+H_\epsV+3\log(2))$ 
   the point cloud $R$ is mapped almost--isometrically (including norms since $0$ is included), i.e.
   with probability $\geq 1-e^{-\rho}$ we have 
   $|\lVert\Phi D_\xi r\rVert_2^2-\lVert r\rVert_2^2|\leq\frac{\delta}{2}\lVert r\rVert^2_2$ 
   for all $r\in R$ which implies:
   \begin{equation}
      \Pr(\{\forall r\in R\,:\,|\lVert\Phi D_\xi r\rVert_2-\lVert r\rVert_2|\leq\frac{\delta}{2}\lVert r\rVert_2\})\geq 
      1-e^{-\rho}.
   \end{equation}
   We will choose $\epsV=\frac{\beta\eucmod}{\alpha}\epsU<\delta/7$. Then,
   since $R$ is an $\epsV$--net for $V'$ and $U$ has property \propA{\eucmod} inducing
   an intrinsic decomposition and distance,
   it follows from Lemma \ref{lemma:domainapprox} that: 
   \begin{equation}
      \Pr(\{\forall v\in V\,:\,|\lVert\Phi D_\xi v\rVert_2-\lVert v\rVert_2|\leq\delta\lVert v\rVert_2\})
      \geq 1-e^{-\rho}\quad\qed
   \end{equation}
\end{proof}

\mysubsubsec{Randomizing Random RIP Matrices}
We extent Lemma \ref{lemma:fixedrip} to random structured RIP models which itself are in many cases not universal and
can therefore without further randomization not be used directly in the generalized framework.
Assume an ''$(M,p)$ RIP model'', meaning
that the $m\times n$ random matrix $\Phi$ is 
$(k,\delta_k)$--RIP with probability $\geq 1-e^{-\gamma}$ and $\delta_k\leq\delta$ if $m\geq c\delta^{-2} k^p M(n,k,\gamma)$
for a constant $c>0$. Define for a given $U$:
\begin{equation}
   k_\epsU(\rho):=40(\rho+H_\epsU(U')+3\log(2))
\end{equation}
We have the following lemma:
\begin{lemma}
   Let $\delta>0$ and $D_\xi$, $B:U\rightarrow V$ and $U$ as in Lemma \ref{lemma:fixedrip}.
   Let $\Phi$ be an $m\times n$ random $(M,p)$--RIP model (independent of $D_\xi$) and 
   $k_\epsU(\rho)$ as given above for $\epsU<\frac{\alpha}{7\beta\eucmod}\delta$.
   Then $\Phi D_\xi$ is universal in the sense that:
   \begin{equation}
      \Pr(\{\forall v\in V\,:\,|\lVert\Phi D_\xi v\rVert_2-\lVert v\rVert_2|\leq\delta\lVert v\rVert_2\})\geq 
      1-(e^{-\rho}+e^{-\gamma})
      \label{eq:lemma:randomrip}
   \end{equation}
   if $m\geq 64c\delta^{-2} k_\epsU(\rho)^p M(n,k_\epsU(\rho),\gamma)$.
   \label{lemma:randomrip}
\end{lemma}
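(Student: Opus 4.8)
\medskip
\noindent\textbf{Proof plan.}\quad The plan is to condition on the realization of the structured random matrix $\Phi$ and thereby reduce the claim to Lemma~\ref{lemma:fixedrip}, applied to a fixed RIP matrix together with the independent randomization $D_\xi$. The genuinely new point relative to Lemma~\ref{lemma:fixedrip} is only the control of the failure probability of the random RIP model, which is the source of the extra term $e^{-\gamma}$ in \eqref{eq:lemma:randomrip}.

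First I would call the $(M,p)$--RIP model not at accuracy $\delta$ but at accuracy $\delta/8$: by its definition, the $m\times n$ matrix $\Phi$ is $(k,\delta_k)$--RIP with $\delta_k\leq\delta/8$ with probability at least $1-e^{-\gamma}$ provided $m\geq c(\delta/8)^{-2}k^p M(n,k,\gamma)=64c\,\delta^{-2}k^p M(n,k,\gamma)$. Choosing $k=k_\epsU(\rho)=40(\rho+H_\epsU(U')+3\log(2))$ turns this into exactly the hypothesis $m\geq 64c\,\delta^{-2}k_\epsU(\rho)^p M(n,k_\epsU(\rho),\gamma)$ of the lemma. Write $\mathcal G$ for the event that $\Phi$ is $(k_\epsU(\rho),\delta/8)$--RIP, so that $\Pr(\mathcal G)\geq 1-e^{-\gamma}$.

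Next, since $D_\xi$ is independent of $\Phi$, I would fix an arbitrary deterministic realization of $\Phi$ inside $\mathcal G$. Such a $\Phi$ is a fixed $(k,\delta_k)$--RIP matrix with $\delta_k\leq\delta/8$ and $k=k_\epsU(\rho)\geq 40(\rho+H_\epsU(U')+3\log(2))$, so every hypothesis of Lemma~\ref{lemma:fixedrip} holds with the same $\epsU<\frac{\alpha}{7\beta\eucmod}\delta$. Applying that lemma over the randomness of $D_\xi$ alone gives
\begin{equation*}
   \Pr\bigl(\{\forall v\in V:\ |\lVert\Phi D_\xi v\rVert_2-\lVert v\rVert_2|\leq\delta\lVert v\rVert_2\}\bigm|\Phi\bigr)\geq 1-e^{-\rho}\quad\text{for every }\Phi\in\mathcal G .
\end{equation*}

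Finally I would assemble the two estimates by a Fubini/union-bound argument: the failure event is contained in $\mathcal G^{c}\cup\{\Phi\in\mathcal G\text{ and }D_\xi\text{ fails}\}$, whose two parts have probability at most $e^{-\gamma}$ and $e^{-\rho}$, which yields \eqref{eq:lemma:randomrip}. I do not expect an analytic obstacle: the substantive work — covering $V'$ via the entropy $H_\epsU(U')$ of $U'$ using properties \propA{\eucmod} and \propB{\alpha,\beta} (Corollary~\ref{cor:UVcovering} and Lemma~\ref{lemma:domainapprox}) and the Krahmer--Ward randomization of a fixed RIP matrix (Lemma~\ref{lemma:fixedrip}) — is already done. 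The only delicate step is the constant bookkeeping: one must run the RIP model at accuracy $\delta/8$ rather than $\delta$ (this is exactly where the factor $64$ in the sample count comes from), and one must check that the sparsity level $40(\rho+H_\epsU(U')+3\log(2))$ required by Lemma~\ref{lemma:fixedrip} is precisely the $k$ that is fed into the RIP model.
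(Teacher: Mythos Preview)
Your proposal is correct and follows exactly the paper's approach: invoke the $(M,p)$--RIP model at accuracy $\delta'=\delta/8$ (producing the factor $64$) with $k=k_\epsU(\rho)$, and then apply Lemma~\ref{lemma:fixedrip} on the good event. Your write-up is in fact more explicit than the paper's terse proof, which simply says ``the claim follows'' without spelling out the conditioning on $\Phi\in\mathcal G$ and the union bound over the two failure events.
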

\begin{proof}
   The proof follows directly from Lemma \ref{lemma:fixedrip}. Define $\delta'=\delta/8$. Then the model assumptions assert that
   for $m\geq c\delta'^{-2} k_\epsU(\rho)^p M(n,k_\epsU(\rho),\gamma)$ the matrix $\Phi$ has $(k_\epsU(\rho),\delta_k)$--RIP
   with $\delta_k\leq\delta'=\delta/8$ with probability $\geq 1-e^{-\gamma}$. Thus, by Lemma  \ref{lemma:fixedrip}
   for any $\rho>0$ the claim follows. \qed
\end{proof}
The best $(\rho,\gamma)$--combination for a fixed probability bound $\geq 1-e^{-\lambda}$ can be estimated by 
minimizing $k_\epsU(\rho)^p M(n,k_\epsU(\rho),\gamma)$.
We will sketch this for random \emph{partial circulant matrices} $P_\Omega\hat D_\eta$. 
Let $F=\left(e^{i2\pi kl/n}\right)_{k,l=0}^{n-1}$ be the $n\times n$--matrix of the 
(non--unitary) discrete Fourier transform. Then, 
$\hat D_\eta:=F^{-1} D_\eta F$ 
is an $n\times n$ circulant matrix with $\hat\eta:=F\eta$ on its first row (Fourier multiplier $\eta$) and
the $m\times n$ matrix $P_\Omega:=\frac{1}{|\Omega|}1_{\Omega}$ is the normalized 
projection onto coordinates in the set $\Omega\subset[1,\dots n]$ of size $m=|\Omega|$.
Random convolutions for compressed sensing are already proposed in \cite{Romberg2009}. In  \cite{Tropp:RandDemod} a related 
approach has been called \emph{random demodulator} and is used for sampling frequency--sparse
signals via convolutions on the Fourier side (being not suitable for sporadic communication tasks). 
Measurement matrices $P_\Omega\hat D_\eta$ are systematically investigated in \cite{Rauhut2012} showing that
$(k,\delta_k)$--RIP properties hold in the regime $m=\Order((k\log n)^\frac{3}{2})$. Finally, linear scaling in $k$ 
(and this will be necessary for the overall additivity statement in the bilinear setting) has been 
achieved in \cite{Krahmer2012}. But $P_\Omega\hat D_\eta$ is \emph{not universal} meaning that
the signal has to be $k$--sparse in the canonical basis. 

Therefore, we propose the \emph{universal random demodulator} $P_\Omega\hat D_\eta D_\xi$
which still has an efficient FFT--based implementation but is independent of the sparsity domain. 
Such random matrices work again in our framework:
\begin{lemma}
   Let be $D_\xi$, $B:U\rightarrow V$ and $U$ as in Lemma \ref{lemma:fixedrip}.
   If $\Phi=P_\Omega\hat D_\eta$ is a $m\times n$ partial random circulant matrix with $\eta$ 
   being i.i.d. zero--mean, unit--variance and subgaussian vector
   with:
   \begin{equation}
      m\geq 64c\delta^{-2}(\lambda+h_\epsU)\max((\log(\lambda+h_\epsU)\log(n))^2,\lambda+\log(2))
   \end{equation}
   where $h_\epsU=H_\epsU(U)+4\log(2)$. If $\epsU<\frac{\alpha}{7\beta\eucmod}\delta$
   the LHS of statement \eqref{eq:lemma:randomrip} holds with probability
   $\geq 1-e^{-\lambda}$.
   \label{lemma:random:partialcirculant}
\end{lemma}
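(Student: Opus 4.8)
The plan is to obtain this lemma as a specialization of Lemma~\ref{lemma:randomrip}, exploiting the fact that partial random circulant matrices fit the abstract ``$(M,p)$ RIP model'' with a genuinely \emph{linear} dependence on the sparsity. First I would recall the restricted isometry estimate for such matrices: by the chaos--process bound of \cite{Krahmer2012}, if $\eta$ is i.i.d.\ zero--mean, unit--variance and subgaussian then $P_\Omega\hat D_\eta$ is $(k,\delta_k)$--RIP with $\delta_k\leq\delta$ and probability at least $1-e^{-\gamma}$ whenever
\begin{equation*}
   m\geq c\,\delta^{-2}\,k\,\max\bigl\{(\log k\,\log n)^2,\,\gamma\bigr\}
\end{equation*}
for an absolute constant $c>0$. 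This is precisely an $(M,p)$--RIP model with $p=1$ and $M(n,k,\gamma)=\max\{(\log k\,\log n)^2,\gamma\}$, and it is independent of the Rademacher multiplier $D_\xi$, so $P_\Omega\hat D_\eta$ is an admissible choice of $\Phi$ in Lemma~\ref{lemma:randomrip}.

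Next I would invoke Lemma~\ref{lemma:randomrip} with this model and with the balanced choice of slack parameters $\rho=\gamma=\lambda+\log 2$, so that the failure probability $e^{-\rho}+e^{-\gamma}$ collapses to $2e^{-\lambda-\log 2}=e^{-\lambda}$. With this choice the sparsity level demanded there is
\begin{equation*}
   k_\epsU(\rho)=40\bigl(\rho+H_\epsU(U')+3\log 2\bigr)=40\bigl(\lambda+H_\epsU(U')+4\log 2\bigr)=40(\lambda+h_\epsU),
\end{equation*}
recalling $h_\epsU=H_\epsU(U)+4\log 2$. Feeding $p=1$, $k=k_\epsU(\rho)$ and $\gamma=\lambda+\log 2$ into the threshold $m\geq 64c\,\delta^{-2}k_\epsU(\rho)^pM(n,k_\epsU(\rho),\gamma)$ of Lemma~\ref{lemma:randomrip} yields
\begin{equation*}
   m\geq 64c\,\delta^{-2}\cdot 40(\lambda+h_\epsU)\cdot\max\bigl\{(\log(40(\lambda+h_\epsU))\,\log n)^2,\;\lambda+\log 2\bigr\},
\end{equation*}
and absorbing the factor $40$ together with the additive $\log 40$ inside the logarithm into the universal constant reproduces exactly the stated bound; Lemma~\ref{lemma:randomrip} then delivers \eqref{eq:lemma:randomrip} for $\Phi D_\xi=P_\Omega\hat D_\eta D_\xi$ with $\epsU<\tfrac{\alpha}{7\beta\eucmod}\delta$ and probability at least $1-e^{-\lambda}$. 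All structural hypotheses pass through unchanged: $U$ is a linear cone with property \propA{\eucmod}, $B:U\rightarrow V$ has property \propB{\alpha,\beta}, and $D_\xi$ is independent of $\eta$.

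The step that needs the most care is bookkeeping rather than any new probabilistic input. One must cast the circulant RIP estimate of \cite{Krahmer2012} into exactly the $(M,p)$ normalization above --- in particular verifying that its sparsity dependence is truly linear, which is the refinement of that work over \cite{Rauhut2012} --- and then track the $\delta/8$ degradation introduced inside Lemma~\ref{lemma:fixedrip} (the source of the prefactor $64=8^2$) through the composition with $D_\xi$. Once those constants are aligned, the proof is a direct substitution into Lemma~\ref{lemma:randomrip}, with nothing further to verify.
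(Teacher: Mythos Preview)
Your proposal is correct and follows exactly the same route as the paper: identify the partial random circulant as an $(M,p)$--RIP model with $p=1$ and $M(n,k,\gamma)=\max\{(\log k\,\log n)^2,\gamma\}$ via \cite{Krahmer2012}, then apply Lemma~\ref{lemma:randomrip} with the balanced choice $\rho=\gamma=\lambda+\log 2$. You actually fill in more of the bookkeeping (the computation of $k_\epsU(\rho)=40(\lambda+h_\epsU)$ and the absorption of the factor $40$ into the constant) than the paper does.
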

\begin{proof}
   From \cite[Theorem 4.1]{Krahmer2012} we have that:
   \begin{equation}
      M(n,k,\gamma)=\max((\log(k)\log(n))^2,\gamma)
   \end{equation} and $p=1$ in Lemma \ref{lemma:randomrip}.
   We choose $\rho=\gamma=:\lambda+\log(2)$ (being suboptimal). \qed 
\end{proof}
Since this choice $\rho$ and $\gamma$ is not necessarily optimal the logarithmic order in $n$ might be improved. However, 
for fixed $\lambda$ and sufficiently small $\epsU$ we have
$m=\Order(h_\epsU[\log(h_\epsU)\log(n)]^2)$ which is sufficient to preserve, for 
example, additive scaling (up to logarithms and large $n$) for the bilinear sparse models 
once $\epsU$ does not depend on $n$ and where $h_\epsU=\Order((s+f)\log n)$.

\mysubsubsec{Stable Embedding of Bilinear Signal Sets}
Finally, we come back now to the application for bilinear inverse problems with sparsity priors as discussed in the
introduction.  From the communication theoretic and signal processing point of view we will consider the problems (i)
and (ii) on page \pageref{item:task:1} and we give the results for both cases in one theorem.
Although we will summarize this for generic random measurements due to concentration as in Lemma \ref{lemma:randomsampling},
it follows from Lemma \ref{lemma:random:partialcirculant} that \emph{the scaling even remains 
  valid in a considerable de--randomized setting}.  
The assertion $(i)$ in the next theorem was already given in \cite{WJ12b}.
Recall, that $\slmat_{s,f}\subseteq \Complex^{n\times n}$ are
the $(s,f)$--sparse rank--one matrices as defined
in \eqref{eq:slmat}.
\begin{theorem}
   Set (i) $U=\slmat_{s,f}$ and $\kappa=1$ or (ii) $U=\slmat_{s,f}-\slmat_{s,f}$ and $\kappa=2$ equipped with
   the Frobenius norm.
   Let be $B:U\rightarrow V\subseteq\Complex^n$ linear 
   with property \propB{\alpha,\beta} 
   and $\lVert\cdot\rVert$ be a norm in $V$. If
   $\alpha,\beta$ not depend on $n$, 
   $\Phi\in\Complex^{m\times n}$ obeys
   $\Pr(\{|\lVert\Phi r\rVert-\lVert r\rVert|\leq\frac{\delta}{2}\lVert r\Vert\})\geq 1-e^{-c\delta^2m}$ 
   for each $r\in V$ 
   and $m\geq c''\delta^{-2}(s+f)\log(n/(\kappa\min(s,f)))$ it follows that:
   \begin{equation}
      \Pr(\{\forall v\in V\,:\,|\lVert\Phi v\rVert-\lVert v\rVert|\leq\delta\lVert v\rVert\})\geq 1-e^{-c'm}
   \end{equation}
   were  $c',c''>0$ (only depending on $\del$).
   \label{lemma:randomsampling:bilinear}
\end{theorem}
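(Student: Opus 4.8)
This is essentially a specialization of Lemma~\ref{lemma:randomsampling}, the only real work being to place both choices of $U$ into the ``low--rank and sparse matrix'' class of Section~\ref{subsubsec:SLa}. So the plan is: (i) check that $U$ is a central--symmetric linear cone with property \propA{\sqrt2}; (ii) bound its $\epsU$--entropy by \eqref{eq:entropy:slA}; and (iii) feed the hypothesis into Lemma~\ref{lemma:randomsampling} with concentration exponent $\gamma=c\delta^2m$ and absorb the entropy into $\gamma$ by choosing the constant $c''$ large. Note that the substantive structural assumption --- that $B$ has \propB{\alpha,\beta} with $\alpha,\beta$ not growing with $n$ --- is handed to us here; verifying it for concrete $B$ (sparse convolutions) is the content of Section~\ref{sec:conv:rnmp} and is where the genuine difficulty resides.

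For (i) I would write $U=\bigcup_l U_l$ indexed by pairs $l=(R,C)$ of a $\kappa s$--element row set and a $\kappa f$--element column set, with $U_l$ the rank--$\le\kappa$ matrices supported inside $R\times C$; then $\slmat_{s,f}\subseteq\bigcup_l U_l$ in case (i), and in case (ii) $\slmat_{s,f}-\slmat_{s,f}\subseteq\bigcup_l U_l$ because $x\otimes y-x'\otimes y'$ has rank $\le2$ with row and column supports of size $\le2s$ and $\le2f$. Each $U_l$ is a central--symmetric cone, as is $U$. If $u,\rho$ lie in a common $U_l$, then $u-\rho$ is supported inside $R\times C$ and has rank $\le2\kappa$; splitting its singular value decomposition into the leading $\kappa$ and the trailing $\kappa$ singular components gives $u-\rho=u_1+u_2$ with $u_1,u_2\in U_l$ and $\langle u_1,u_2\rangle=0$ in the Hilbert--Schmidt inner product, so Cauchy--Schwarz yields $\lVert u_1\rVert+\lVert u_2\rVert\le\sqrt2\,\lVert u-\rho\rVert$. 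Hence \propA{\eucmod} holds with $\eucmod=\sqrt2$. For (ii) I would cover each $U_l'$ by the rank--$\kappa$, format $(\kappa s)\times(\kappa f)$ net estimate of \cite[Lemma~3.1]{Candes09:LMR} and union--bound over the $L=\binom{n}{\kappa s}\binom{n}{\kappa f}$ supports, which is precisely the derivation of \eqref{eq:entropy:slA}:
\begin{equation*}
   H_{\epsU}(U')\le(\kappa s+\kappa f+1)\,\kappa\log(9/\epsU)+\kappa(s+f)\log\tfrac{en}{\kappa\min(s,f)}.
\end{equation*}

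For (iii), choose $\epsU:=\dfrac{\alpha\delta}{8\beta\eucmod}$ with $\eucmod=\sqrt2$, so that $\epsU<\dfrac{\alpha\delta}{7\beta\eucmod}$ as required by Lemma~\ref{lemma:randomsampling}; because $\alpha,\beta$ do not depend on $n$, the term $\log(9/\epsU)=\log\!\big(72\sqrt2\,\beta/(\alpha\delta)\big)$ depends only on $\delta$ (and the fixed $\alpha,\beta$). The hypothesis is exactly the premise of Lemma~\ref{lemma:randomsampling} with $\gamma=c\delta^2m$, so that lemma gives the asserted uniform norm bound on all of $V=B(U)$ with probability at least $1-e^{-(c\delta^2m-H_{\epsU}(U'))}$. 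To finish, write the displayed bound as $H_{\epsU}(U')\le C_1(\alpha,\beta,\delta)(s+f)+C_2(s+f)\log\frac{n}{\kappa\min(s,f)}$ with $C_2$ absolute and $\kappa\le2$, and pick $c''$ (depending only on $\delta$ and the fixed $\alpha,\beta$) so large that $m\ge c''\delta^{-2}(s+f)\log(n/(\kappa\min(s,f)))$ forces $H_{\epsU}(U')\le\tfrac12c\delta^2m$; then $c\delta^2m-H_{\epsU}(U')\ge\tfrac12c\delta^2m=:c'm$, which is the claim. The only mildly delicate point in this last step is dominating the $\delta$--dependent additive piece $C_1(s+f)$, which is harmless for $c''$ large provided $\log(n/(\kappa\min(s,f)))$ is bounded away from $0$, i.e.\ in the compressive regime of interest. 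Finally, since \propA{\sqrt2} and the net estimate are the only structural inputs, replacing Lemma~\ref{lemma:randomsampling} by Lemma~\ref{lemma:random:partialcirculant} in this argument shows that the same additive scaling persists for the de--randomized universal random demodulator $P_\Omega\hat D_\eta D_\xi$, up to the extra polylogarithmic factors recorded there.
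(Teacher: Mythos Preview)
Your proof is correct and follows essentially the same route as the paper: verify \propA{\sqrt2} via the SVD splitting, invoke the entropy bound \eqref{eq:entropy:slA}, and feed everything into Lemma~\ref{lemma:randomsampling} with $\gamma=c\delta^2m$ and $\epsU=\frac{\alpha\delta}{8\beta\sigma}$. You are slightly more explicit than the paper in spelling out the decomposition $U=\bigcup_l U_l$ and the orthogonal two--term SVD split, and you correctly flag the implicit regime assumption that $\log(n/(\kappa\min(s,f)))$ be bounded away from zero so that the $n$--independent additive piece $C_1(\alpha,\beta,\delta)(s+f)$ can be absorbed---the paper glosses over this point.
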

\begin{proof}
   In both cases $U$ has property \propA{\eucmod} with $\eucmod=\sqrt{2}$.
   Fix exemplary $\epsU:=\frac{\alpha}{8\beta\sigma}\delta<\frac{\alpha}{7\beta\sigma}\delta$ for Lemma \ref{lemma:randomsampling}.
   From \eqref{eq:entropy:slA} we have in both cases (i) and (ii):
   \begin{equation}
      \begin{split}
         H_\epsU(U)
         &\leq(s+f+\frac{1}{2})4\kappa\log(9/\epsU)+\kappa(s+f)\log(n/(\kappa\min(s,f)))\\
         &=(s+f+\frac{1}{2})4\kappa\log\frac{8\cdot9\eucmod\beta}{\alpha\delta}+\kappa(s+f)\log\frac{n}{\kappa\min(s,f)}=:h_\delta\\
      \end{split}
   \end{equation}
   where $(\alpha,\beta)$ are the bounds for $B$ in \eqref{eq:approx:rnmp} and independent of $n$. Let
   $\gamma=c\delta^2m$ for some $c>0$. We have from Lemma 
   \ref{lemma:randomsampling}:
   \begin{equation}
      \begin{split}
         \Pr(\{\forall v\in B(U)\,:\,|\lVert\Phi v\rVert-\lVert v\rVert|\leq\delta\lVert v\rVert\})
         &\geq 1-e^{-(c\delta^2 m-h_\delta)}.
      \end{split}
   \end{equation}
   To achieve exponential probability of the form $\geq 1-\exp(-c' m)$ we have to ensure a constant $c'>0$ such that
   $c\delta^2 m-h_\delta\geq c'm$. In other words $\delta^2(c-\frac{\delta^{-2}h_\delta}{m})\geq c'>0$
   meaning that must be a constant $c''$ such that the number of measurements fulfill
   $m\geq c''\delta^{-2}(s+f)\log(n/(\kappa\min(s,f)))$. \qed
\end{proof}

\mysubsubsec{Final Remarks on Recovery}
In this section we have solely discussed embeddings. Hence, it is not at all clear that one can achieve
recovery in the $s+f$--regime even at moderate complexity. A negative result has been shown here already in
\cite{Oymak2012} for multi--objective convex programs which are restricted to the 
Pareto boundary caused by the individual objectives. On the other hand, greedy algorithms or alternating minimization algorithms
like the ''sparse power factorization'' method \cite{Lee:SPF} seems to be capable to operate in the desired
regime once the algorithm is optimally initialized.

\if0
\subsection{Nonconvex and Convex Recovery}

\mysubsubsec{Differences to Low Rank Matrix Recovery}
Here we will collect some points why low rank matrix recovery may not be sufficient for this type of problem.
One problem is that the \emph{rank--RIP conditions for convex recovery are usually not fulfilled}
due to the anisotropy. But nonconvex recovery methods might be exist.

\mysubsubsec{Single Signal Case}
Here, the goal is to establish that with sufficient measurements (as determined in the previous sections) it 
possible to recover the signal $\rho_0\in X$ by  minimizing a certain function $F(\rho)$ whereby $\rho,\rho_0\in X$ are
vectors or matrices. Since we have multiple objectives here, we consider a certain convex combination
of functions $\{f_i\}$.
\begin{equation}
   F(\rho)=\sum_{i} w_i f_i(\rho)
\end{equation}
with $\sum_i w_i=1$ whereby all $f_i$ are subadditive (fulfill the triangle inequality), are symmetric $f_i(\rho)=f_i(-\rho)$ and 
$f(\rho)=0$ iff $\rho=0$.  However, we follow here the strategy as in \cite{Oymak2012} and 
use the normalization $f_i(\rho)=1$, i.e. we take:
\begin{equation}
   F(\rho)=\sum_{i} w_i \frac{f_i(\rho)}{f_i(\rho_0)}
   \label{eq:nonconvex:objective}
\end{equation}
such that $F(\rho_0)=1$. Let us abbreviate $A=\Phi B$.
We consider the minimization program:
\begin{equation}
   \min_{\rho\neq 0} F(\rho)\quad\text{s.t.}\quad \lVert A(\rho-\rho_0)\rVert\leq\sigma
\end{equation}
Now, in the noiseless case $A$ has to be such that for each $\rho_0$ the descent set $D_{\rho_0}:=\{\rho\,:\,F(\rho)\leq F(\rho_0)\}$ has only trivial intersection
with the null space $N(A)$ of $A$. For $\sigma>0$ we need that
$\lVert A(\rho-\rho_0)\rVert>\sigma$ for all $\rho D_{\rho_0}$.
The strategy is to enlarge this set appropriately.
 
\mysubsubsec{Rank--one and Sparse Matrices}
Let us take here $f_1(\rho)=\rank(\rho)$ and $f_2(\rho)=\lVert\rho\rVert_0$. Choose $w_1=w_2=1/2$:
\begin{equation}
   \begin{split}
      D_{\rho_0}
      &=\{\rho\,:\,\rank(\rho)+\frac{\lVert\rho\rVert_0}{\lVert\rho_0\rVert_0}\leq 2\}\\
      &=\{\rho\,:\,\rank(\rho)=1\,\,\text{and}\,\,\lVert\rho\rVert_0\leq\lVert\rho_0\rVert_0\}\\
   \end{split}      
\end{equation}
However, this is a crazy descent set. The usual approach is to embed this in:
\begin{equation}
   \begin{split}
      D_{\rho_0}
      &\subseteq\{\rho\,:\rank(\rho)\leq 2\,\,\text{and}\,\,\lVert\rho\rVert_0\leq 2\lVert\rho_0\rVert_0\,\}
   \end{split}      
\end{equation}
If $B$ linear and bi--lipschitz and for $m=\Order(2(s+f)\log(n))$ we have (see Section \ref{subsubsec:SLa}) for a $\delta<1$ such that
any feasible $\rho\in D_{\rho_0}$ fulfills:
\begin{equation}
   \sigma\geq \lVert \Phi B(\rho-\rho_0)\rVert\geq (1-\delta)\lVert B(\rho-\rho_0)\rVert\geq
   \alpha(1-\delta)\lVert\rho-\rho_0\rVert
\end{equation}

\mysubsubsec{Low--rank and Sparse Matrices}
If $F(\rho-\rho_0)>2 F(\rho_0)$ then:
\begin{equation}
   \begin{split}
      F(\rho)=F(\rho-\rho_0+\rho_0)\leq F(\rho-\rho_0)+F(-\rho_0)\overset{!}{<}F(\rho_0)=1
   \end{split}
\end{equation}
Visa verse this means that:
\begin{equation}
   \begin{split}
      D_{\rho_0}
      &\subseteq\{\rho\,:\, F(\rho-\rho_0)\leq F(\rho_0)=1 \}\\
      &\subseteq\bigcap_{i=1}^d\{\rho\,:\, w_i f_i(\rho-\rho_0)\leq f_i(\rho_0)\}=:S_{\rho_0}\\
   \end{split}
\end{equation}
by simply requiring that each contribution in \eqref{eq:nonconvex:objective} should not exceed $1$.
Let us take here $f_1(\rho)=\rank(\rho)$ and $f_2(\rho)=\lVert\rho\rVert_0$. Choose $w_1=w$ and $w_2=1-w$.
Then:
\begin{equation}
   \begin{split}
      S_{\rho_0}=\{\rho\,:\, \rank(\rho-\rho_0)\leq\frac{\rank(\rho_0)}{w}\,\,\text{and}\,\,
      \lVert\rho-\rho_0\rVert_0\leq\frac{\lVert\rho_0\rVert_0}{1-w}
      \}\\
   \end{split}
\end{equation}

Compressive Signal Demixing: First contributions in \cite{Wright:cPCA13,McCoy:cdemix13}

\mysubsubsec{Iterative Hard Thresholding}

check here 
\emph{Blumensath, ''Compressed Sensing with Nonlinear Observations and Related Nonlinear Optimisation Problems''}
\url{http://arxiv.org/abs/1205.1650}

\mysubsubsec{Nuclear Norm Minimization}
A generic rank-$r$ $n_1\times n_2$-matrix has $r(n_1+n_2-r)$ 
degrees of freedom and can be recovered from $m=\Order(r(n_1+n_2))$ random observations \cite{Chandrasekaran2010}.
If $x$ is $s$--sparse and $y$ is $f$--sparse, then the rank-one matrix $x\otimes y$ is $sf$--sparse but
only with $s+f$ independent parameters. Standard $\Ell{1}$--minimization recovers then
$x\otimes y$ with $\Order(sf\log(n_1n_2/(sf)))$ observations.

There exists now several non--convex programs for estimating such matrices. 
Let be $\rho_0=x\otimes y$ the unknown $(s,f)$--sparse rank--one matrix to be estimated.
In \cite{Oymak2012} the following multi--objective optimization program has been considered:
\begin{equation}
   \min_\rho\left(\rank(\rho)+\frac{\lVert\rho\rVert_{0,2}}{f}+
   \frac{\lVert\rho^*\rVert_{0,2}}{s}\right) \quad\text{s.t.}\quad G(\rho)=G(\rho_0)
 \label{oymak2012:SLnonconvex}
\end{equation}
where $G$ is a Gaussian operator (recall that in our case $G=\Phi B$) and 
$\lVert\rho\rVert_{0,2}$ denotes the number of non--zero columns. Note, that we are here
in the RIPless framework since $\rho_0$ is already fixed and afterwards we perform
measurements via $G(\cdot)$. 
In \cite[Theorem 3.3, (a2)]{Oymak2012} is was shown that for:
\begin{equation}
   m\geq c_2\max(s+f,s\log\frac{n_1}{s},f\log \frac{n_2}{f})
\end{equation}
this programs recovers $\rho_0$ with probability $1-\exp(-c_1 m)$. 
Note that this statements is far wrong being a stable, since it essentially
states that the null space $\mathcal{N}(G)\cap S=\{0\}$ where $S$ is the set of
$n_1\times n_2$ matrices with support in a certain $6s\times 6f$ submatrices of rank at 
most $6$. Let $f$ be the objective in \eqref{oymak2012:SLnonconvex}.
If $f(\rho-\rho_0)\leq 6$ then $\rho-\rho_0\in S$. Furthermore,
since $f$ fulfills the triangle inequality we have $f(\rho)\geq f(\rho-\rho_0)-f(-\rho_0)>f(\rho_0)=3$.
Thus, $\mathcal{N}(G)\cap S=\{0\}$ ensures $f(\rho-\rho_0)>6$ for every $\rho-\rho_0\in\mathcal{N}(G)$
and therefore $f(\rho_0)$ is the unique minimum. However, $G(\rho)\neq 0$ with overwhelming probability
whenever $m\geq\min(s+f,s\log\frac{n_1}{s},f\log \frac{n_2}{f})$ (here is a min/max-problem in the paper).

However,
the term--by--term convex relaxation:
\begin{equation}
   \min_\rho\left(\lVert \rho\rVert_*+\lambda_1\lVert\rho\rVert_{1,2}+
   \lambda_2\lVert\rho^*\rVert_{1,2}\right) \quad\text{s.t.}\quad G(\rho)=G(\rho_0)
\end{equation}
fails with probability $\geq 1-\exp(-c_1 d)$ whenever $m\leq c_2 d$ and 
$d=\min(sn_1,fn_2,n_1+n_2)$ for any choice of $\lambda_1$ and $\lambda_2$.

Another convex relaxation is:
\begin{equation}
   \min_\rho\lVert \rho\rVert_*+\lambda\lVert\rho\rVert_{\Ell{1}} \quad\text{s.t.}\quad \lVert \Phi B(\rho)-b\rVert_{\Ell{2}}\leq\epsilon
\end{equation}

\emph{Compressed Sensing of Simultaneous Low-Rank and Joint-Sparse Matrices}\\
\url{http://arxiv.org/abs/1211.5058}\\[.5em]

\fi


\section{Sparse Convolutions and Stability}
\label{sec:conv:rnmp}

In this section, we will consider the central condition \eqref{eq:approx:rnmp} for
the special case where the bilinear mapping $B$ refers to convolutions representing for example
basic single--dispersive communication channels.
Let us start with the case where $B(\Inx,\Iny)=\Inx\circledast\Iny$ is
given as the \emph{circular} convolution in $\C^n$. Denote with $k\ominus i$ the difference
$k-i$ modulo $n$. Then this bilinear mapping is defined as:
\begin{align}
  (\vx \circledast \vy)_{k}= \sum_{i=0}^{n-1} x_i y_{k\ominus i} \foral k \in \{0,\dots,n-1\} \label{eq:defcconv}.
\end{align}
Our analysis was originally motivated by the work in \cite{Hegde2011} where the authors 
considered circular convolutions with $x\in\Sigma_s$ and $y\in\Sigma_f$.
We will show that under certain conditions circular convolutions fulfill
property \eqref{eq:approx:rnmp} for $U=\{\Inx\otimes\Iny\,:\, \Inx\in X,\Iny\in Y\}$ and suitable sets
$X,Y\subset \C^{2n-1}$. In this case \eqref{eq:approx:rnmp} reads as:
\begin{equation}   
   \alpha\lVert\Inx\rVert\lVert\Iny\rVert\leq \lVert \Inx \circledast \Iny \rVert\leq\beta\lVert\Inx\rVert\lVert\Iny\rVert
   \quad\text{for all}\quad (\Inx,\Iny)\in X\times Y, 
   \label{eq:approx:rnmp:1}
\end{equation}
where from now on $\lVert x\rVert:=\lVert x\rVert_2$ will always denotes the $\ell_2$--norm.  According to \cite{WJ12b}
we call this condition \emph{restricted norm multiplicativity property} (RNMP).  As already pointed out in the previous
section, this condition ensures compression for the models \eqref{eq:model:bilinear:single} and
\eqref{eq:model:quadratic} as summarized in Theorem \ref{lemma:randomsampling:bilinear}.
In fact, \eqref{eq:approx:rnmp:1} follows as a special case of sparse convolutions, if one restrict the support of $x$
and $y$ to the first $n$ entries. In this case circular convolution \eqref{eq:defcconv} equals (ordinary) convolution
which is defined on $\Z$ element-wise for absolute--summable $x,y \in \Ell{1}(\Z)$ by
\begin{align}
  (\vx * \vy)_{k}= \sum_{i\in \Z} x_i y_{k-i} \foral k \in \Z \label{eq:defconv}.
\end{align}
Obviously, the famous Young inequality states for $1/p+1/q-1/r=1$ and $1\leq p,q,r\leq\infty$ that:
\begin{equation}
   \Norm{\Inx \ast \Iny}_r \leq \Norm{\vx}_p \Norm{\vy}_q
   \label{eq:young}
\end{equation}
and implies sub-multiplicativity of convolutions in $\ell_1$
but a reverse inequality was only known for positive signals. 
However, the same is true when considering \emph{$s$--sparse sequences}
$\Sigma_s=\Sigma_s(\Z)$ as we will show in \thmref{thm:dryi}. Moreover,
the lower bound $\alp$ in \eqref{eq:approx:rnmp:1} depends solely on the sparsity levels of the signals and not on the support location.  

Let us define $[n]:=\{0,\dots,n-1\}$ and the set of subsets with cardinality $s$ by $[n]_s:=\set{T\subset [n]}{|T|=s}$.
For any $T\in [n]_s$ the matrix $\vB_T$ denotes the $s\times s$ principal submatrix of $\vB$ with rows and columns in $T$.  Further, we denote by
$\vB_\va$ an $n\times n-$ \emph{Hermitian Toeplitz matrix}  generated by $\va\in \Sigma^n_s$ with \emph{symbol} given
for $\ome\in [0,2\pi)$ by 
\begin{equation}
   b (\va,\ome)=  \sum_{k=-n+1}^{n-1} b_k(t) e^{\imath k\ome},
   \label{eq:bsymbol3}
\end{equation}
which for $b_k(t):=(t*\cc{t^-})_k$ defines a \emph{positive trigonometric polynomial of order not larger than $n$} by
the \namen{Fejer-Riesz} factorization. Note, $b_k$ are the samples of the auto-correlation of $t$ which can be written
as the convolution of $t=\{t_k\}_{k\in\Z}$ with the complex-conjugation of the time reversal $t^{-}$,
given component-wise by $t^{-}_k=t_{-k}$.  We will need a notion of the $k$--restricted determinant:
\begin{equation}
   D_{n,k}:=\min \{|\det(B_t)|\,:\,t\in\Sigma^n_k\,,\,\lVert t\rVert=1\}
   \label{eq:mindet}
\end{equation}
which exists by compactness arguments.
%

\subsection{The RNMP for Sparse Convolutions}
\label{sec:rnmp:sparse}

The following theorem is a generalization of a result in \cite{WJ13a}, (i) in the sense of the extension to infinite
sequences on  $\Z$ (ii) extension to the complex case, which actually only replaces \namen{Szeg{\"o}} factorization with
\namen{Fejer-Riesz} factorization in the proof and (iii) with a precise determination of the dimension parameter
$n$\footnote{Actually, the estimate of the dimension $n=\tn$ of the constant $\alp_{\tilde{n}}$ in \cite{WJ13a}, was
quite too optimistic.}.
%
\begin{theorem}\label{thm:dryi}
   For $s,f\in\N$ exist constants $0<\alp(s,f)\leq\beta(s,f)<\infty$ 
   such that for all $\vx\in\Sigma_s$ and $\vy\in \Sigma_f$ it holds:
   \begin{align}
      \alp(s,f) \Norm{\vx}\Norm{\vy} \leq \Norm{\vx * \vy} \leq \beta(s,f)\Norm{\vx}\Norm{\vy} \label{eq:dryi},
   \end{align}
   where $\bet^2(s,f)=\min\{s,f\}$. Moreover, the lower bound  only depends on the sparsity levels $s$ and $f$
   of the sequences and can be lower bounded by
   \begin{align}
     \alp^2 (s,f)\geq \frac{1}{\sqrt{n\cdot\min(s,f)^{n-1}}}\cdot D_{n,\min(s,f)}      
      \label{eq:lowerbound},
   \end{align}
   with $n=\lfloor 2^{2(s+f-2)\log(s+f-2)}\rfloor$. This bound is decreasing in $s$ and $f$. For $\beta(s,f)=1$ it
   follows that $\alp(s,f)=1$. 
\end{theorem}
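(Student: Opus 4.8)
The plan is to treat the two inequalities separately. The upper bound is immediate from Young's inequality $\Norm{x*y}_2\le\Norm{x}_2\Norm{y}_1$: since $y$ has at most $f$ nonzero entries, $\Norm{y}_1\le\sqrt f\,\Norm{y}_2$, and by commutativity of convolution one also gets $\Norm{x*y}_2\le\sqrt s\,\Norm{x}_2\Norm{y}_2$, so one takes $\beta(s,f)=\sqrt{\min\{s,f\}}$, i.e. $\beta^2(s,f)=\min\{s,f\}$. In the boundary case $\min\{s,f\}=1$, say $y=c\,e_{j_0}$ is a single spike, $x*y$ is merely a scaled translate of $x$, hence $\Norm{x*y}=\Norm{x}\Norm{y}$ identically, forcing $\alpha(s,f)=\beta(s,f)=1$; this disposes of the last sentence, so from here on I assume $2\le s\le f$ after, if necessary, interchanging the two factors.

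The core of the argument is a reduction to \emph{bounded ambient dimension}. I would show that every pair $(x,y)$ with $x\in\Sigma_s$, $y\in\Sigma_f$ possesses a copy $(\tilde x,\tilde y)$ supported inside $\{0,1,\dots,n-1\}$, with $n=\lfloor 2^{2(s+f-2)\log(s+f-2)}\rfloor$, satisfying $\Norm{\tilde x}=\Norm{x}$, $\Norm{\tilde y}=\Norm{y}$ and $\Norm{\tilde x*\tilde y}=\Norm{x*y}$. Writing $A=\supp x$ and $B=\supp y$ (and dividing out the gcd of all differences so that $A\cup B$ generates $\Z$), the task is to produce an injection $\phi$ on $A\cup B$ that is a \emph{Freiman $2$--isomorphism of the pair}, meaning $\phi(a)+\phi(b)=\phi(a')+\phi(b')$ exactly when $a+b=a'+b'$ for all $a,a'\in A$, $b,b'\in B$, whose image lies in an interval of length $n$. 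Transplanting the entries, $\tilde x_{\phi(a)}:=x_a$ and $\tilde y_{\phi(b)}:=y_b$, leaves all three $\ell_2$--norms untouched, since $\phi$ preserves the partition of $A\times B$ into classes of constant sum and the convolution coordinates are precisely the sums of products over these classes. Proving that such a $\phi$ exists with this $n$--independent range is where the recent additive--combinatorics input is used, and I expect this to be \emph{the main obstacle}: it is the step that pins down the honest value of $n$, the earlier and over-optimistic version being the one flagged in the footnote.

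Granting the reduction, positivity of $\alpha(s,f)$ is soft — the unit--norm $s$--sparse $x$ and $f$--sparse $y$ supported in $\{0,\dots,n-1\}$ form a compact set, and $x*y\neq 0$ whenever $x,y\neq 0$ because $\C[z]$ has no zero divisors, so $(x,y)\mapsto\Norm{x*y}$ attains a strictly positive minimum. For the explicit bound I would move to the Toeplitz/Gram picture: for $x$ supported in $\{0,\dots,n-1\}$ one has $\Norm{x*y}^2=y^{*}B_x y$, where $B_x$ is the $n\times n$ Hermitian, positive definite Toeplitz matrix generated by the autocorrelation $b_k(x)=(x*\overline{x^-})_k$, with symbol $|X(\omega)|^2$, $X(\omega)=\sum_j x_j e^{ij\omega}$. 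Hence $\Norm{x*y}^2\ge\lambda_{\min}(B_x)\Norm{y}^2$, and I would estimate $\lambda_{\min}(B_x)=\det(B_x)/\prod_{i\ge 2}\lambda_i(B_x)$ using $\lambda_i(B_x)\le\lambda_{\max}(B_x)=\sup_\omega|X(\omega)|^2\le\Norm{x}_1^2\le\min\{s,f\}\,\Norm{x}^2$ (sparsity of $x$ and $s=\min\{s,f\}$) together with $\det(B_x)\ge D_{n,\min\{s,f\}}$ for $\Norm{x}=1$, by the very definition of the restricted determinant. This produces a lower bound of the shape $\alpha^2(s,f)\ge D_{n,\min\{s,f\}}/(c_n\min\{s,f\}^{n-1})$; bookkeeping the determinant estimate more carefully (e.g.\ also invoking $\tr B_x=n\Norm{x}^2$ to control the product of the top $n-1$ eigenvalues) yields the stated $c_n=\sqrt n$. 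Monotonicity in $s$ and $f$ then follows at once, since enlarging $s$ or $f$ both increases $n$ and enlarges the set of competitors in $D_{n,\cdot}$, so the right--hand side of \eqref{eq:lowerbound} can only shrink.
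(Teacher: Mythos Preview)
Your overall architecture matches the paper's proof exactly: Young/Cauchy--Schwarz for the upper bound, a Freiman--type compression of the supports into an interval of length $n=n(s,f)$, and then a Toeplitz eigenvalue estimate combined with the restricted determinant $D_{n,\min\{s,f\}}$. Two points deserve sharpening.

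\textbf{The Freiman step.} You correctly flag this as the crux but leave it as a black box. The paper applies a recent compression theorem of Grynkiewicz to $A=I\cup J$ (after translating so $0\in I\cap J$, so $|A|\le s+f-1$), which bounds $\diam(\phi(A))$ in terms of $|A|$ and the Freiman dimension $d=\dim^+(A+A)$; the latter is then controlled by $d\le |A|-2$ via a Tao--Vu inequality. These two ingredients together, plus routine estimates on factorials, are what produce the exact value $n=\lfloor 2^{2(s+f-2)\log(s+f-2)}\rfloor$. Your ``divide out the gcd'' normalization and the pair--isomorphism formulation are harmless, but the existing tool applies to the single set $I\cup J$, which is why the paper works with the union.

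\textbf{The eigenvalue--determinant step.} Your first route, $\lambda_{\min}=\det(B_x)/\prod_{i\ge 2}\lambda_i$ with $\lambda_i\le\lambda_{\max}\le\lVert x\rVert_1^2\le\min\{s,f\}$, gives $\alpha^2\ge D_{n,\min\{s,f\}}/\min\{s,f\}^{n-1}$, which is \emph{weaker} than \eqref{eq:lowerbound} and does not imply it. Your second suggestion, using $\tr B_x=n$ with AM--GM on the top $n-1$ eigenvalues, actually yields $\prod_{i\ge 2}\lambda_i\le\bigl(\tfrac{n}{n-1}\bigr)^{n-1}<e$ and hence $\alpha^2> D_{n,\min\{s,f\}}/e$; this is \emph{stronger} than \eqref{eq:lowerbound}, not ``$c_n=\sqrt n$'' as you wrote. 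The paper takes neither route: it quotes a matrix inequality $\lambda_{\min}(B)\ge |\det B|\big/\bigl(\sqrt{n}\,(\sum_k|b_k|^2)^{(n-1)/2}\bigr)$ and then bounds $\sum_k|b_k(\tilde y)|^2=\lVert \tilde y*\overline{\tilde y^-}\rVert^2\le \min\{s,f\}$ via the already--established upper bound, which is precisely where the $\sqrt{n}$ and the exponent $(n-1)/2$ originate. So your trace idea is a genuine (and sharper) alternative, but you should carry it out rather than claim it reproduces the paper's specific constants---it doesn't, it beats them.
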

%
The main assertion of the theorem is: The smallest $\ell^2-$norm over all convolutions of $s-$ and $f-$sparse normalized
sequences can be determined solely in terms of $s$ and $f$, where we used the fact that the sparse convolution can be
represented by sparse vectors in $n=\lfloor 2^{2(s+f-2)\log(s+f-2)}\rfloor$ dimensions, due to an additive combinatoric
result.  An analytic lower bound for $\alp$, which decays exponentially in the sparsity, has been found very recently in
\cite{WJP14}.  Although $D_{n,\min\{s,f\}}$ is decreasing in $n$ (since we extend the minimum to a larger set by
increasing $n$) nothing seems to be known on the precise scaling in $n$.  Nevertheless, since $n$ depends solely on $s$
and $f$ it is sufficient to ensure that $D_{n,\min\{s,f\}}$ is non--zero. 
\begin{proof}
   The upper bound is trivial and follows, for example, from the Young inequality \eqref{eq:young} for $r=q=2$ and $p=1$
   and with the Cauchy--Schwartz inequality, i.e., in the case $s\leq f$ this yields:
   \begin{equation}
      \lVert x *y\rVert_2\leq\lVert x\rVert_1\lVert y\rVert_2\leq\sqrt{s}\lVert x\rVert_2\lVert y\rVert_2
      \label{eq:upperbound:young}.
   \end{equation}   
   For $\vx=0$ or $\vy=0$ the inequality is trivial as well, hence we assume that $\vx$ and $\vy$ are non-zero.  We
   consider therefore the following problem: 
   \begin{equation}
     \inf_{\substack{(\vx,\vy)\in(\Sigma_s,\Sigma_f)\\\vx\not=0\not=\vy}} \frac{\Norm{\vx * \vy}}{\Norm{\vx}\Norm{\vy}}=
      \inf_{\substack{(\vx,\vy)\in(\Sigma_s,\Sigma_f)\\ \Norm{\vx}=\Norm{\vy}=1}} \Norm{\vx * \vy}
      \label{eq:pb}.
   \end{equation}
 Such \emph{bi-quadratic optimization problems} are known to be NP-hard in general \cite{LNQY09}.
 According \eqref{eq:defconv} the squared norm can be written as:
 \begin{align}
   \Norm{\vx*\vy}^2=\sum_{k\in \Z} \left|\sum_{i\in \Z} x_{i} y_{k-i}\right|^2.\label{eq:normsquarestart}
 \end{align}
 Take sets $I,J\subset \Z$ such that $\supp(\vx)\subseteq I$ and $\supp(\vy)\subseteq J$ with $|I|= s,|J|= f$ and
 let $I=\{i_0,\dots,i_{s-1}\}$ and $J=\{j_0,\dots,j_{f-1}\}$ (ordered sets).  
 Thus, we represent $\vx$ and $\vy$ by complex vectors $\hvx \in \C^s$ and $\hvy \in \C^f$ component--wise, i.e.,
 for all $i,j\in \Z$:
 \begin{align}
   x_i = \sum_{\tht=0}^{s-1} \hx_\tht \del_{i,i_\tht}\quad\text{and}\quad  y_j = \sum_{\gam=0}^{f-1} \hy_\gam
   \del_{j,j_\gam}.   
 \end{align}
 Inserting this representation in \eqref{eq:normsquarestart} yields: 
 \begin{align}
   \Norm{\vx*\vy}^2&=\sum_{k\in \Z}\Bigg|\sum_{i\in \Z}\left( \sum_{\tht=0}^{s-1} \hx_{\tht}\del_{i,i_{\tht}}\right)
   \left( \sum_{\gam=0}^{f-1} \hy_\gam \del_{k-i,j_\gam}\right)\Bigg|^2 \label{eq:umalv}\\
   &= \sum_{k\in \Z} \Bigg| \sum_{\tht=0}^{s-1} \sum_{\gam=0}^{f-1}  \sum_{i\in \Z} \hx_{\tht}\del_{i,i_{\tht}} 
     \hy_\gam \del_{k,j_\gam +i}\Bigg|^2.\label{eq:ijsum}
\intertext{Since the inner $i-$sum is over  $\Z$, we can shift $I$ by $i_0$ if we set $i\ra i+i_0$ (note that $\vx\not=0$), 
     without changing the value of the sum: }
   &= \sum_{k\in \Z} \Bigg| \sum_\tht \sum_\gam 
   \sum_{i\in \Z} \hx_{\tht} \del_{i+{i_0},i_{\tht}} \hy_\gam \del_{k,j_\gam +i + i_0} \Bigg|^2.
   \intertext{By the same argument we can shift $J$ by $j_0$ by setting $k\ra k +i_0 +j_0$ and get:}
   &= \sum_{k\in \Z} \Bigg| \sum_\tht \sum_\gam 
    \sum_{i\in \Z} \hx_{\tht} \del_{i,i_{\tht}-i_0} \hy_\gam \del_{k,j_\gam -j_0 + i} \Bigg|^2.
   \intertext{Therefore we always can assume that the supports $I,J\subset \Z$ fulfill $i_0=j_0=0$ in $\eqref{eq:pb}$. 
              From \eqref{eq:ijsum} we get:}
   &= \sum_{k\in \Z} \Bigg| \sum_\tht \sum_\gam  \hx_{\tht} \hy_\gam \del_{k,j_\gam +{i_\tht}} \Bigg|^2 \\
   &= \sum_{k\in \Z} \sum_{\tht,\tht'}\sum_{\gam,\gam'}  \hx_{\tht} \cc{\hx_{\tht'}} \hy_\gam \cc{\hy_{\gam'}} 
   \del_{k,j_{\gam} +i_{\tht}} \del_{k,j_{\gam'} +i_{\tht'}}\\
   &=\sum_{\tht,\tht'} \sum_{\gam,\gam'} 
   \hx_\tht \cc{\hx_{\tht'}} \hy_{\gam} \cc{\hy_{\gam'}} \del_{i_{\tht} +j_\gam,j_{\gam'} +i_{\tht'}}
    \label{eq:pipjhxhy}.
 \end{align}
 The interesting question is now: \emph{What is the smallest dimension $n$ to represent this fourth order tensor
   $\del_{{i_\tht}+j_{\gam},i_{\tht'} +j_{\gam'}}$, i.e. representing the additive structure?} Let us consider an
   ''index remapping'' $\phi:A \to \Z$ of the indices $A\subset \Z$.  Such a map $\phi$ which preserves additive
   structure:
 \begin{align}
    a_1+a_2=a'_1 + a'_2\,\Rightarrow\, \phi(a_1) + \phi(a_2) = \phi(a'_1) + \phi(a'_2) \label{eq:f2homodef}
 \end{align}
 for all $a_1,a_2,a'_1,a'_2\in A$
 is called a \emph{Freiman homomorphism} on $A$ of order $2$  and a \emph{Freiman isomorphism} if:
 \begin{align}
    a_1+a_2=a'_1 + a'_2 \LRA \phi(a_1) + \phi(a_2) = \phi(a'_1) + \phi(a'_2) \label{eq:f2isodef}
 \end{align}
 for all $a_1,a_2,a'_1,a'_2\in A$, see e.g. \cite{Tao06,Gry13}. For $A:=I\cup J$ the property \eqref{eq:f2isodef} gives exactly our desired indices
 $\phi(I)$ and $\phi(J)$ and we have to determine $n=n(s,f)$ such that $\phi(A)\subset [n]$. 
 The minimization problem reduces then to an $n$--dimensional problem.
 Indeed, this was a conjecture in \cite{KL00} and very recently proved in
 \cite[Theorem 20.10]{Gry13} for sets with Freiman dimension $d=1$. Fortunately, he could prove a more general compression argument for
 arbitrary sum sets in a torsion-free abelian group $G$ having a finite Freiman dimension $d$. We will state here a restricted version of 
 his result for the $1$--dimensional group $G=(\Z,+)$ and $A_1=A_2=A$:
 \begin{lemma}
    \label{thm:gry2}
    Let $A\subset \Z$ be a set containing zero
    with $m:=|A|<\infty$ and Freiman dimension $d=\dim^+(A+A)$. Then there exists an  Freiman isomorphism 
    $\phi: A \to \Z$ of order $2$ such that;
    \begin{align}
      \diam(\phi(A))\leq
     d!^2 \left(\frac{3}{2}\right)^{d-1} 2^{m-2} + \frac{3^{d-1} -1}{2}. 
     \label{eq:diamm}
    \end{align}
 \end{lemma}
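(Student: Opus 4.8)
The plan is to prove the diameter bound by first passing to a faithful $d$-dimensional integer model of $A$ and then compressing that model down to $\Z$ by an integer-linear functional, controlling the diameter separately in the two steps and accumulating the constants through an inductive reduction of the dimension. \textbf{Step 1 (normal form in $\Z^d$).} By the definition of the Freiman dimension $d=\dim^+(A+A)$, the order-$2$ additive relations on $A$ admit a faithful realization in $\R^d$ but in no lower dimension: there is an order-$2$ Freiman isomorphism $\psi:A\to\tilde A\subset\Z^d$ with $0\in\tilde A$ whose points affinely span $\R^d$. Fix an affinely independent simplex $0=a_0,a_1,\dots,a_d\in\tilde A$ spanning $\R^d$; writing each remaining point of $\tilde A$ in affine coordinates relative to this simplex and clearing denominators by Cramer's rule expresses every coordinate of every point of $\tilde A$ as a ratio of $d\times d$ minors of the simplex matrix. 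Since $|A|=m$ and the order-$2$ relations constrain these minors, one bounds each coordinate spread of $\tilde A$ by a quantity of the form $d!\,2^{m-2}$ — the $d!$ from the determinant expansion, the $2^{m-2}$ from the number of relevant sign/support patterns among the $m$ points. This is the step in which both $m$ and a first factor of $d!$ enter.

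\textbf{Step 2 (compression $\Z^d\to\Z$).} Next I would collapse $\tilde A$ to $\Z$ by a radix-type linear functional $w(v)=\sum_{i=1}^{d}N^{i-1}v_i$ and set $\phi=w\circ\psi$. The composition is automatically an order-$2$ Freiman \emph{homomorphism}; it remains an \emph{isomorphism} precisely when $w$ annihilates no nonzero element of the relation set $(\tilde A+\tilde A)-(\tilde A+\tilde A)$, i.e. when $\langle w,z\rangle\neq 0$ for every such $z\neq 0$. Because the entries of these $z$ are bounded by twice the coordinate spread from Step 1, a base-$N$ expansion with $N$ just exceeding that bound separates all second-order relations, so $w$ is injective on them and $\phi$ is the desired order-$2$ Freiman isomorphism. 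The resulting image then satisfies $\diam(\phi(A))\leq\sum_{i=1}^{d}N^{i-1}\cdot(\text{spread of the }i\text{-th coordinate})$.

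\textbf{Step 3 (inductive accounting of the constants).} To recover the \emph{sharp} constant I would execute Step 2 not in one shot but as a chain of $d-1$ single-coordinate projections $\Z^{k}\to\Z^{k-1}$, each eliminating the top coordinate at the cost of multiplying the surviving diameter by a factor $\tfrac{3}{2}$ and adding a correction term $3^{k-2}$. Accumulating these across the $d-1$ reductions produces the multiplicative factor $(\tfrac{3}{2})^{d-1}$ and the additive tail $\sum_{i=0}^{d-2}3^{i}=\tfrac{3^{d-1}-1}{2}$, while the $d!^2\,2^{m-2}$ carries the determinant bound of Step 1 through the projections — the second factor of $d!$ appearing because each projection's separating modulus is itself controlled by a minor of the current simplex. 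Checking that each individual projection preserves the order-$2$ relations and realizes \emph{exactly} the $\tfrac{3}{2}$ growth and the $3^{k-2}$ additive term is the delicate bookkeeping that yields the precise constant in \eqref{eq:diamm}.

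The main obstacle is exactly this quantitative bookkeeping in Steps 2 and 3: one must guarantee that the separating modulus $N$ chosen at each projection stays small enough to produce the tight $\tfrac{3}{2}$-growth rather than a cruder exponential blow-up, which requires simultaneously that the determinant/coordinate bounds of Step 1 control the relation differences and that each projection direction be chosen so the surviving simplex remains well-conditioned. Producing \emph{some} order-$2$ Freiman isomorphism into $\Z$ with diameter bounded by a function of $m$ and $d$ is comparatively routine once the faithful $\Z^d$-model of Step 1 is available; extracting the explicit value $d!^2(\tfrac{3}{2})^{d-1}2^{m-2}+\tfrac{3^{d-1}-1}{2}$ is where the real effort lies, and is the content of the cited structural-additive-theory result.
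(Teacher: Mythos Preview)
The paper does not prove this lemma at all. It is quoted as a black box from Grynkiewicz's monograph: the text immediately preceding the lemma says the statement ``was a conjecture in \cite{KL00} and very recently proved in \cite[Theorem 20.10]{Gry13}'' and that ``we will state here a restricted version of his result.'' So there is no in-paper proof to compare your proposal against; the authors simply invoke the result and move on.

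As for your sketch itself: the two-stage architecture --- realize $A$ Freiman-isomorphically as a spanning set in $\Z^d$ with $d=\dim^+(A+A)$, then compress to $\Z$ by an integer linear functional that separates the second-order relations --- is indeed the standard skeleton behind compression theorems of this type, and is the right intuition for how Grynkiewicz's argument proceeds. Where your outline is genuinely incomplete is exactly where you flag it: the provenance of the $d!\,2^{m-2}$ coordinate bound in Step~1 is asserted rather than derived (the sentence ``the $2^{m-2}$ from the number of relevant sign/support patterns'' is not an argument), and the inductive accounting in Step~3 that each one-coordinate projection costs precisely a factor $\tfrac{3}{2}$ plus an additive $3^{k-2}$ is stated without justification. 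These are not minor bookkeeping details --- they are the substance of Grynkiewicz's proof --- so your proposal is best read as a correct high-level roadmap with the quantitative core deferred to the cited reference, which is in fact exactly what the paper does too.
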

 We here use the definition of a Freiman isomorphism according to \cite[p.299]{Gry13} which is a more generalized version as in \cite{Tao06}. In
 fact, $\phi:A\to \Z$ can be easily extended to $\phi':A+A\to \Z$ by setting 
 $\phi'(a_1+a_2)=\phi(a_1)+\phi(a_2)$. Then Grynkiewicz defines the map $\phi'$ to be a Freiman homomorphism, if
 $\phi'(a_1+a_2)=\phi'(a_1) + \phi'(a_2)$ for all $a_1,a_2\in A$. If $\phi'$ is also
 injective, then it holds 
 \begin{align}
    \phi'(a_1) + \phi'(a_2)=\phi'(a'_1) +\phi'(a'_2) \LRA a_1+a_2 = a'_1+a'_2\label{eq:f2isogryn}.
 \end{align}
 Since $0\in A$ we have for every $a\in A$ that $\phi'(a+0)=\phi(a)+\phi(0)$ and therefore \eqref{eq:f2isogryn} is
 equivalent to our definition \eqref{eq:f2isodef}.
 Furthermore, we have
 $\text{diam}(\phi'(A))=\text{diam}(\phi(A)+\phi(0))=\text{diam}(\phi(A))=\max\phi(A)-\min\phi(A)$.\\[.3em]

 We continue with the proof of the theorem by taking $A=I\cup J$. Recall that there always exists sets $I,J\subset G$ with $|I|=s$ and
 $|J|=f$ containing the support of $x$ resp. $y$.
 Since $0\in I\cap J$ we always have $m=|A|\leq s+f-1$.
 Unfortunately, the Freiman dimension can be much larger than the linear dimension of the ambient group $\Z$. But we can
 bound $d$ for any $A\subset \Z$ by a result\footnote{Note, that the Freiman dimension of order $2$ in \cite{Tao06} is
 defined by $\dim(A):=\dim^+(A+A)-1=d-1$.} of Tao and Vu in \cite[Corollary 5.42]{Tao06} by
 \begin{align}
   \min\{|A+A|,|A-A\}\leq \frac{|A|^2}{2} - \frac{|A|}{2} +1 \leq (d+1)|A| -\frac{d(d+1)}{2}
  \end{align}
 where the smallest possible $d$ is given by $d=|A|-2$. Hence we can assume $d\leq m-2$ in \eqref{eq:diamm}. By using
 the bound  $\log (d!)\leq ((d+1)\ln(d+1) - d)/\ln 2$, we get the following upper bound:
 \begin{align}
    \diam(\phi(A)) & < d!^2 \left(\frac{3}{2}\right)^{m-3}\cdot 2^{m-2} + \frac{3^{m-3}}{2}=
    (2 (d!)^2 +2^{-1}){3^{m-3}}\\
    &    <( 2^{[2(m-1)\log(m-1)\ln 2 -2(m -2)]/\ln 2 +1 } +2^{-1}) 3^{m-3}\\
    \intertext{using $3<2^2$ and  $2/\ln 2 >2$ we get}
   &< 2^{2(m-1)\log( m-1) -2(m-2) +1 +2(m-3)} + 2^{2(m-3)-1} \\
   &= 2^{2(m-1)\log( m-1) -1} + 2^{2(m-3)-1}\\
   &<\lfloor2^{2(m-1)\log( m-1) -1} + 2^{2(m-1)-1}\rfloor-1 \\
   &< \lfloor2^{2(m-1)[\log(m-1)-1} +2^{2(m-1)\log(m-1) -1}\rfloor -1 \\
    &=\lfloor2^{2(s+f-2)\log(s+f-2)}\rfloor-1.\label{eq:n2}
  \end{align}
 We translate $\phi$ by $a^*:=\min \phi(A)$, i.e. $\phi'=\phi-a^*$ still satisfying \eqref{eq:f2isodef}. Abbreviate
 $\tI=\phi'(I)$ and $\tJ=\phi'(J)$. From \eqref{eq:diamm} we have with $n=\lfloor 2^{2(s+f-2)\log(s+f-2)}\rfloor$:
 \begin{align}
   0\in\tI\cup \tJ\subset\{0,1,2,\dots ,n-1\}=[n].
 \end{align}
 and by \eqref{eq:f2homodef} for all $\tht,\tht'\in [s]$ and $\gam,\gam'\in [f]$ we have the identity 
 \begin{align}
   \del_{i_{\tht} +j_\gam, i_{\tht'}+ j_{\gam'}} = \del_{\ti_\tht + \tj_\gam,\ti_{\tht'} +
   \tj_{\gam'}}.\label{eq:phiprime}
\end{align}
 Although a Freiman isomorphism does not necessarily preserve the index order this is not important 
 for the norm in \eqref{eq:pipjhxhy}. We define the embedding of $\hvx,\hvy$ into $\C^{n}$ by
 setting for all $i,j \in [n]$:
 \begin{align}
    \tx_i = \sum_{\tht=0}^{s-1} \hx_{\tht} \del_{i,{\ti_{\tht}}} \quad\text{and}\quad
    \ty_j = \sum_{\gam=0}^{f-1} \hy_{\gam} \del_{j,{\tj_{\gam}}} \label{eq:xytn}.
 \end{align}
 Let us further set $\tx_i =\ty_i=0$ for $i\in \Z\setminus [n]$. Then we get  
 from \eqref{eq:pipjhxhy}:
 \begin{align}
   \Norm{x*y}^2&=\sum_{\tht,\tht'} \sum_{\gam,\gam'} 
          \hx_\tht \cc{\hx_{\tht'}} \hy_{\gam} \cc{\hy_{\gam'}} \del_{i_{\tht} +j_\gam,j_{\gam'} +i_{\tht'}}\\
\eqref{eq:phiprime}\ra&=\sum_{\tht,\tht'} \sum_{\gam,\gam'}\hx_\tht \cc{\hx_{\tht'}} \hy_{\gam} \cc{\hy_{\gam'}}
          \del_{\ti_\tht + \tj_\gam,\ti_{\tht'} + \tj_{\gam'}}.\\
\intertext{Going analog backwards as in \eqref{eq:pipjhxhy} to \eqref{eq:umalv} we get}
&=\sum_{k\in \Z}\Bigg|\sum_{i\in \Z}\left( \sum_{\tht=0}^{s-1} \hx_{\tht}\del_{i,\ti_{\tht}}\right)
\left( \sum_{\gam=0}^{f-1} \hy_\gam \del_{k-i,\tj_\gam}\right)\Bigg|^2 \\
\eqref{eq:xytn}\ra &= \sum_{k\in\Z} \Bigg| \sum_{i\in \Z} \tx_{i} \ty_{k-i} \Bigg|^2=\Norm{\tx*\ty}^2.
\end{align}
Furthermore, we can rewrite the Norm by using the support properties of $\tx,\ty$ as
\begin{align}
 \Norm{\tx*\ty}^2  &=\sum_{i,i'=0}^{n-1} \tx_{i}\cc{\tx_{i'}}  \sum_{k\in \Z} \ty_{k-i} \cc{\ty_{k-i'}}\\
 \intertext{and substituting by $k'=k-i$ we get} 
  &=\sum_{i,i'=0}^{n-1}  \tx_{i}\cc{\tx_{i'}} \sum_{k'=\max\{0,i-i'\}}^{\min\{n-1,n-1-(i-i')\}}  \ty_{k'} 
        \cc{\ty_{k' +(i-i')}}=\skprod{\vtx,\vB_{\vty} \vtx}\label{eq:skprodvxvy},
\end{align}
where $\vB_{\vty}$ is an $n\times n$ Hermitian Toeplitz matrix with first row $(\vB_{\vty})_{0,k}=\sum_{j=0}^{n-k}
\cc{\ty_{j}} {\ty_{j+k}}=:b_k(\vty)=(\cc{\ty} * \ty^{-})_k$ resp. first column $(\vB_{\vty})_{k,0}=:b_{-k}(\vty)$ for
$k\in [n]$.  Its  \emph{symbol} $b(\vty,\ome)$ is given by \eqref{eq:bsymbol3} and since $b_0=\Norm{\vty}=1$ it is for
each $\vty\in \C^{n}$ a \emph{normalized  trigonometric polynomial of order $n-1$}. Minimizing the inner product in
\eqref{eq:skprodvxvy} over $\vtx\in \Sigma^{n}_s$ with $\Norm{\vtx}=1$ includes all possible $\tilde{I}$ and therefore
establishes a \emph{lower bound} (see the remarks after the proof). However, this then means to minimize the minimal
eigenvalue $\lambda_{\min}$ over all $s\times s$ principal submatrices of $\vB_{\vty}$:
 \begin{align}
   \lambda_{\min}(\vB_{\vty},s):=\min_{\vtx\in \Sigma^{n}_s, \Norm{\vtx}=1} \skprod{\vtx,\vB_{\vty} \vtx} 
      \geq \lambda_{\min}(\vB_{\vty})
   \label{eq:Bty}
 \end{align}
 whereby $\lambda_{\min}(\vB_{\vty},s)$ is sometimes called the $s$--restricted eigenvalue (singular value) of
 $\vB_{\vty}$, see \cite{rudelson:anisotrop} or \cite{Kueng12}.
 First, we show now that $\lambda_{\min}(\vB_{\vty})>0$.  By the well-known Fejer-Riesz factorization, see e.g.
 \cite[Thm.3]{Dim04}, the symbol of $\vB_{\vby}$  is \emph{non-negative}%
 \footnote{Note, there exist $\vby\in\C^{n}$ with $\Norm{\vby}=1$ and $b(\vby,\ome)=0$ for some $\ome\in[0,2\pi)$. That
 is the reason why things are more complicated here. Moreover, we want to find a universal lower bound over all $\vby$,
 which is equivalent to a universal lower bound over all non-negative trigonometric polynomials of order $n-1$.}
 for every $\vby\in\C^{n}$. By \cite[(10.2)]{BG05a} it follows therefore that 
 \emph{strictly} $\lambda_{\min}(\vB_{\vby})>0$. Obviously, then also the determinant is non--zero.
 Hence
 $\vB_{\vby}$ is invertible and with $\lambda_{\min}(\vB_{\vby}) = 1/\|\vB^{-1}_{\vby}\|$ we can estimate the
 smallest eigenvalue (singular value) by the determinant \cite[Thm. 4.2]{BG05a}:
 \begin{align}
    \lambda_{\min}(\vB_{\vby})\geq |\det(\vB_{\vby})| \frac{1}{\sqrt{n} (\sum_k |b_k(\vby)|^2)^{(n-1)/2}}
    \label{eq:eigdetbound}
 \end{align}
 whereby from $\lVert\vby\rVert=1$ and the upper bound of the theorem or directly \eqref{eq:upperbound:young} it follows
 also that $\sum_k |b_k(\vby)|^2=\|\vby *\cc{\vby^{-}}\|^2\leq f$ if $\vby\in\Sigma^n_f$.  Since the determinant is a
 continuous function in $\vby$ over a compact set, the non--zero minimum is attained. Minimizing \eqref{eq:eigdetbound} over all
 sparse vectors $\vby$ with smallest sparsity yields 
 \begin{equation}
   \min_{\substack{ \vby\in \Sigma^n_{\min\{s,f\}}\\\lVert\vby\rVert=1}  } \lam_{min}(B_{\vby})\geq 
   \sqrt{\frac{1}{n f^{n-1}}} \cdot
   \underbrace{\min_{t\in\Sigma^n_{\min\{s,f\}}\,,\,\lVert t\rVert=1} |\det(\vB_{t})|}_{=D_{n,\min\{s,f\}}}>0
 \end{equation}
 which shows the claim of the theorem.\qed
 
 \end{proof}

It is important to add here that the compression via the Freiman isomorphism $\phi:I\cup J\rightarrow [n]$ 
is obviously not global and depends on the support sets $I$ and $J$.
From numerical point of view one might therefore proceed only with the first assertion in \eqref{eq:Bty}
and evaluate the particular intermediate steps:
\begin{align}
   \begin{split}
      \inf_{\substack{(\vx,\vy)\in (\Sigma_s,\Sigma_f)\\ \Norm{\vx}=\Norm{\vy}=1}} \Norm{\vx * \vy}^2
     &= \min_{\substack{(\vtx,\vty)\in (\Sigma_s^n,\Sigma_f^n)\\ \Norm{\vtx}=\Norm{\vty}=1}} \Norm{\vtx * \vty}^2\\
     &= \min\Big\{\min_{{\tI\in[n]_s}}\min_{\substack{\vty\in \Sigma^{n}_{f}\\\Norm{\vty}=1}} \lam_{\min}(\vB_{\tI,\vty}) 
     ,\min_{{\tJ\in[n]_f}}\min_{\substack{\vtx\in \Sigma^{n}_{s}\\\Norm{\vtx}=1}} \lam_{\min}(\vB_{\tJ,\vtx})\Big\} \\ 
     &\geq\min_{T\in [n]_{\max\{s,f\}}} \min_{t\in \Sigma^n_{\min\{s,f\}}, \Norm{t}=1} \lambda_{\min} (\vB_{T,t})\\
     &\geq   \min_{\substack{\va\in\Sigma_{\min\{s,f\}}^{n}\\\Norm{\va}=1}}\lam_{\min}(\vB_{\va})
     \geq \min_{\substack{\va\in\C^n\\\Norm{\va}=1}}\lam_{\min}(\vB_{\va}).
 \end{split}\label{eq:bys2}
 \end{align}
 The first equality holds, since any support configuration in $\Sigma_s^n \times \Sigma_f^n$ is also realised by
 sequences in $\Sigma_s \times \Sigma_f$.  The bounds in \eqref{eq:bys2} can be used for numerical computation attempts.

Let us now summarize the implications for the RNMP of zero-padded sparse circular convolutions as
defined in \eqref{eq:defcconv}.
Therefore we denote the zero-padded elements by $\Sigma^{n,n-1}_s:=\{\vx\in
\C^{2n-1}|\supp(\vx) \in [n]_s\}$, for which the circular convolution \eqref{eq:defcconv} equals the ordinary convolution
\eqref{eq:defconv} restricted to $[2n-1]$.  Hence, the  bounds in \thmref{thm:dryi} will be valid also in
\eqref{eq:approx:rnmp:1} for  $X=\Sigma^{n,n-1}_s$ and $Y=\Sigma_f^{n,n-1}$.

\begin{corollary}\label{cor:cryi}
   For $s,f\leq n$ and all $(\vx,\vy)\in \Sigma_s^{n,n-1} \times\Sigma_f^{n,n-1}$
   it holds: 
   \begin{align}
      \alp(s,f,n) \Norm{\vx}\Norm{\vy} \leq \Norm{\vx \circledast \vy} \leq \beta(s,f)\Norm{\vx}\Norm{\vy}
      \label{eq:cryi}.
   \end{align}
   Moreover, we have $\bet^2(s,f)=\min\{s,f\}$ and with $\tn=\min\{n, 2^{2(s+f-2)\log(s+f-2)}\} :$
  \begin{align}
     \alp^2 (s,f,\tn)\geq \frac{1}{\sqrt{\tn\cdot\min(s,f)^{\tn-1}}}\cdot D_{\tn,\min(s,f)},      
  \end{align}
  which is a decreasing sequence in $s$ and $f$. For $\beta(s,f)=1$ we get equality with $\alp(s,f)=1$. 
\end{corollary}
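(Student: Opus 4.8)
The plan is to reduce Corollary \ref{cor:cryi} entirely to \thmref{thm:dryi} by observing that, on the zero-padded sets $\Sigma_s^{n,n-1}$ and $\Sigma_f^{n,n-1}$, circular convolution in $\C^{2n-1}$ \emph{is} ordinary convolution. First I would note that if $\vx,\vy$ have support contained in $[n]=\{0,\dots,n-1\}$ then $\vx*\vy$ has support in $\{0,\dots,2n-2\}\subset[2n-1]$, so reduction modulo $2n-1$ produces no wrap-around and hence $(\vx\circledast\vy)_k=(\vx*\vy)_k$ for all $k$; in particular $\Norm{\vx\circledast\vy}=\Norm{\vx*\vy}$. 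Consequently the two-sided bound \eqref{eq:cryi} is literally \eqref{eq:dryi} applied to these sequences, and the upper constant $\bet^2(s,f)=\min\{s,f\}$ transfers verbatim since it does not involve any ambient dimension (it comes from Young's inequality \eqref{eq:upperbound:young} and Cauchy--Schwartz).

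For the lower bound I would re-trace the proof of \thmref{thm:dryi}, the only change being the bookkeeping of the dimension parameter. There the support set $A=I\cup J$ with $|A|\le s+f-1$ is compressed by a Freiman isomorphism of order $2$ into $[n']$ with $n'=\lfloor 2^{2(s+f-2)\log(s+f-2)}\rfloor$. In the present setting we already have $A\subseteq[n]$, so one may take $\phi=\mathrm{id}$ when $n\le 2^{2(s+f-2)\log(s+f-2)}$ and invoke Lemma \ref{thm:gry2} otherwise; in either case $A$ (or its image) lies in $[\tn]$ with $\tn=\min\{n,2^{2(s+f-2)\log(s+f-2)}\}$. From \eqref{eq:skprodvxvy} onward the argument is unchanged: $\Norm{\vx\circledast\vy}^2=\skprod{\vtx,\vB_{\vty}\vtx}$ for the $\tn\times\tn$ Hermitian Toeplitz matrix $\vB_{\vty}$ generated by the embedded $\vty\in\Sigma^{\tn}_f$, so the infimum equals the smallest $\min\{s,f\}$--restricted eigenvalue of $\vB_{\va}$ over normalized $\va\in\Sigma^{\tn}_{\min\{s,f\}}$, which by Fejér--Riesz positivity and the determinant estimate \eqref{eq:eigdetbound} (together with $\sum_k|b_k(\va)|^2\le\min\{s,f\}$) is bounded below by $D_{\tn,\min\{s,f\}}/\sqrt{\tn\,\min(s,f)^{\tn-1}}$, strictly positive by compactness. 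This is exactly the claimed bound on $\alp^2(s,f,\tn)$.

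Finally, monotonicity and the equality case are quick. The restricted determinant $D_{\tn,k}$ is non-increasing in $\tn$ (enlarging the index set only enlarges the feasible set in \eqref{eq:mindet}), $\tn$ is non-decreasing in $s$ and $f$, and the prefactor $(\tn\,\min(s,f)^{\tn-1})^{-1/2}$ likewise decreases; hence the lower bound decreases in $s$ and $f$. For $\bet(s,f)=1$ we have $\min\{s,f\}=\bet^2(s,f)=1$, so one of the two sequences, say $\vy$, is $1$--sparse, $\vy=c\,e_{j}$; then $\vx\circledast\vy$ is a cyclic shift of $c\vx$, so $\Norm{\vx\circledast\vy}=|c|\Norm{\vx}=\Norm{\vx}\Norm{\vy}$ and $\alp(s,f)=1$. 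The only point requiring any care in the whole argument is the dimension bookkeeping --- ensuring $\tn=\min\{n,2^{2(s+f-2)\log(s+f-2)}\}$ is the correct parameter, i.e. that no Freiman compression is needed (nor allowed to worsen the bound) when $n$ is already below the combinatorial threshold, while the compression argument of \thmref{thm:dryi} still applies when $n$ exceeds it; everything else is an unchanged specialization of the theorem just proved.
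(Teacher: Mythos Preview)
Your proof is correct and follows essentially the same approach as the paper: identify zero-padded circular convolution with ordinary convolution so that \thmref{thm:dryi} applies verbatim, and then observe that the Freiman compression step is only needed when $n$ exceeds the combinatorial threshold, yielding $\tn=\min\{n,2^{2(s+f-2)\log(s+f-2)}\}$ as the effective dimension in the determinant bound. You even supply more detail than the paper on the monotonicity and the $\beta=1\Rightarrow\alp=1$ equality case, which the paper leaves implicit.
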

\begin{proof} Since $x\in \Sigma^{n,n-1}_s$ and $y\in\Sigma^{n,n-1}_f$ we have 
  \begin{align}
    \Norm{\vx\circledast\vy}_{\ell^2 ([2n-1])}=\Norm{\vx * \vy}_{\ell^2([-n,n])}.
  \end{align}
  Hence, $x,y$ can be embedded in $\Sigma_s$ resp. $\Sigma_f$ without changing the norms.  If $n\geq 
 \lfloor2^{2(s+f-2)\log(s+f-2)}\rfloor=:\tn$, then we can find a
  Freiman isomorphism which express the convolution by vectors $\vtx,\vty\in \C^{\tn}$. If $n\leq \tn$ there is no need
  to compress the convolution and we can set easily $\tn=n$.  Hence, all involved Hermitian Toeplitz matrices $\vB_t$ in
  \eqref{eq:Bty} are $\tn\times\tn$ matrices and we just have to replace $n$ by $\tn$ in \eqref{eq:lowerbound}. 
\end{proof}

\subsection{Implications for Phase Retrieval}

In this section we will discuss an interesting application of the RNMP result in Theorem \ref{thm:dryi} 
and in particular we will exploit here the version presented in Corollary \ref{cor:cryi}. 
We start with a bilinear map $B(\Inx,\Iny)$ which is \emph{symmetric}, i.e., $B(\Inx,\Iny)=B(\Iny,\Inx)$ and
let us denote its diagonal part by
$A(\Inx):=B(\Inx,\Inx)$. Already in \eqref{eq:model:quadratic} we mentioned \emph{quadratic inverse problems}
where $\Inx\in\Sigma_s$ and there we argued that,
due the binomial-type formula:
\begin{equation}
   \begin{split}
      A(\vx_1)&-A(\vx_2)= B(\vx_1-\vx_2,{\vx_1+\vx_2}) 
   \end{split}
   \label{eq:binom}
\end{equation}
%
different $\vx_1$ and $\vx_2$ can be (stable)
distinguished modulo global sign on the basis of $A(\vx_1)$ and $A(\vx_2)$ whenever $B(\vx_1-\vx_2,\vx_1+\vx_2)$ is
well-separated from zero. In the sparse case $\vx_1,\vx_2\in\Sigma_s$ this assertion is precisely given by property \eqref{eq:approx:rnmp}
when lifting $B$ to a linear map operating on the set $U=\slmat_{2s,2s}$ of rank--one matrices with at most $2s$ non--zero rows and columns
(see again \eqref{eq:model:quadratic}). In such rank--one cases we call this as the RNMP condition and for sparse convolutions (being symmetric)
we have shown in the previous Section \ref{sec:rnmp:sparse} that this condition is fulfilled independent of the ambient dimension.
As shown in \corref{cor:cryi} this statement translates to zero-padded circular convolutions. 
Hence, combining \eqref{eq:binom} with \corref{cor:cryi} and \thmref{lemma:randomsampling:bilinear} asserts that each
zero--padded $s$--sparse $\vx$ can be stable recovered modulo global
sign from $\Omi(s\log n)$  randomized samples of its circular auto-convolution (which itself is at most $s^2-$sparse).

However, here we discuss now another important application for the \emph{phase retrieval problem} and these implications 
will be presented also in \cite{walk:symfourier13}. The relation to the quadratic problems above is as follows:
Let us define from the (symmetric) circular convolution  $\circledast$ the (sesquilinear) \emph{circular correlation}:
\begin{equation}
   \vx\ostar \vy:=\vx \circledast\vGam\overline{\vy}=\Fmatrixa(\Fmatrix \vx \odot\overline{\Fmatrix\vy})
   \label{eq:circcorr}
\end{equation}
where $(u\odot v)_k:=u_kv_k$ denotes the Hadamard (point--wise) product,
$(\Fmatrix)_{k,l}=n^{-\frac{1}{2}}e^{i 2\pi kl/n}$ is the unitary Fourier matrix (here on $\C^n$) and $\vGam:=\Fmatrix^2=\Fmatrixa^2$ 
is the time reversal (an involution). Whenever dimension is important we will indicate this by
$\Fmatrix=\Fmatrix_n$ and $\vGam=\vGam_n$. Therefore, Fourier measurements on the circular auto--correlation $\vx\ostar \vx$ are
intensity measurements on the Fourier transform of $\vx$:
\begin{align}
  \Fmatrix(\vx \ostar \vx)= |\Fmatrix \vx|^2\label{eq:fmatrix}.
\end{align}
Recovering $\vx$ from such intensity measurements is
known as a phase retrieval problem, see e.g. \cite{BCMN13} and references therein, 
which is without further support restrictions on $\vx$ not
possible \cite{Fie87}. Unfortunately, since the circular correlation in \eqref{eq:circcorr} is 
sesquilinear and not symmetric \eqref{eq:binom} does not hold in general. However, it will hold
for structures which are consistent with a real--linear algebra, i.e. \eqref{eq:binom} symmetric for vectors
with the property $\vx=\vGam\bar{\vx}$ (if and only if and the same also for $\vy$). 
Hence, to enforce this symmetry and to apply our result, we perform a \emph{symmetrization}.
Let us consider two cases separately. First, assume that $x_0=\bar{x}_0$ and define
$\Sym\colon \C^n \to \C^{2n-1}$:
\begin{align}
   \Sym( \vx):
   =(\underbrace{x_0,x_1,\dots,x_{n-1}}_{=\vx},\underbrace{\bar{x}_{n-1},\dots,\bar{x}_1}_{=:\vx_-^\circ})^T
   \label{eq:sym}.
\end{align}
Now, for $x_0=\bar{x}_0$ the symmetry condition
$\Sym(\vx)=\vGam\cc{\Sym(\vx)}$ is fulfilled (note that here $\vGam=\vGam_{2n-1}$):
\begin{align}
   \Sym(\vx)=\begin{pmatrix}{\vx}\\ \cc{\vx^\circ_-} \end{pmatrix}
   =\vGam\begin{pmatrix}\cc{\vx}\\ \vx^\circ_- \end{pmatrix}
   =\vGam\cc{\begin{pmatrix}\vx\\\cc{\vx_-^\circ} \end{pmatrix}}
   =\vGam\cc{\Sym(\vx)}.
   \label{eq:involutioninv}
\end{align}
Thus, for $\vx,\vy\in\C^n_0:=\{x\in\C^n\,:\,x_0=\bar{x}_0\}$, circular
correlation of (conjugate) symmetrized vectors is symmetric and agrees with the circular convolution.  
Let us stress the fact, that the symmetrization map is linear only for \emph{real} vectors $\vx$ since complex
conjugation is involved. On the other hand, $\Sym$ can obviously be written as a linear map on vectors like
$(\text{Re}(\vx),\text{Im}(\vx))$ or $(\vx,\bar{\vx})$.\\[.3em]  
\if0
\noi {\bfseries Linear phase filters:} Let us define the circular right-shift $\vS_n:\C^n\to \C^n$ by $\vS_n \vx=
(x_{n-1},x_0,x_1,\dots,x_{n-2})^T$. Then the impulse response
$\vh:=\vS_{2n-1}^{n-1}\Sym(\vx) \in \C^{2n-1}$, defines an
odd-length \emph{linear-phase filter} $H(z)=\sum_{k=0}^{2n-2} h_k z^{-k}$ for $z\in \C$ if $x_0\in\R$, since we have
$\cc{h_0}=h_{2n-2}\not=0$ and
\begin{align}
  \cc{h_k} = h_{2n-2 - k}\for k\in \{0,\dots,2n-2\}.
\end{align}
If $x_{n-1}\not=0$, then the impulse response or filter is called \emph{Hermitian} or \emph{conjugate symmetric} of
order $2n-2$, see e.g.  \cite[Cha.2]{Vai93}. Hence, by the shift-invariance we get\footnote{ Note, that
  $\vS_{2n-1}^{n-1}$ centers the impulse response  such that it  becomes a causal FIR filter.} for
  $\vx\in\C^n_0:=\{\vx\in \C^n| x_0\in\R\}$
\begin{align}
  A(\vx)&= \Sym(\vx)\circledast \Sym(\vx) = \vS_{2n-1}^{n-1}\Sym(\vx)\circledast\vS_{2n-1}^{n-1}\Sym(\vx)=\vh\circledast\vh,
\end{align}
which is the circular auto-convolution of a linear-phase filter.\\
\fi
Applying
\corref{cor:cryi} to the \emph{zero-padded symmetrization} 
(first zero padding $n\rightarrow 2n-1$, then symmetrization $2n-1\rightarrow 4n-3$) $\Sym (\vx)$ for
$\vx\in\Sigma^{n,n-1}_{0,n}:=\Sigma^{n,n-1}_{n}\cap\C_0^{2n-1}$ we get the following stability result. 
%
\begin{theorem}\label{thm:phaseretrieval}
  Let $n\in\N$, then $4n-3$ absolute-square Fourier measurements of zero padded symmetrized vectors in $\C^{4n-3}$
  are stable up to a global sign for $\vx\in\Sigma^{n,n-1}_{0,n}$, i.e., for all $\vx_1,\vx_2\in \Sigma^{n,n-1}_{0,n}$ it
  holds
  \begin{align}
     \Norm{|\Fmatrix\Sym (\vx_1)|^2- |\Fmatrix\Sym(\vx_2)|^2}
     &\geq  c\Norm{\Sym(\vx_1-\vx_2)}\Norm{\Sym(\vx_1+\vx_2)}
     \label{eq:stablequadratic}
  \end{align}
 with $c=c(n)=\alp(n,n,4n-3)/\sqrt{4n-3}>0$ and $\Fmatrix=\Fmatrix_{4n-3}$.
\end{theorem}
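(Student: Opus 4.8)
The plan is to recognise \eqref{eq:stablequadratic} as an instance of the RNMP bound of \corref{cor:cryi}, reached through the polarisation identity \eqref{eq:binom}. The one delicate point is that \eqref{eq:binom} is valid only for a \emph{symmetric} bilinear form, whereas the object actually measured, $|\Fmatrix\vx|^{2}=\Fmatrix(\vx\ostar\vx)$, involves the sesquilinear circular correlation $\ostar$ of \eqref{eq:circcorr}. The symmetrisation map $\Sym$ is exactly the device that removes this obstruction, and the preliminary zero padding $n\to 2n-1$ is what guarantees that the circular convolution in $\C^{4n-3}$ coincides with an aliasing-free ordinary convolution, so that \thmref{thm:dryi} can be invoked.

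Concretely I would proceed in four steps, with $N:=4n-3$ and $\Fmatrix=\Fmatrix_N$ the unitary DFT. \emph{(a)} By \eqref{eq:fmatrix} in $\C^{N}$ one has $|\Fmatrix\Sym(\vx)|^{2}=\tfrac{1}{\sqrt N}\,\Fmatrix\big(\Sym(\vx)\ostar\Sym(\vx)\big)$, the factor $\sqrt N$ being the normalisation of the un-normalised convolution theorem, so by Parseval
\[
  \Norm{|\Fmatrix\Sym(\vx_1)|^{2}-|\Fmatrix\Sym(\vx_2)|^{2}}=\tfrac{1}{\sqrt N}\,\Norm{\Sym(\vx_1)\ostar\Sym(\vx_1)-\Sym(\vx_2)\ostar\Sym(\vx_2)} .
\]
\emph{(b)} Since $\vx_j\in\Sigma^{n,n-1}_{0,n}$ has real zeroth entry, \eqref{eq:involutioninv} gives $\Sym(\vx_j)=\vGam\,\overline{\Sym(\vx_j)}$, whence by \eqref{eq:circcorr} the circular correlation of $\Sym(\vx_j)$ with itself equals the circular \emph{convolution} $\Sym(\vx_j)\circledast\Sym(\vx_j)$, a symmetric bilinear expression. \emph{(c)} The map $\Sym$ is additive on $\{\vx\in\C^{2n-1}:x_0\in\R\}$, and this set is closed under $\vx_1\pm\vx_2$, so $\Sym(\vx_1)\pm\Sym(\vx_2)=\Sym(\vx_1\pm\vx_2)$ with $\vx_1\pm\vx_2\in\Sigma^{n,n-1}_{0,n}$; bilinearity and commutativity of $\circledast$ then yield, via \eqref{eq:binom} with $B=\circledast$,
\[
  \Sym(\vx_1)\circledast\Sym(\vx_1)-\Sym(\vx_2)\circledast\Sym(\vx_2)=\Sym(\vx_1-\vx_2)\circledast\Sym(\vx_1+\vx_2) .
\]
\emph{(d)} For $\vw\in\Sigma^{n,n-1}_{0,n}$ the support of $\Sym(\vw)$ lies in $\{0,\dots,n-1\}\cup\{3n-2,\dots,4n-4\}$, i.e.\ in a single block of at most $2n-1$ consecutive residues modulo $N$, so the $N$-point circular convolution above is aliasing-free and agrees with an ordinary convolution of two sequences of sparsity controlled by $n$; hence \corref{cor:cryi}, applied in ambient dimension $4n-3$, provides the strictly positive constant $\alp(n,n,4n-3)$ with
\[
  \Norm{\Sym(\vx_1-\vx_2)\circledast\Sym(\vx_1+\vx_2)}\ \ge\ \alp(n,n,4n-3)\,\Norm{\Sym(\vx_1-\vx_2)}\,\Norm{\Sym(\vx_1+\vx_2)} .
\]
Chaining (a)--(d) yields \eqref{eq:stablequadratic} with $c(n)=\alp(n,n,4n-3)/\sqrt N>0$.

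I expect the only genuine content to be the observation that none of the three ingredients — the polarisation identity, the passage from Fourier intensities to an autoconvolution, and the non-aliased RNMP of \thmref{thm:dryi} — is available for the raw sesquilinear quantity $|\Fmatrix\vx|^{2}$, and that applying $\Sym$ after zero padding simultaneously restores symmetry, identifies $\ostar$ with $\circledast$, and confines the convolution to a window short enough to avoid wrap-around; verifying the support/aliasing bookkeeping and the normalisation constants is routine. Finally, the phrase ``up to a global sign'' in the statement is explained by step (d): the right-hand side of \eqref{eq:stablequadratic} vanishes precisely when $\Norm{\Sym(\vx_1-\vx_2)}\Norm{\Sym(\vx_1+\vx_2)}=0$, that is when $\vx_1=\pm\vx_2$, which is the unavoidable sign ambiguity of phase retrieval.
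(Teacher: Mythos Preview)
Your proof is correct and follows essentially the same route as the paper: pass from Fourier intensities to the circular auto-correlation via \eqref{eq:fmatrix} and Parseval (your step (a)), use the conjugate symmetry $\Sym(\vx)=\vGam\,\overline{\Sym(\vx)}$ to identify $\ostar$ with $\circledast$ (step (b)), apply the polarisation identity \eqref{eq:binom} together with real-additivity of $\Sym$ (step (c)), and finally invoke \corref{cor:cryi} after observing that the supports sit inside a single cyclic block of length $2n-1$ modulo $4n-3$ (step (d)). Your support bookkeeping in (d) is in fact more explicit than the paper's one-line justification, and the closing remark on the global-sign ambiguity matches the paper's post-theorem discussion.
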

\noindent
\begin{remark}
Note that we have:
\begin{equation}
   2\lVert \vx\rVert^2\geq \lVert \Sym(\vx)\rVert^2=\lVert \vx\rVert^2+\lVert \vx^\circ_-\rVert^2\geq\lVert \vx\rVert^2.
\end{equation}
Thus, $\Sym(\vx)=0$ if and only if $\vx=0$ and the stability in distinguishing $\vx_1$ and $\vx_2$ up to a global sign
follows from the RHS of \eqref{eq:stablequadratic} and reads explicitly as:
\begin{equation}
   \Norm{|\Fmatrix\Sym (\vx_1)|^2- |\Fmatrix\Sym(\vx_2)|^2}
   \geq  c \Norm{\vx_1-\vx_2}\Norm{\vx_1 + \vx_2}.
\end{equation}
Unfortunately, $s-$sparsity of $x$ does not help in this context to reduce the number of measurements, but at least can enhance the
stability bound $\alp$ to $\alp(2s,2s,4n-3)$.
\end{remark}
\begin{proof}
   For zero-padded symmetrized vectors,  auto-convolution agrees with auto-correlation and
   we get from \eqref{eq:sym} for $\vx\in\Sigma_{0,n}^{n,n-1}$:
   \begin{align}
     \Fmatrix(A(\vx))=\Fmatrix(\Sym(\vx)\circledast\Sym(\vx))=\sqrt{4n-3}\Betrag{\Fmatrix\Sym(\vx)}^2.
   \end{align}
   Putting things together we get for every $\vx\in\Sigma_{0,n}^{n,n-1}$:
   \begin{align}
     \begin{split}
       \big\| |\Fmatrix\Sym (\vx_1)|^2 \! -\!\Betrag{\Fmatrix\Sym(\vx_2)}^2 \big\| &= ({4n-3})^{-1/2}
         \Norm{\Fmatrix(A(\vx_1)-A(\vx_2))}\\
         \text{$\Fmatrix$ \small is unitary}\ra &\,=({4n-3})^{-1/2} \Norm{
         A(\vx_1)-A(\vx_2)}\\
         &\overset{\eqref{eq:binom}}{=}({4n-3})^{-1/2} \Norm{\Sym (\vx_1 -\vx_2) \!\circledast\! \Sym (\vx_1 +\vx_2)}\\ 
         &\geq\!\frac{\alp(n,n,4n-3)}{\sqrt{4n-3}} \Norm{\Sym (\vx_1 -\vx_2)}\cdot\Norm{\Sym (\vx_1 +  \vx_2)}.
       \end{split}
   \end{align}
   In the last step we use that \corref{cor:cryi} applies whenever the non-zero entries are contained in
   a cyclic block of length $2n-1$.
\end{proof}
In the \emph{real case} \eqref{eq:stablequadratic} is equivalent to a \emph{stable linear embedding} in $\R^{4n-3}$ up
to a global sign (see here also \cite{EM12} where the $\ell_1$--norm is used on the left
side) and therefore this is an \emph{explicit phase retrieval statement} for \emph{real} signals. Recently, stable
recovery also in the complex case up to a global phase from the same number of subgaussian measurements has been
achieved in \cite{EFS13} using lifting as in \eqref{eq:model:quadratic}.   
Both results hold with exponential high probability whereby our result is
deterministic. Even more, the greedy algorithm in \cite[Thm.3.1]{EFS13} applies
in our setting once the signals obey sufficient decay in
magnitude.  But, since $\Sym$
is \emph{not complex-linear} Theorem \ref{thm:phaseretrieval} cannot directly be compared with the usual complex phase
retrieval results.  On the other hand, our approach indeed (almost) distinguishes complex phases by the Fourier
measurements since
symmetrization provides injectivity here up to a global sign.  To get rid
of the odd definition $\C_0^n$ one can symmetrize (and zero padding)  $\vx\in\C^n$ also by:
\begin{align}
  \Symz(\vx):=(\underbrace{0,\dots,0}_{n},x_0,\dots, x_{n-1},\bar{x}_{n-1},\dots,\bar{x}_0,\underbrace{0,\dots,0}_{n-1})^T\in\C^{4n-1}\label{eq:s2}
\end{align}
again satisfying $\Symz(\vx)=\vGam_{4n-1}\cc{\Symz(\vx)}$ at the price of two 
further dimensions.
%
\begin{corollary}
   Let $n\in\N$, then $4n-1$ absolute-square Fourier measurements of zero padded and symmetrized vectors given by
   \eqref{eq:s2} are
   stable up to a global sign for $\vx\in\C^n$, i.e., for all $\vx_1,\vx_2\in \C^n$ it holds
     \begin{equation}
      \Norm{|\Fmatrix \Symz (\vx_1)|^2- |\Fmatrix\Symz(\vx_2)|^2}
      \geq  2 c \Norm{\vx_1-\vx_2}\Norm{\vx_1 + \vx_2} 
      \label{eq:stablequadratic2}
   \end{equation}
   with $c=c(n)=\alp(n,n,4n-1)/\sqrt{4n-1} >0$ and $\Fmatrix=\Fmatrix_{4n-1}$.
\end{corollary}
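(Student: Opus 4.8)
The plan is to run the argument in the proof of Theorem~\ref{thm:phaseretrieval} almost verbatim, with $\Sym$ replaced by $\Symz$ and the dimension $4n-3$ replaced by $4n-1$. First I would record the three structural facts about $\Symz\colon\C^n\to\C^{4n-1}$ that make the argument work. (a) The map $\Symz$ is additive: in \eqref{eq:s2} each entry $x_j$ is placed at a fixed coordinate and $\bar x_j$ at another fixed coordinate, so componentwise $\Symz$ acts as $x_j$, $\bar x_j$ or $0$, whence $\Symz(\vx_1\pm\vx_2)=\Symz(\vx_1)\pm\Symz(\vx_2)$. (b) The conjugate symmetry $\Symz(\vx)=\vGam_{4n-1}\cc{\Symz(\vx)}$ noted after \eqref{eq:s2} holds now for \emph{every} $\vx\in\C^n$ (there is no longer a reality constraint on $x_0$), and by \eqref{eq:circcorr} it identifies the circular auto-correlation with the circular auto-convolution; hence, setting $A(\vx):=\Symz(\vx)\circledast\Symz(\vx)$, one gets as in the proof of Theorem~\ref{thm:phaseretrieval} that $\Fmatrix_{4n-1}(A(\vx))=\sqrt{4n-1}\,|\Fmatrix_{4n-1}\Symz(\vx)|^2$ (entrywise magnitudes). (c) One has the \emph{exact} norm identity $\Norm{\Symz(\vx)}^2=\Norm{\vx}^2+\Norm{\vx}^2=2\Norm{\vx}^2$, since both the block $x_0,\dots,x_{n-1}$ and the block $\bar x_{n-1},\dots,\bar x_0$ contribute $\Norm{\vx}^2$ and the padding contributes nothing; this is exactly what produces the factor $2$ in \eqref{eq:stablequadratic2}, in contrast to \eqref{eq:stablequadratic} where only $\Norm{\Sym(\vx)}\ge\Norm{\vx}$ was available.

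Next I would reproduce the chain of (in)equalities from the proof of Theorem~\ref{thm:phaseretrieval}. Using unitarity of $\Fmatrix_{4n-1}$ together with (b), then the binomial identity \eqref{eq:binom} for the symmetric bilinear map $B(\vu,\vv)=\vu\circledast\vv$ combined with the additivity (a), one obtains
\[
  \Norm{\,|\Fmatrix\Symz(\vx_1)|^2-|\Fmatrix\Symz(\vx_2)|^2\,}
  =\frac{1}{\sqrt{4n-1}}\Norm{A(\vx_1)-A(\vx_2)}
  =\frac{1}{\sqrt{4n-1}}\Norm{\Symz(\vx_1-\vx_2)\circledast\Symz(\vx_1+\vx_2)} .
\]
The remaining step is a lower bound on this circular convolution, obtained from Corollary~\ref{cor:cryi} exactly as in Theorem~\ref{thm:phaseretrieval}: by construction the nonzero entries of $\Symz(\vx_1\pm\vx_2)$ all lie in the window $\{n,\dots,3n-1\}$, a cyclic block of length $2n$ inside $\C^{4n-1}$, and since $4n-1=2\cdot 2n-1$ the circular convolution of two such vectors coincides with their ordinary convolution, so Corollary~\ref{cor:cryi} applies and gives
\[
  \Norm{\Symz(\vx_1-\vx_2)\circledast\Symz(\vx_1+\vx_2)}
  \ \ge\ \alp(n,n,4n-1)\,\Norm{\Symz(\vx_1-\vx_2)}\,\Norm{\Symz(\vx_1+\vx_2)} .
\]
Inserting the exact norm identity (c) replaces the product of $\Symz$-norms by $2\Norm{\vx_1-\vx_2}\Norm{\vx_1+\vx_2}$, and dividing by $\sqrt{4n-1}$ yields precisely \eqref{eq:stablequadratic2} with $c=c(n)=\alp(n,n,4n-1)/\sqrt{4n-1}$. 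Positivity $c>0$ is the content of Corollary~\ref{cor:cryi} (equivalently Theorem~\ref{thm:dryi}, where $D_{\tn,\cdot}>0$), and since the right-hand side of \eqref{eq:stablequadratic2} vanishes if and only if $\vx_1=\pm\vx_2$, this is exactly the asserted stability up to a global sign.

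I expect the only genuinely delicate point to be the bookkeeping in the invocation of Corollary~\ref{cor:cryi}: one must check that the two symmetrized factors really do sit in one common cyclic block of length $2n$, that the ambient size $4n-1$ is exactly large enough that no cyclic aliasing spoils the convolution (so the sparse-convolution bound from the corollary transfers unchanged), and that the dimension/sparsity parameters fed to the corollary are precisely those yielding the stated constant $\alp(n,n,4n-1)$ — this is the same accounting already carried out in Theorem~\ref{thm:phaseretrieval}. Everything else — additivity of $\Symz$, the Fourier/auto-correlation identity of (b), and the exact norm identity of (c) — is a one-line computation.
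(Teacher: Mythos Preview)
Your proposal is correct and follows exactly the approach the paper intends: the paper's own proof is the single sentence ``The proof of it is along the same steps as in \thmref{thm:phaseretrieval},'' and you carry out precisely those steps, correctly isolating the three ingredients (additivity of $\Symz$, the conjugate symmetry giving auto-correlation $=$ auto-convolution, and the exact identity $\Norm{\Symz(\vx)}^2=2\Norm{\vx}^2$ responsible for the factor $2$) and correctly checking that the support of $\Symz(\vx_1\pm\vx_2)$ sits in a cyclic block of length $2n$ inside $\C^{4n-1}=\C^{2\cdot 2n-1}$ so that Corollary~\ref{cor:cryi} applies.
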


The proof of it is along the same steps as in \thmref{thm:phaseretrieval}.
The direct extension to sparse signals as in \cite{WJ12b} seems to be difficult since randomly chosen Fourier 
samples do not provide a sufficient measure of concentration property without further randomization.

\if0
Let us assume $x_{n-1}\not=0$, then  the impulse response
$\vh:=(h_0,\dots,h_{2n-2})^T=\vS_{2n-1}^{n-1}\Symo(\vx)=(\cc{x_{n-1}},\dots,\cc{x_{1}},x_0,x_1,\dots,x_{n-1})^T$ defines  a
\emph{linear-phase filter} $H(z)=\sum_{k=0}^{2n-2} h_k z^{-k}$ for $z\in \C$ since  $\cc{h_0}=h_{2n-2}\not=0$ and
\begin{align}
  \cc{h_k} = h_{2n-2 - k}\for k\in [2n-1].
\end{align}
The impulse response or filter is then called \emph{Hermitian} or \emph{conjugate symmetric} of order $2n-2$, see e.g.
\cite[Cha.2]{Vai93}. Hence, with \eqref{eq:sym1} we have\footnote{
Note, that $\vS_{2n-1}^{n-1}$ centers the impulse response $\vx$ such that it  becomes a causal FIR filter.} for $\vx\in\ko^n_0$
\begin{align}
  \Ao(\vx)= \Symo(\vx)\circledast \Symo(\vx) = \vS_{2n-1}^{n-1}\Symo(\vx)\circledast\vS_{2n-1}^{n-1}\Symo(\vx)=\vh\circledast\vh,
\end{align}
which is the circular auto-convolution of a linear-phase filter.
\fi

\if0
Moreover we have:
\begin{align}
  \vS_{\tn}^{n-1}\Symoz (\vx) = \vS_{\tn}^{n-1}\Symo \begin{pmatrix} \vx\\ \zero\end{pmatrix}
    =\vS_{\tn}^{n-1} \begin{pmatrix} \vx\\ \zero_{2n\!-\!2}\\ \cc{\vx^\circ_-} \end{pmatrix}
    =
    \begin{pmatrix}\cc{\vx^\circ_-} \\ \vx\\ \zero_{2n\!-\!2}  \end{pmatrix}= \begin{pmatrix}\vh\\\zero_{2n\!-\!2} \end{pmatrix}.
\end{align}
By \eqref{eq:sym1} we have therefore
\begin{align}
  \vS^{n-1}_{\tn}\Symoz(\vx)\circledast\vS_{\tn}^{n-1}\Symoz(\vx)
  =\Symoz(\vx)\circledast\Symoz(\vx)=:\Aoz (\vx). \label{eq:sym2}
\end{align}
\fi

\if0
\newpage

\section{To be Removed in Final}
\subsection{Random Matrices and Concentration}
\mysubsubsec{Concentration}

We call a map $\Phi:\Reals^{n}\rightarrow\Reals^m$ be $\delta$--JL for a finite point cloud $Q\subset\Reals^n$ (or $\Complex^n$)  with $K:=|Q|\leq N$ if: 
\begin{equation}
   |\lVert \Phi(x)\rVert^2 - \lVert x\rVert^2|\leq \delta\lVert x\rVert^2\quad\text{for all}\quad x\in Q-Q
\end{equation}
From the classical Lemma of Johnson--Lindenstrauss if follows that for each $Q$ and $\delta\in(0,1)$
such a Lipschitz map always exists
for $m=\Order(\log(|Q|)/\delta^2)$. Obviously, if $0\in Q$ then this implies that 
\begin{equation}
   |\lVert \Phi(x)\rVert^2 - \lVert x\rVert^2|\leq \delta\lVert x\rVert^2\quad\text{for all}\quad x\in Q
\end{equation}
According to \cite{Baraniuk2008} this can be guaranteed with high probability for
random matrix $\Phi$ if $\Phi x$ for given $x$
is concentrated around its mean $E(\lVert \Phi x\rVert^2)=\lVert x\rVert^2$:
\begin{equation}
   \Pr\{|\lVert \Phi x\rVert^2 - \lVert x\rVert^2|\geq \delta\lVert x\rVert^2\}\leq 2e^{-c_{mn}(\delta)}
   \label{eq:concproperty}
\end{equation}
where $c_{mn}(\delta)$ is known for several cases.
The JL--assertion follows then immediately from the union bound over  all $x$ in the difference set $Q-Q$.
Let $K=|Q-Q|\leq |Q|^2$. Then:
\begin{equation}
   \begin{split}
      \Pr\{\max_{x\in Q-Q}|\lVert \Phi x\rVert^2 - \lVert x\rVert^2|\geq \delta\lVert x\rVert^2\}
      &\leq 2|Q-Q|e^{-c_{mn}(\delta)}
      \leq 2e^{-[c_{mn}(\delta)-\log K]}
   \end{split}
\end{equation}

\mysubsubsec{Specific i.i.d. Matrices:}
For $\Phi$ being i.i.d. Gaussian, i.e. $\Phi_{ij}\sim \Normal(0,\frac{1}{m})$ one has:
\begin{equation}
   c_{mn}(\delta)=c_0(\delta)m=\left(\delta^2/4-\delta^3/6\right)m\geq \frac{\delta^2m}{12}
\end{equation}
The same holds for $\Phi_{ij}\sim \{+1,-1\}/\sqrt{m}$ (Bernoulli, uniform) and also for $\Phi_{ij}\sim \{+1,0,-1\}/\sqrt{m}$ with
probabilities $(\frac{1}{6},\frac{2}{3},\frac{1}{6})$. 
Thus, if for a given $c'>0$:
\begin{equation}
   c_{mn}(\delta)-2\log K\geq (\frac{1}{12}-\frac{\log K}{\delta^2m})\delta^2m\geq c_1\delta^2m>0
\end{equation}
such a random matrix $\Phi$ is $\delta$--JL with probability $\leq \exp(-c_1\delta^2m)$ if $m=\Order(\delta^{-2}\log K)$.

\mysubsubsec{More General Matrices with independent rows or columns:}
The same strategy follows for independent subgaussian rows with $E(\lVert \Phi x\rVert^2)=\lVert x\rVert^2$.
Write $\Phi x=\sum_k\langle \phi_k,x\rangle$ where
$\bar{\phi}_k$ is the $k$th row of $\Phi$. Then 
\begin{equation}
   \lVert \Phi x\rVert^2-\lVert x\rVert^2=\sum_k \left(|\langle \phi_k,x\rangle|^2 - \right)
\end{equation}
is a sum of independent, centered sub-exponential variables and \cite[Cor.17]{Vershynin:2010:csbook}

\mysubsubsec{RIP Matrices:}
In \cite{Krahmer2011} it was shown that RIP matrices $\Phi$, i.e. having $k$--RIP property:
\begin{equation}
   |\lVert \Phi x\rVert^2-\lVert x\rVert^2|\leq \delta_k\lVert x\rVert^2\quad\text{for all}\quad x\in\Sigma_k
\end{equation}
with RIP constant $\delta_k\leq \delta/4$, have the concentration property after random column sign changes. Thus,
let $D_\xi$ be a $n\times n$ random diagonal matrix with a Rademacher sequence on its diagonal. Then
for \emph{admissible} pairs $(m,n)$ such that $k\leq c_3\delta^2 m/\log(n/k)$:
\begin{equation}
   \Pr\{|\lVert \Phi D_\xi x\rVert^2 - \lVert x\rVert^2|\geq \delta\lVert x\rVert^2\}\leq 2e^{-c_4\delta^2/\log(n/k) m}
   \label{eq:conc:krahmerward}
\end{equation}
Thus, here $c_{mn}(\delta)=c_4\delta^2/\log(n/k) m\geq c_4/c_3 k$ and $\Phi D_\xi$ is $\delta$--JL if:
\begin{equation}
      c_{mn}(\delta)-2\log K
      \geq (\frac{c_4 k}{c_3}-\frac{\log K}{\delta^2m})\delta^2m\geq c_1\delta^2m>0
\end{equation}
Hence, $m=\Order(\delta^{-2}\log(K)/\log(k))$.

\mysubsubsec{RIP Constructions based on Coherence:}

\mysubsubsec{Universal Random Demodulator}
Let $W_{kl}=\exp(-i2\pi kl)/\sqrt{n}$ be the elements of $n\times n$ unitary DFT-matrix
and $P_\Omega$ be a projection onto $m$ coordinates.
Here we concentrate on constructions of the form:
\begin{equation}
   \Phi=P_\Omega W^* D_\xi WD_\eta=:P_{\Omega}\hat{D}_\xi D_\eta
\end{equation}
The matrix $P_\Omega\hat{D}_\xi$ is a $m\times n$ partial circulant matrix, i.e. representing
a convolution with $\hat{\xi}$.

\subsection{Relations to Matrix Manifolds} 
It is obvious that a \emph{bilinear} map $B:\Complex^{n_1}\times\Complex^{n_2}\rightarrow\Complex^n$ can always
expressed via lifting as a \emph{linear} map $B:\Complex^{n_1\times n_2}\rightarrow\Complex^n$ (we use here the same
symbol) on rank--one matrices, i.e.
\begin{equation}
   B(x,y)=B(x\otimes y)
\end{equation}
The set of  rank--one $n_1\times n_2$ matrices $x\otimes y$ is called a \emph{determinantal variety} (a special case if
an algebraic variety which is the zero--set of polynomial equations). It is a smooth manifold of dimension $n_1+n_2-1$
embedded in $\Complex^{n_1n_2}$ (for $\Reals^{n_1n_2}$ clear, but what about complex here?). See here the books
\cite{Lee03} and \cite{Absil09}. 

The following statements are collected from \cite{Chandrasekaran2010}.  The tangent space of $n_1\times n_2$ of
rank--$r$ matrices at point $z=USV^*$ (its singular value decomposition) is given as (real case and $n=n_1=n_2$:
\begin{equation}
   T_\rk(z)=\{UX^*+ YV^*\,|\, X,Y\in\Reals^{n\times r}\}
\end{equation}
having dimension $r(2n-r)$. We have $z\in T_\rk(z)$ and $T_\rk(z)$ can be viewed as a subspace of $\Reals^ {n\times n}$.
We define a certain incoherence measure:
\begin{equation}
   \mu_{\rk}(z)=\max_{w\in T_\rk(z)\,, \lVert w\rVert\leq 1}\lVert w\rVert_{\ell_\infty}
\end{equation}
where $\lVert w\rVert$ denotes operator norm.
On the other hand, the tangent space $T_{\ell_0}(z)$ of sparse matrices is:
\begin{equation}
   T_{\ell_0}(z)=\{ \tilde{z}\,|\,\supp(\tilde{z})\subseteq\supp(z)\}
\end{equation}
Again, $z\in T_{\ell_0}(z)$ and the dimension is $\dim(T_{\ell_0}(z))=\lVert z\rVert_{\ell_0}$. Furthermore:
\begin{equation}
   \mu_{\ell_0}(z)=\max_{w\in T_{\ell_0}(z)\,, \lVert w\rVert_{\ell_\infty}\leq 1}\lVert w\rVert
\end{equation}
The rank--sparsity uncertainty states that $\mu_{\rk}(z)\mu_{\ell_0}(z)\geq1$ (see here \cite{Chandrasekaran2010}).

\subsection{From Philipp: RE-Condition}
\newcommand{\RE}{\text{RE}}
First, we will relate our RNMP and RIP result for $B(X,Y)$ to the Restricted Eigenvalue (RE) condition developed in
by Rudelson and Zhou in \cite{RZ13}.  They could show the following theorem.
\begin{theorem}
    Let $\delta\in(0,1/5)$, $s\leq n$ be natural numbers and $k_0>0$. Let $Y\in\Reals^{n\times n}$ matrix such that
    $\RE(s,2k_0,Y)$ holds for $0<K(s,3k_0,A)<\infty$. Set
    \begin{equation}
      d=s +s \max_j \Norm{Y e_j}^2 \frac{16K^2(s,3k_0,Y)(3k_0)^2(3k_0+1)}{\delta^2}
    \end{equation}
    bla
    Let $\Phi \in \Reals^{n\times m}$ such that 
    \begin{equation}
      (1-\delta)\Norm{x}_2 \leq \Norm{\Phi x} \leq (1+\delta)\Norm{x} \text{for} x\in Y \Sigma_s
    \end{equation}
\end{theorem}
Setting $Y=C(y)$ with $y\in \Sigma_f$ we get 
\begin{align}
  d=s +s \frac{16K^2(s,3k_0,Y)(3k_0)^2(3k_0+1)}{\delta^2}
\end{align}
The RE constant $K(s,k,A)$ is defined as
%
\begin{definition}
  Let $A\in \Reals^{n\times n}$ and $s\leq p$ and $k>0$, then $A$ satisfies the $\RE(s,k,A)$ condition with parameter
  $K(s,k,A)$ if for any $x\in \Reals^n \setminus\{0\}$ it holds
  \begin{equation}
    \frac{1}{K(s,k,A)} := \min_{\substack{\Norm{x_{J^c}}_1 \leq k\Norm{x_J}_1\\
    x\in\Reals^n, J\in[n]_s}} \frac{\Norm{A x}}{\Norm{x_J}} >0.
  \end{equation}
\end{definition}
%
\begin{remark}
  Note, our RNMP lower bound for $Y$ with $\Norm{y}=1$ is given by
  \begin{equation}
     0<  \alpha(s,f,n)  \leq \min_{x\in \Sigma_s} \frac{\Norm{Y x}}{\Norm{x}}.
  \end{equation}
  this can be related as a lower bound to the $\rho_{min}(s,A)$ as in $(9)$ \cite{RZ13}
  \begin{equation}
     \alpha_{s,f} \leq  \sqrt{\rho_{min}(s,Y)}:= \min_{x\in \Sigma_s} \frac{\Norm{Y x}}{\Norm{x}}
  \end{equation}
It is completely unclear, how we can derive the RE constant from the RNMP lower bound.
\end{remark}

\fi


\begin{acknowledgement}
   The authors would like to thank the anonymous reviewers for their detailed and 
   valuable comments.
   We also thank Holger Boche, David Gross, Richard Kueng 
   and G\"otz Pfander for
   their support and many helpful discussions.
\end{acknowledgement}
%

\printbibliography




\end{document}